\definecolor {infocolor} {rgb} {0.6,0.6,0.6}
\definecolor {remarkcolor} {rgb} {0.1,0.7,0.2}
\definecolor {namecolor} {rgb} {0.2,0.1,0.7}
\newcommand{\eps}{\varepsilon}
\newcommand{\Eu}[1]{\ensuremath{\EuScript{#1}}}
\renewcommand{\b}[1]{\ensuremath{\mathbb{#1}}}
\newcommand{\R}{\ensuremath{\mathbb{R}}}
\newcommand{\diameter}{\ensuremath{\textsf{diam}}}
\newcommand{\seb}{\ensuremath{\textsf{seb}_2}}
\newcommand{\dwid}{\ensuremath{\textsf{dwid}}}
\newcommand{\sip}{\textsf{sip\xspace}}
\newcommand{\IP}[2]{\ensuremath{ \langle #1 , #2 \rangle}}
\newcommand{\wid}{\omega}
\newcommand{\etal}{et. al.}
\newcommand{\pset}{support\xspace}
\newcommand{\psets}{supports\xspace}
\newcommand{\PDIAM}{$\ensuremath{\textrm{PLANAR-DIAM}}$\xspace}
\newcommand{\twoSAT}{$\ensuremath{\textrm{\#2SAT}}$\xspace}
\newtheorem {theorem}{Theorem}[section]
\newtheorem {problem}[theorem]{Problem}
\newtheorem {lemma}[theorem]{Lemma}
\newtheorem {corollary}[theorem]{Corollary}
\title{Geometric Computations\\ on Indecisive and Uncertain Points 
}
\author{
  Allan J{\o}rgensen\thanks
  { formerly: MADALGO, Deptartment of Computer Science, University of Aarhus, Denmark.
    \texttt{jallan(at)madalgo.au.dk}
  }
  \and
  Maarten L\"offler\thanks
  {
    formerly: Computer Science Deptartment, University of California, Irvine, USA.
    \texttt{mloffler(at)uci.edu}
  }
  \and
  Jeff M. Phillips\thanks
  {
    School of Computing, University of Utah, USA.
    \texttt{jeffp(at)cs.utah.edu}
  }
}
\date{ }
\begin{document}

\maketitle

\begin{abstract}
We study computing geometric problems on \emph {uncertain} points. An uncertain point is a point that does not have a fixed location, but rather is described by a probability distribution.  When these probability distributions are restricted to a finite number of locations, the points are called \emph{indecisive} points.  In particular, we focus on geometric shape-fitting problems and on building compact distributions to describe how the solutions to these problems vary with respect to the uncertainty in the points.  Our main results are:
(1) a simple and efficient randomized approximation algorithm for calculating the distribution of any statistic on uncertain data sets;
(2) a polynomial, deterministic and exact algorithm for computing the distribution of answers for any LP-type problem on an indecisive point set; and
(3) the development of \emph{shape inclusion probability} (SIP) functions which captures the ambient distribution of shapes fit to uncertain or indecisive point sets and are admissible to the two algorithmic constructions.  
\end{abstract}

\section {Introduction}

In gathering data there is a trade-off between quantity and accuracy.  The drop in the price of hard drives and other storage costs has shifted this balance towards gathering enormous quantities of data, yet with noticeable and sometimes intentionally tolerated increased error rates.  However, often as a benefit from the large data sets, models are developed to describe the pattern of the data error.

Let us take as an example Light Detection and Ranging (LIDAR) data gathered for Geographic Information Systems (GIS)~\cite{LKC04}, specifically height values at millions of locations on a terrain.
Each data point $(x,y,z)$ has an $x$-value (longitude), a $y$-value (latitude), and a $z$-value (height).
This data set is gathered by a small plane flying over a terrain with a laser aimed at the ground measuring the distance from the plane to the ground.  Error can occur due to inaccurate estimation of the plane's altitude and position or artifacts on the ground distorting the laser's distance reading.
But these errors are well-studied and can be modeled by replacing each data point with a probability distribution of its actual position.  Greatly simplifying, we could represent each data point as a $3$-variate normal distribution centered at its recorded value; in practice, more detailed uncertainty models are built.

Similarly, large data sets are gathered and maintained for many other applications.
In robotic mapping~\cite{Thr02,EP04} error models are provided for data points gathered by laser range finders and other sources.
In data mining~\cite{AY08,AS00} original data (such as published medical data) are often perturbed by a known model to preserve anonymity.
In spatial databases~\cite{GS05,SC01,CKP03} large data sets may be summarized as probability distributions to store them more compactly.
Data sets gathered by crawling the web have many false positives, and allow for models of these rates.  
Sensor networks~\cite{DGMHH04} stream in large data sets collected by cheap and thus inaccurate sensors.
In protein structure determination~\cite{PYCDB06} every atom's position is imprecise due to inaccuracies in reconstruction techniques and the inherent flexibility in the protein.
In summary, there are many large data sets with modeled errors and this uncertainty should be dealt with explicitly.

\subsection {The Input: Geometric Error Models}

The input for a typical computational geometry problem is a set $P$ of $n$ points in $\R^2$, or more generally $\R^d$.  In this paper we consider extensions of this model where each point is also given a model of its uncertainty.  This model describes for each point a distribution or bounds on the point's location, if it exists at all.  

\begin{itemize}
\item 
Most generally, we describe these data as \emph{uncertain points} $\Eu{P} = \{P_1, P_2, \ldots, P_n\}$.  Here each point's location is described by a probability distribution $\mu_i$ (for instance by a Gaussian distribution).  This general model can be seen to encompass the forthcoming models, but is often not worked with directly because of the computational difficulties arisen from its generality.  For instance in tracking uncertain objects a particle filter uses a discrete set of locations to model uncertainty~\cite{MDFW00} while a Kalman filter restricts the uncertainty model to a Gaussian distribution~\cite{Kal60}.  

\item 
A more restrictive model we also study in this paper are \emph{indecisive points} where each point can take one of a finite number of locations.  To simplify the model (purely for making results easier to state) we let each point have exactly $k$ possible locations, forming the domain of a probability distribution.  That is each uncertain point $P_i$ is at one of $\{p_{i,1}, p_{i,2}, \ldots, p_{i,k}\}$.  Unless further specified, each location is equally likely with probability $1/k$, but we can also assign each location a weight $w_{i,j}$ as the probability that $P_i$ is at $p_{i,j}$ where $\sum_{j=1}^k w_{i,j} = 1$ for all $i$.  

\tweeplaatjes {ex-input} {ex-sample} {(a) An example input consisting of $n = 3$ sets of $k = 6$ points each. (b) One of the $6^3$ possible samples of $n = 3$ points.}

Indecisive points appear naturally in many applications. They play an important role in databases~\cite{DS04,ABSHNSW06,CM08,CG09,TCXNKP05,ACTY09,CLY09}, machine learning~\cite{BZ04}, and sensor networks~\cite{ZC04} where a limited number of probes from a certain data set are gathered, each potentially representing the true location of a data point.  Alternatively, data points may be obtained using imprecise measurements or are the result of inexact earlier computations.
However, the results with detailed algorithmic analysis generally focus on one-dimensional data; furthermore, they often only return the expected value or the most likely answer instead of calculating a full distribution.

\item 
An \emph{imprecise point} is one where its location is not known precisely, but it is restricted to a range.  In one-dimension these ranges are modeled as uncertainty intervals, but in 2 or higher dimensions they become geometric regions. 
An early model to quantify imprecision in geometric data, motivated by finite precision of coordinates, is {\it $\eps$-geometry}, introduced by Guibas \etal~\cite {gss-egbra-89}, where each point was only known to be somewhere within an $\eps$-radius ball of its guessed location. 
The simplicity of this model has provided many uses in geometry.
Guibas \etal~\cite {gss-cscah-93} define \emph {strongly convex} polygons: polygons that are guaranteed to stay convex, even when the vertices are perturbed by $\eps$. 
Bandyopadhyay and Snoeyink~\cite{bs-ads-04} compute the set of all potential simplices in $\b R^2$ and $\b R^3$ that could belong to the Delaunay triangulation.
Held and Mitchell~\cite {hm-ticpps-08} and L\"offler and Snoeyink~\cite {ls-dtip-08} study the problem of preprocessing a set of imprecise points under this model, so that when the true points are specified later some computation can be done faster.

A more involved model for imprecision can be obtained by not specifying a single $\eps$ for all the points, but allowing a different radius for each point, or even other shapes of imprecision regions. This allows for modeling imprecision that comes from different sources, independent imprecision in different dimensions of the input, etc. This extra freedom in modeling comes at the price of more involved algorithmic solutions, but still many results are available.
Nagai and Tokura~\cite {nt-teb-00} compute the union and intersection of all possible convex hulls to obtain bounds on any possible solution, as does Ostrovsky-Berman and Joskowicz~\cite {obj-ue-05} in a setting allowing some dependence between points.
Van Kreveld and L\"offler~\cite {kl-bgmips-06} study the problem of computing the smallest and largest possible values of several geometric extent measures, such as the diameter or the radius of the smallest enclosing ball, where the points are restricted to lie in given regions in the plane.
Kruger~\cite {k-bmips-08} extends some of these results to higher dimensions.

Although imprecise points do not traditionally have an associated probability distribution associated to them, we argue that they can still be considered a special case of our uncertain points, since we can impose e.g. a uniform distribution on the regions, and then ask question about the smallest or largest non-zero probability values of some function, which would correspond to bounds in the classical model.

\item 
A \emph{stochastic point} $p$ has a fixed location, but which only exists with a probability $\rho$.  These points arise naturally in many database scenarios~\cite{ABSHNSW06,CLY09} where gathered data has many false positives. Recently in geometry Kamousi, Chan, and Suri~\cite{KCS11a,KCS11b} considered geometric problems on stochastic points and geometric graphs with stochastic edges.  
These stochastic data sets can be interpreted as uncertain point sets as well by allowing the probability distribution governing uncertain points to have a certain probability of not existing, or rather the integral of the distribution is $\rho$ instead of always $1$.  

\end{itemize}

\subsection {The Output: Distributional Representations}

This paper studies how to compute distributions of statistics over uncertain data.  
These distributions can take several forms.  In the simplest case, a distribution of a single value has a one-dimensional domain.  
The technical definition yields a simpler exposition when the distribution is represented as a cumulative density function, which we refer to as a \emph{quantization}.  
This notion can be extended to a multi-dimensional cumulative density function (a \emph{$k$-variate quantization}) as we measure multiple variables simultaneously.  Finally, we also describe distributions over shapes defined on uncertain points (e.g. minimum enclosing ball).  As the domains of these shape distributions are a bit abstract and difficult to work with, we convey this information as a \emph{shape inclusion probability} or \emph{SIP}; for any point in the domain of the input point sets we describe the probability the point is contained in the shape.  

This model of uncertain data has been studied in the database community but for different types of problems on usually one-dimensional data, such as indexing~\cite{ACTY09,TCXNKP05,KMMH06}, ranking~\cite{CLY09}, nearest neighbors~\cite{CCMC08} and creating histograms~\cite{CG09}.

\subsection{Contributions}

For each type of distributional representation of output we study, the goal is a function from some domain to a range of $[0,1]$.  
For the general case of uncertain points, we provide simple and efficient, randomized approximation algorithms that results in a function that \emph{everywhere} has error at most $\eps$.  Each variation of the algorithm runs in $O((1/\eps^2)(\nu + \log (1/\delta)) T)$ time and produces an output of size $O((1/\eps^2)(\nu + \log (1/\delta))$ where $\nu$ describes the complexity of the output shape (i.e. VC-dimension), $\delta$ is the probability of failure, and $T$ is the time it takes to compute the geometric question on certain points.  These results are quite practical as experimental results demonstrate that the constant for the big-Oh notation is approximately $0.5$.  Furthermore, for one-dimensional output distributions (quantizations) the size can be reduced to $1/\eps$, and for $k$-dimensional distributions to $O((k/\eps) \log^4 (1/\eps))$.  
We also extend these approaches to allow for geometric approximations based on $\alpha$-kernels~\cite{AHV04,AHV07}.  

For the case of indecisive points, we provide deterministic and exact, polynomial-time algorithms for all LP-type problems with constant combinatorial dimension (e.g. minimum enclosing ball).  We also provide evidence that for problems outside this domain, deterministic exact algorithms are not available, in particular showing that diameter is \#P-hard despite having an output distribution of polynomial size.  
Finally, we consider deterministic algorithms for uncertain point sets with continuous distributions describing the location of each point.  We describe a non-trivial range space on these input distributions from which an $\eps$-sample creates a set of indecisive points, from which this algorithm can be performed to deterministically create an approximation to the output distribution.

\section {Preliminaries}\label{sec:prelim}

This section provides formal definitions for existing approximation schemes related to our work as well as the for our output distributions.  

\subsection{Approximation Schemes: $\eps$-Samples and $\alpha$-Kernels}

This work allows for three types of approximations.  The most natural in this setting is controlled by a parameter $\eps$ which denotes the error tolerance for probability.  That is an $\eps$-approximation for any function with range in $[0,1]$ measuring probability can return a value off by at most an additive $\eps$.  
The second type of error is a parameter $\delta$ which denotes the chance of failure of a randomized algorithm.  That is a $\delta$-approximate randomized algorithm will be correct with probability at least $1-\delta$.  
Finally, in specific contexts we allow a geometric error parameter $\alpha$.  In our context, an $\alpha$-approximate geometric algorithm can allow a relative $\alpha$-error in the width of any object.  This is explained more formally below.  
It should be noted that these three types of error cannot be combined into a single term, and each needs to be considered separately.  However, the $\eps$ and $\delta$ parameters have a well-defined trade-off.  

\paragraph{$\eps$-Samples.}
For a set $P$ let $\Eu{A}$ be a set of subsets of $P$.  In our context usually $P$ will be a point set and the subsets in $\Eu{A}$ could be induced by containment in a shape from some family of geometric shapes.
For example, 
 $\Eu{I}_+$ describes one-sided intervals of the form $(-\infty, t)$.
The pair $(P, \Eu{A})$ is called a \emph{range space}.  We say that $Q \subset P$ is an \emph{$\eps$-sample} of $(P,\Eu{A})$ if
\[
\forall_{R \in \Eu{A}} \left|\frac{\phi(R \cap Q)}{\phi(Q)} - \frac{\phi(R \cap P)}{\phi(P)}\right| \leq \eps,
\]
where $|\cdot|$ takes the absolute value and $\phi(\cdot)$ returns the measure of a point set.  In the discrete case $\phi(Q)$ returns the cardinality of $Q$.  We say $\Eu{A}$ \emph{shatters} a set $S$ if every subset of $S$ is equal to $R \cap S$ for some $R \in \Eu{A}$.  The cardinality of the largest discrete set $S \subseteq P$ that $\Eu{A}$ can shatter is the \emph{VC-dimension} of $(P,\Eu{A})$.

When $(P, \Eu{A})$ has constant VC-dimension $\nu$, we can create an $\eps$-sample $Q$ of $(P, \Eu{A})$, with probability $1-\delta$, by uniformly sampling $O((1/\eps^2)(\nu + \log (1/\delta) ))$ points from $P$~\cite{VC71,LLS01}.  There exist deterministic techniques to create $\eps$-samples~\cite{Mat91,CM96} of size $O(\nu (1/\eps^2) \log (1/\eps))$ in time $O(\nu^{3\nu} n ((1/\eps^2) \log (\nu/\eps))^\nu)$.
A recent result of Bansal~\cite{Ban10} (also see this simplification~\cite{LM12}) can slightly improve this bound to $O(1/\eps^{2-1/2v})$, following an older existence proof~\cite{MWW93}, in time polynomial in $n$ and $1/\eps$.  
When $P$ is a point set in $\R^d$ and the family of ranges $\Eu{R}_d$ is determined by inclusion in axis-aligned boxes, then an $\eps$-sample for $(P, \Eu{R}_d)$ of size $O((d/\eps) \log^{2d} (1/\eps))$ can be constructed in $O((n/\eps^3) \log^{6d} (1/\eps))$ time~\cite{Phi08}.

For a range space $(P, \Eu{A})$ the \emph{dual range space} is defined $(\Eu{A}, P^*)$ where $P^*$ is all subsets $\Eu{A}_p \subseteq \Eu{A}$ defined for an element $p \in P$ such that $\Eu{A}_p = \{A \in \Eu{A} \mid p \in A\}$.  If $(P,\Eu{A})$ has VC-dimension $\nu$, then $(\Eu{A}, P^*)$ has VC-dimension $\leq 2^{\nu+1}$.  Thus, if the VC-dimension of $(\Eu{A}, P^*)$ is constant, then the VC-dimension of $(P,\Eu{A})$ is also constant \cite{Mat99}.

When we have a distribution $\mu : \R^d \to \R^+$, such that $\int_{x \in \R} \mu(x) \; dx = 1$, we can think of this as the set $P$ of all points in $\R^d$, where the weight $w$ of a point $p \in \R^d$ is $\mu(p)$.  
To simplify notation, we write $(\mu, \Eu{A})$ as a range space where the ground set is this set $P = \R^d$ weighted by the distribution $\mu$.

\paragraph{$\alpha$-Kernels.}
Given a point set $P \in \R^d$ of size $n$ and a direction $u \in \b{S}^{d-1}$, let $P[u] = \arg\max_{p \in P} \IP{p}{u}$, where $\IP{\cdot}{\cdot}$ is the inner product operator.
Let $\wid(P,u) = \IP{P[u] - P[-u]}{u}$ describe the width of $P$ in direction $u$.
We say that $K \subseteq P$ is an \emph{$\alpha$-kernel} of $P$ if for all $u \in \b{S}^{d-1}$
\[
\wid(P,u) - \wid(K,u) \leq \alpha \cdot \wid(P,u).
\]
$\alpha$-kernels of size $O(1/\alpha^{(d-1)/2})$~\cite{AHV04} can be calculated in time $O(n + 1/\alpha^{d-3/2})$~\cite{Cha06,YAPV04}.  Computing many extent related problems such as diameter and smallest enclosing ball on $K$ approximates the problem on $P$~\cite{AHV04,AHV07,Cha06}.

\subsection{Problem Statement}  
Let $\mu_i : \R^d \to \R^+$ describe the probability distribution of an uncertain point $P_i$ where the integral $\int_{q \in \R^d} \mu_i(q) \; dq = 1$.
We say that a set $Q$ of $n$ points is a \emph {\pset} from $\Eu P$ if it contains exactly one point from each set $P_i$, that is, if $Q = \{q_1, q_2, \ldots, q_n\}$ with $q_i \in P_i$.
In this case we also write $Q \Subset \Eu P$.
Let $\mu_{\Eu{P}} : \R^d \times \R^d \times \ldots \times \R^d \to \R^+$ describe the distribution of supports $Q = \{q_1, q_2, \ldots, q_n\}$ under the joint probability over each $q_i \in P_i$.
For brevity we write the space $\R^d \times \ldots \times \R^d$ as $\R^{dn}$.
For this paper we will assume $\mu_{\Eu{P}}(q_1, q_2, \ldots, q_n) = \prod_{i=1}^n \mu_{p_i}(q_i)$, so the distribution for each point is independent, although  this restriction can be easily removed for all randomized algorithms.

\paragraph{Quantizations and their approximations.}
Let $f : \R^{dn} \to \R^k$ be a function on a fixed point set.  Examples include the radius of the minimum enclosing ball where $k=1$ and the width of the minimum enclosing axis-aligned rectangle along the $x$-axis and $y$-axis where $k=2$.
Define the ``dominates'' binary operator $\preceq$ so that $(p_1, \ldots, p_k) \preceq (v_1, \ldots, v_k)$ is true if for every coordinate $p_i \leq v_i$.
Let $\b{X}_f(v) = \{Q \in \R^{dn} \mid f(Q) \preceq v\}$.
For a query value $v$ define,
$
F_{\mu_P}(v) = \int_{Q \in \b{X}_f(v)} \mu_P(Q) \, dQ.
$
Then $F_{\mu_P}$ is the cumulative density function of the distribution of possible values that $f$ can take\footnote{For a function $f$ and a distribution of point sets $\mu_P$, we will always represent the cumulative density function of $f$ over $\mu_P$ by $F_{\mu_P}$.
}.  We call $F_{\mu_P}$ a \emph{quantization} of $f$ over $\mu_P$.  

\drieplaatjes {uq-true} {uq-points} {uq-approx}
{ \label{fig:1e-quants}
  (a) The true form of a monotonically increasing function from $\R \to \R$.
  (b) The $\eps$-quantization $R$ as a point set in $\R$.  
  (c) The inferred curve $h_R$ in $\R^2$.
}

Ideally, we would return the function $F_{\mu_P}$ so we could quickly answer any query exactly, however, for the most general case we consider, it is not clear how to calculate $F_{\mu_P}(v)$ exactly for even a single query value $v$.
Rather, we introduce a data structure, which we call an $\eps$-quantization, to answer any such query approximately and efficiently, illustrated in Figure \ref{fig:1e-quants} for $k=1$.  An \emph{$\eps$-quantization} is a point set $R \subset \R^k$ which induces a function $h_R$ where $h_R(v)$ describes the fraction of points in $R$ that $v$ dominates.  Let $R_v = \{r \in R \mid r \preceq v\}$.  Then $h_R(v) = |R_v|/|R|$.
For an isotonic (monotonically increasing in each coordinate) function $F_{\mu_P}$ and any value $v$, an $\eps$-quantization, $R$, guarantees that
$
|h_R(v) - F_{\mu_P}(v)| \leq \eps.
$
More generally (and, for brevity, usually only when $k>1$), we say $R$ is a $k$-variate $\eps$-quantization.  An example of a $2$-variate $\eps$-quantization is shown in Figure \ref{fig:ke-quants}.
The space required to store the data structure for $R$ is dependent only on $\eps$ and $k$, not on $|P|$ or $\mu_P$.

\vierplaatjes [height=100pt] {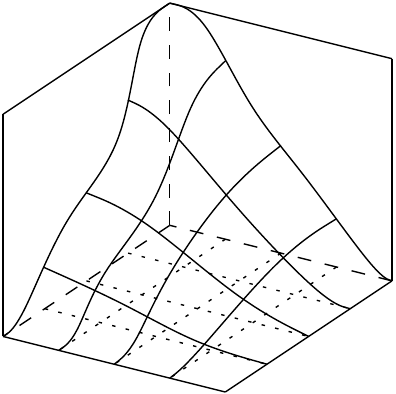} {mq-points} {mq-approx} {mq-both}
{ \label{fig:ke-quants}
  (a) The true form of a $xy$-monotone $2$-variate function.
  (b) The $\eps$-quantization $R$ as a point set in $\R^2$.
  (c) The inferred surface $h_R$ in $\R^3$.
  (d) Overlay of the two images.
}

\paragraph{$(\eps, \delta, \alpha)$-Kernels.}
Rather than compute a new data structure for each measure we are interested in, we can also compute a single data structure (a coreset) that allows us to answer many types of questions.
For an isotonic function $F_{\mu_P} : \R^+ \to [0,1]$, an \emph{$(\eps, \alpha)$-quantization} data structure $M$ describes a function $h_M : \R^+ \to [0,1]$ so for any $x \in \R^+$, there is an $x^\prime \in \R^+$ such that
(1) $|x-x^\prime| \leq \alpha x$ and
(2) $|h_M(x) - F_{\mu_P}(x^\prime)| \leq \eps$.
An \emph{$(\eps, \delta, \alpha)$-kernel} is a data structure that can produce an $(\eps, \alpha)$-quantization, with probability at least $1-\delta$, for $F_{\mu_P}$ where $f$ measures the width in any direction and whose size depends only on $\eps$, $\alpha$, and $\delta$.
The notion of $(\eps, \alpha)$-quantizations is generalizes to a $k$-variate version, as do $(\eps, \delta, \alpha)$-kernels. 

\paragraph{Shape inclusion probabilities.}
A \emph{summarizing shape} of a point set $P \subset \R^d$ is a Lebesgue-measureable subset of $\R^d$ that is determined by $P$.  I.e. given a class of shapes $\Eu{S}$, the summarizing shape $S(P) \in \Eu{S}$ is the shape that optimizes some aspect with respect to $P$.
Examples include the smallest enclosing ball and the minimum-volume axis-aligned bounding box.
For a family $\Eu{S}$ we can study the \emph{shape inclusion probability function} $s_{\mu_P} : \R^d \to [0,1]$ (or \sip\ function), where $s_{\mu_P}(q)$ describes the probability that a query point $q \in \R^d$ is included in the summarizing shape\footnote{For technical reasons, if there are (degenerately) multiple optimal summarizing shapes, we say each is equally likely to be the summarizing shape of the point set.}.
For the more general types of uncertain points, there does not seem to be a closed form for many of these functions.
In these cases we can calculate an $\eps$-\sip function $\hat{s} : \R^d \to [0,1]$ such that
$
\forall_{q \in \R^d} \left| s_{\mu_P}(q) - \hat{s}(q) \right| \leq \eps.
$
The space required to store an $\eps$-\sip function depends only on $\eps$ and the complexity of the summarizing shape.

\section{Randomized Algorithm for $\eps$-Quantizations}
\label{sec:rand-eQ}

We develop several algorithms with the following basic structure (as outlined in Algorithm \ref{alg:rand-draw}):
(1) sample one point from each distribution to get a random point set;
(2) construct the summarizing shape of the random point set;
(3) repeat the first two steps $O((1/\eps^2)(\nu + \log(1/\delta)))$ times and calculate a summary data structure.
This algorithm only assumes that we can draw a random point from $\mu_p$ for each $p \in P$ in constant time; if the time depends on some other parameters, the time complexity of the algorithms can be easily adjusted.

\begin{algorithm}[h!!t]
\caption{Approximate $\mu_P$ w.r.t. a family of shapes $\Eu{S}$ or function $f_{\Eu{S}}$
\label{alg:rand-draw}}
\begin{algorithmic}[1]
\FOR {$i = 1$ \textbf{to} $m = O((1/\eps^2) (\nu + \log (1/\delta)))$}
  \FOR {\textbf{all} $p_j \in P$}
    \STATE Sample $q_j$ from $\mu_{p_j}$.
  \ENDFOR
  \STATE Set $V_i = f_{\Eu{S}}(\{q_1, q_2, \ldots, q_n\})$.
\ENDFOR
\STATE Reduce or Simplify the set $\Eu{V} = \{ V_i\}_{i=1}^m$.
\end{algorithmic}
\end{algorithm}

\paragraph{Algorithm for $\eps$-quantizations.}
\label{sec:algQ}
For a function $f$ on a point set $P$ of size $n$, it takes $T_f(n)$ time to evaluate $f(P)$.
We construct an approximation to $F_{\mu_P}$ 
as follows.  First draw a sample point $q_j$ from each $\mu_{p_j}$ for $p_j \in P$, then evaluate $V_i = f(\{q_1, \ldots, q_n\})$.  The fraction of trials of this process that produces a value dominated by $v$ is the estimate of $F_{\mu_P}(v)$.
In the univariate case we can reduce the size of $\Eu{V}$ by returning $2/\eps$ evenly spaced points according to the sorted order.

\begin{theorem}
For a distribution $\mu_P$ of $n$ points, with success probability at least $1-\delta$,
there exists an $\eps$-quantization of size $O(1/\eps)$ for $F_{\mu_P}$, and it can be constructed in $O(T_f(n) (1/\eps^2) \log (1/\delta))$ time.
\label{thm:eq-main}
\end{theorem}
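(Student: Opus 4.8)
The plan is to show that the empirical cumulative density function of the sampled values $\Eu{V} = \{V_1,\ldots,V_m\}$ produced by Algorithm~\ref{alg:rand-draw}, after a final thinning step, is exactly the desired $\eps$-quantization. First observe that each iteration of the loop draws a \pset $Q \Subset \Eu{P}$ distributed precisely according to $\mu_P$: the $q_j$ are sampled independently from the $\mu_{p_j}$, so their joint law is $\prod_i \mu_{p_i} = \mu_P$. Hence $V_i = f(Q)$ is an independent draw from the one-dimensional distribution whose cumulative density function is, by definition, $F_{\mu_P}$. So $\Eu{V}$ is a set of $m$ i.i.d.\ samples from $F_{\mu_P}$.

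Next I would invoke $\eps$-sample theory for the range space $(F_{\mu_P}, \Eu{I}_+)$ of one-sided intervals $(-\infty,t)$. This range space has VC-dimension $\nu = 1$: a single point is shattered, but no pair is, since any one-sided interval containing the larger of two reals also contains the smaller. Therefore, for $m = O((1/\eps^2)(\nu + \log(1/\delta))) = O((1/\eps^2)\log(1/\delta))$, the set $\Eu{V}$ is an $(\eps/2)$-sample of $(F_{\mu_P}, \Eu{I}_+)$ with probability at least $1-\delta$. Unwinding the definition, this says $|h_{\Eu{V}}(v) - F_{\mu_P}(v)| \le \eps/2$ for every threshold $v$, where $h_{\Eu{V}}(v) = |\{r \in \Eu{V} \mid r \le v\}|/|\Eu{V}|$; that is, $\Eu{V}$ is already an $(\eps/2)$-quantization, though of size $m$ rather than $O(1/\eps)$.

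To bring the size down, I would sort $\Eu{V}$ and retain every $\lceil m\eps/4 \rceil$-th element, producing a set $R$ of size at most $4/\eps$. Between two consecutive retained elements lie at most an $\eps/4 \le \eps/2$ fraction of all of $\Eu{V}$, so $|h_R(v) - h_{\Eu{V}}(v)| \le \eps/2$ for all $v$; combining with the previous bound via the triangle inequality gives $|h_R(v) - F_{\mu_P}(v)| \le \eps$ for all $v$, so $R$ is an $\eps$-quantization of $F_{\mu_P}$ of size $O(1/\eps)$. For the running time, the loop runs $m = O((1/\eps^2)\log(1/\delta))$ times, each iteration costing $O(n)$ to sample the support plus $T_f(n)$ to evaluate $f$ (and we may assume $T_f(n) = \Omega(n)$); the final sort and extraction cost $O(m\log m)$, which is dominated. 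The total is $O(T_f(n)(1/\eps^2)\log(1/\delta))$, as claimed.

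I expect the only genuinely delicate point to be the correctness of the thinning step — verifying that evenly-spaced subsampling of the sorted list perturbs the empirical CDF by at most a prescribed additive amount uniformly in $v$ — together with the bookkeeping that splits the error budget as $\eps/2 + \eps/2$; everything else is a direct application of the $\eps$-sample bound recalled in Section~\ref{sec:prelim} with the observation that one-sided intervals have VC-dimension $1$.
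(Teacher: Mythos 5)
Your proposal is correct and follows essentially the same route as the paper's proof: view each $V_i = f(Q)$ as an i.i.d.\ sample from the distribution with CDF $F_{\mu_P}$, apply the standard $\eps$-sample bound for one-sided intervals (constant VC-dimension) with an $\eps/2$ budget, and then thin the sorted sample to $O(1/\eps)$ evenly spaced points at a cost of another $\eps/2$. The only cosmetic difference is that the paper phrases the first step via the density $g$ with $F_{\mu_P}(t)=\int_{-\infty}^{t} g(x)\,dx$ (citing Dvoretzky--Kiefer--Wolfowitz as an alternative), while you argue directly on the empirical CDF.
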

\begin{proof}
Because $F_{\mu_P} : \R \to [0,1]$ is an isotonic function, there exists another function $g: \R \to \R^+$ such that $F_{\mu_P}(t) = \int_{x = -\infty}^t g(x) \; dx$ where $\int_{x \in \R} g(x) \;dx = 1$.  Thus $g$ is a probability distribution of the values of $f$ given inputs drawn from $\mu_P$.
This implies that an $\eps$-sample of $(g, \Eu{I}_+)$ is an $\eps$-quantization of $F_{\mu_P}$, since both estimate within $\eps$ the fraction of points in any range of the form $(-\infty, x)$.  This last fact can also be seen through a result by Dvoretzky, Kiefer, and Wolfowitz~\cite{DKW56}.  

By drawing a random sample $q_i$ from each $\mu_{p_i}$ for $p_i \in P$, we are drawing a random point set $Q$ from $\mu_P$.  Thus $f(Q)$ is a random sample from $g$.  Hence, using the standard randomized construction for $\eps$-samples, $O((1/\eps^2) \log (1/\delta))$ such samples will generate an $(\eps/2)$-sample for $g$, and hence an $(\eps/2)$-quantization for $F_{\mu_P}$, with probability at least $1-\delta$.

Since in an $(\eps/2)$-quantization $R$ every value $h_R(v)$ is different from $F_{\mu_P}(v)$ by at most $\eps/2$, then we can take an $(\eps/2)$-quantization of the function described by $h_R(\cdot)$ and still have an $\eps$-quantization of $F_{\mu_P}$.
Thus, we can reduce this to an $\eps$-quantization of size $O(1/\eps)$ by taking a subset of $2/\eps$ points spaced evenly according to their sorted order.
\end{proof}

\paragraph{Multivariate $\eps$-quantizations.}
We can construct $k$-variate $\eps$-quantizations similarly using the same basic procedure as in Algorithm \ref{alg:rand-draw}.
The output $V_i$ of $f$ is now $k$-variate and thus results in a $k$-dimensional point.
As a result, the reduction of the final size of the point set requires more advanced procedures.

\begin{theorem}
Given a distribution $\mu_P$ of $n$ points, with success probability at least $1-\delta$, we can construct a $k$-variate $\eps$-quantization for $F_{\mu_P}$
of size $O((k/\eps^2)(k +  \log(1/\delta)))$ and in time $O(T_f(n) (1/\eps^2)(k + \log(1/ \delta)))$. 
\label{thm:k-var-q}
\end{theorem}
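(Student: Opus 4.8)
The plan is to run exactly the scheme of Theorem~\ref{thm:eq-main}, only with the one-dimensional interval range space replaced by the $k$-dimensional range space of dominance orthants. First I would push $\mu_P$ forward through $f$: since $F_{\mu_P} : \R^k \to [0,1]$ is isotonic, it is the cumulative distribution of a density $g : \R^k \to \R^+$ with $\int_{\R^k} g = 1$, namely the distribution of $f(Q)$ when a support $Q \Subset \Eu{P}$ is drawn according to $\mu_P$. Writing $\Eu{O}_k = \{\,\{x \in \R^k : x \preceq v\} : v \in \R^k\,\}$ for the family of lower orthants, a point set $R \subset \R^k$ is a $k$-variate $\eps$-quantization of $F_{\mu_P}$ exactly when it is an $\eps$-sample of $(g, \Eu{O}_k)$, since $h_R(v)$ and $F_{\mu_P}(v)$ are precisely the empirical fraction and the $g$-mass inside the orthant $\{x \preceq v\} = \b{X}_f(v) \cap \R^k$.

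Next I would bound the VC-dimension of $\Eu{O}_k$. Each orthant is an intersection of the $k$ coordinate half-spaces $\{x : x_j \le v_j\}$, and $k$ points placed on the positive coordinate axes are shattered while $k+1$ cannot be (a ``betweenness'' obstruction in the dominance order), so $\nu := \mathrm{VC}(g,\Eu{O}_k) = \Theta(k)$. Then I run Algorithm~\ref{alg:rand-draw}: sampling $q_j \sim \mu_{p_j}$ for each $p_j \in P$ yields a support $Q \Subset \Eu{P}$ distributed as $\mu_P$, hence $V_i = f(Q)$ is an independent draw from $g$, produced in time $T_f(n) + O(n)$. By the standard randomized $\eps$-sample bound~\cite{VC71,LLS01}, taking $m = O((1/\eps^2)(\nu + \log(1/\delta)))$ such draws gives, with probability at least $1-\delta$, an $(\eps/2)$-sample of $(g,\Eu{O}_k)$ --- i.e.\ an $(\eps/2)$-quantization --- in total time $O(T_f(n)\,(1/\eps^2)(k+\log(1/\delta)))$, which already matches the time bound, with $m$ comfortably inside the claimed $O((k/\eps^2)(k+\log(1/\delta)))$ size.

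Finally, for the size: the raw sample set $\Eu{V}=\{V_i\}_{i=1}^m$ is already an $(\eps/2)$-quantization of the claimed cardinality, so it can be returned as is; but if a cleaner $\eps$-dependence is wanted, the univariate trick of keeping ``$2/\eps$ evenly spaced points'' no longer works, and I would instead view $(\Eu{V},\Eu{O}_k)$ as a finite range space of VC-dimension $\Theta(k)$ and extract a subset $R \subseteq \Eu{V}$ that is an $(\eps/2)$-sample of it by a deterministic $\eps$-sample construction~\cite{Mat91,CM96} (or by the axis-parallel-box construction of~\cite{Phi08} applied to a box relaxation of $\Eu{O}_k$); composing the two $(\eps/2)$-guarantees exactly as in the proof of Theorem~\ref{thm:eq-main} makes $R$ a $k$-variate $\eps$-quantization of $F_{\mu_P}$. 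The only genuinely new ingredient over the one-dimensional case --- and the step I expect to need the most care --- is identifying the right range space $\Eu{O}_k$, nailing its VC-dimension, and checking that the two-stage $(\eps/2)+(\eps/2)$ approximation composes cleanly for orthant ranges rather than for intervals.
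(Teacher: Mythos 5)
Your proposal is correct and matches the paper's own argument: the paper likewise pushes $\mu_P$ forward through $f$ to a density $g$ on $\R^k$, observes that dominance orthants (its family $\Eu{R}_+$) have VC-dimension $k$, and notes that $m = O((1/\eps^2)(k+\log(1/\delta)))$ evaluations $f(Q)$ on random supports form, with probability $1-\delta$, an $\eps$-sample of $(g,\Eu{R}_+)$ and hence a $k$-variate $\eps$-quantization, with the stated size absorbing the $O(k)$ storage per sample point. Your additional $(\eps/2)+(\eps/2)$ reduction step is unnecessary for this theorem (the paper defers size reduction to the remark following it) but does no harm.
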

\begin{proof}
Let $\Eu{R}_+$ describe the family of ranges where a range $A_p = \{q \in \R^k \mid q \preceq p\}$.
In the $k$-variate case there exists a function $g : \R^k \to \R^+$ such that $F_{\mu_P}(v) = \int_{x \preceq v} g(x) \; dx$ where $\int_{x \in \R^k} g(x) \; dx = 1$.  Thus $g$ describes the probability distribution of the values of $f$, given inputs drawn randomly from $\mu_P$.  Hence a random point set $Q$ from $\mu_P$, evaluated as $f(Q)$, is still a random sample from the $k$-variate distribution described by $g$.  Thus, with probability at least $1-\delta$, a set of $O((1/\eps^2)(k + \log (1/\delta)))$ such samples is an $\eps$-sample of $(g,\Eu{R}_+)$, which has VC-dimension $k$, and the samples are also a $k$-variate $\eps$-quantization of $F_{\mu_P}$.  Again, this specific VC-dimension sampling result can also be achieved through a result of Kiefer and Wolfowitz~\cite{KW58}.  
\end{proof}

We can then reduce the size of the $\eps$-quantization $R$ to $O((k^2/\eps) \log^{2k} (1/\eps))$ in $O(|R| (k/\eps^3) \log^{6k} (1/\eps))$ time \cite{Phi08} or to $O((k^2/\eps^2) \log (1/\eps))$ in $O(|R|(k^{3k}/\eps^{2k}) \cdot \log^k(k/\eps))$ time \cite{CM96}, since the VC-dimension is $k$ and each data point requires $O(k)$ storage.

Also on $k$-variate statistics, we can query the resulting $k$-dimensional distribution using other shapes with bounded VC-dimension $\nu$, and if the sample size is $m = O((1/\eps^2)(\nu + \log(1/\delta)))$, then all queries have at most $\eps$-error with probability at least $1-\delta$.  In contrast to the two above results, this statement seems to require the VC-dimension view, as opposed to appealing to the Kiefer-Wolfowitz line of work~\cite{DKW56,KW58}.

\subsection{$(\eps, \delta, \alpha)$-Kernels}
\label{sec:ea-kernel}

The above construction works for a fixed family of summarizing shapes.  In this section, we show how to build a single data structure, an $(\eps, \delta, \alpha)$-kernel, for a distribution $\mu_P$ in $\R^{dn}$ that can be used to construct $(\eps, \alpha)$-quantizations for several families of summarizing shapes. 
This added generality does come at an increased cost in construction.
In particular, an $(\eps, \delta, \alpha)$-kernel of $\mu_P$ is a data structure such that in any query direction $u \in \b{S}^{d-1}$, with probability at least $1-\delta$, we can create an $(\eps, \alpha)$-quantization for the cumulative density function of $\wid(\cdot, u)$, the width in direction $u$.

We follow the randomized framework described above as follows.
The desired $(\eps, \delta, \alpha)$-kernel $\Eu{K}$ consists of a set of $m = O((1/\eps^2) \log (1/\delta))$ $(\alpha/2)$-kernels, $\{K_1, K_2,$ $\ldots, K_m\}$, where each $K_j$ is an $(\alpha/2)$-kernel of a point set $Q_j$ drawn randomly from $\mu_P$.  Given $\Eu{K}$, with probability at least $1-\delta$, we can create an $(\eps, \alpha)$-quantization for the cumulative density function of width over $\mu_P$ in any direction $u \in \b{S}^{d-1}$.
Specifically, let $M = \{\wid(K_j, u) \}_{j=1}^m$.

\begin{lemma}
With probability at least $1-\delta$,
$M$ is an $(\eps, \alpha)$-quantization for the cumulative density function of the width of $\mu_P$ in direction $u$.
\end{lemma}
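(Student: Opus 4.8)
The plan is to combine two approximation facts that are already in place: the randomized $\eps$-sample bound (applied to the one-dimensional distribution of widths in a \emph{fixed} direction $u$) and the defining guarantee of $(\alpha/2)$-kernels (applied pointwise, i.e. for each sampled point set $Q_j$). The target is to show that $M = \{\wid(K_j,u)\}_{j=1}^m$, together with its induced step function $h_M$, satisfies the two-part $(\eps,\alpha)$-quantization condition for $F_{\mu_P}$ restricted to the measure $f(\cdot) = \wid(\cdot,u)$.

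First I would fix the query direction $u \in \b{S}^{d-1}$. Let $g_u$ be the probability density on $\R^+$ of the random variable $\wid(Q,u)$ where $Q$ is drawn from $\mu_P$; as in the proofs of Theorems \ref{thm:eq-main} and \ref{thm:k-var-q}, $F_{\mu_P}$ for this measure is the cdf $\int_0^t g_u$. Each $Q_j$ is an independent draw from $\mu_P$, so each $\wid(Q_j,u)$ is an independent sample from $g_u$. By the standard randomized $\eps$-sample construction for one-sided intervals $\Eu{I}_+$ (VC-dimension $1$), with $m = O((1/\eps^2)\log(1/\delta))$ the multiset $\{\wid(Q_j,u)\}_{j=1}^m$ is an $\eps$-sample of $(g_u,\Eu{I}_+)$ with probability at least $1-\delta$; equivalently, the step function it induces is within $\eps$ of $F_{\mu_P}$ everywhere. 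That handles the probabilistic part and the $\eps$-error, but for the \emph{true} widths $\wid(Q_j,u)$, not the kernel widths $\wid(K_j,u)$ that we actually store.

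Second, I would bring in the $\alpha$ slack to pass from $\wid(Q_j,u)$ to $\wid(K_j,u)$. Since $K_j$ is an $(\alpha/2)$-kernel of $Q_j$, we have $\wid(K_j,u) \le \wid(Q_j,u)$ and $\wid(Q_j,u) - \wid(K_j,u) \le (\alpha/2)\,\wid(Q_j,u)$, so the two widths differ by a relative factor within $\alpha/2$ (also: $\wid(Q_j,u) \le \wid(K_j,u)/(1-\alpha/2) \le (1+\alpha)\wid(K_j,u)$ for $\alpha \le 1$). Hence for any query value $x$, the number of $j$ with $\wid(K_j,u) \le x$ equals the number with $\wid(Q_j,u) \le x'$ for some $x'$ with $|x - x'| \le \alpha x$ — more carefully, $h_M(x)$ lies between the true-width step function evaluated at $x/(1+\alpha)$ and at $x$ itself, so there is an $x'$ within relative $\alpha$ of $x$ at which $h_M(x)$ matches the true-width step function. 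Composing this with the $\eps$-sample bound from the previous paragraph gives: there is $x'$ with $|x-x'|\le\alpha x$ and $|h_M(x) - F_{\mu_P}(x')| \le \eps$, which is exactly the $(\eps,\alpha)$-quantization definition. The main obstacle — really the only delicate point — is bookkeeping the direction of the inequalities when converting the relative width error into the additive-in-$x$ "$|x-x'|\le\alpha x$" form without double-counting the slack, i.e. making sure a single $x'$ works simultaneously for the kernel-vs-true-width shift and that the leftover error stays bounded by $\eps$ rather than $\eps$ plus something; choosing $(\alpha/2)$-kernels (rather than $\alpha$-kernels) and being careful that $h_M$ is a monotone step function is what makes this go through cleanly.
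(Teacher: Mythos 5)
Your proof is correct and follows essentially the same route as the paper's: the true widths $\wid(Q_j,u)$ of the sampled point sets form an $\eps$-quantization of $F_{\mu_P}$ with probability at least $1-\delta$, and the $(\alpha/2)$-kernel guarantee together with monotonicity of the two step functions lets the kernel widths be absorbed into the relative shift $|x-x'|\le \alpha x$ (your constant bookkeeping is in fact tighter than the paper's, which loosely uses $2\alpha$). One small slip to fix: since $K_j\subseteq Q_j$ the kernel widths are never larger than the true widths, so $h_M(x)$ is sandwiched between the true-width step function evaluated at $x$ and at $(1+\alpha)x$ (not at $x/(1+\alpha)$ and at $x$), i.e.\ the matching point $x'$ must be sought in $[x,(1+\alpha)x]$; this does not harm the argument, as the definition of an $(\eps,\alpha)$-quantization allows $x'$ on either side of $x$.
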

\begin{proof}
The width $\wid(Q_j, u)$ of a random point set $Q_j$ drawn from $\mu_P$ is a random sample from the distribution over widths of $\mu_P$ in direction $u$.  Thus, with probability at least $1-\delta$, $m$ such random samples would create an $\eps$-quantization.  Using the width of the $\alpha$-kernels $K_j$ instead of $Q_j$ induces an error on each random sample of at most $2\alpha \cdot \wid(Q_j, u)$.
Then for a query width $w$, say there are $\gamma m$ point sets $Q_j$ that have width at most $w$ and $\gamma^\prime m$ $\alpha$-kernels $K_j$ with width at most $w$; see Figure \ref{fig:ea-quant}.
Note that $\gamma^\prime>\gamma$.
Let $\hat{w} = w - 2\alpha w$. For each point set $Q_j$ that has width greater than $w$ but the corresponding $\alpha$-kernel $K_j$ has width at most $w$,
it follows that $K_j$ has width greater than $\hat{w}$.
Thus the number of $\alpha$-kernels $K_j$ that have width at most $\hat{w}$ is at most $\gamma m$, and thus there is a width $w^\prime$ between $w$ and $\hat{w}$ such that the number of $\alpha$-kernels at most $w^\prime$ is exactly $\gamma m$.
\end{proof}

\begin{figure}[htb!]
\begin{center}
\includegraphics{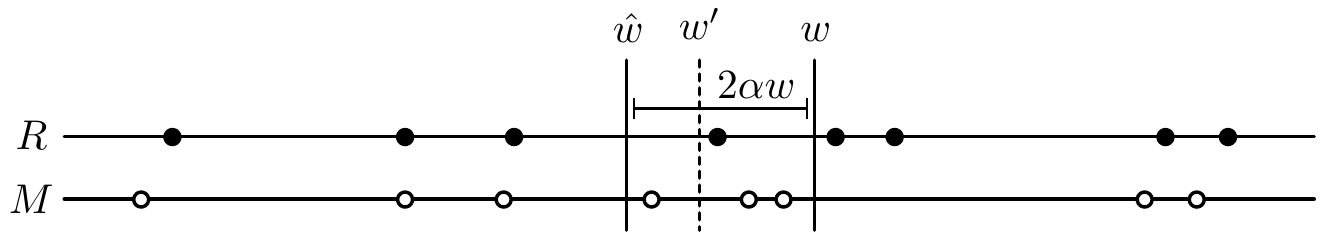}
\end{center}
\caption[Error bounds in $(\eps, \alpha)$-quantization]{\label{fig:ea-quant} $(\eps,\alpha)$-quantization $M$ (white circles) and $\eps$-quantization $R$ (black circles) given a query width $w$.}
\end{figure}

Since each $K_j$ can be computed in $O(n + 1/\alpha^{d-3/2})$ time, we obtain:

\begin{theorem}
We can construct an $(\eps, \delta, \alpha)$-kernel for $\mu_P$ on $n$ points in $\R^d$ of size $O((1/\alpha^{(d-1)/2}) (1/\eps^2) \cdot \log (1/\delta))$ in $O((n + 1/\alpha^{d-3/2}) \cdot (1/\eps^2) \log (1/\delta))$ time.
\end{theorem}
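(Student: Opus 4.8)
The plan is to assemble the randomized template of Algorithm~\ref{alg:rand-draw} with the near-linear-time $\alpha$-kernel construction recalled in the preliminaries, and then merely read off the size and running-time bounds; correctness has already been isolated in the Lemma immediately above, so the remaining work is essentially bookkeeping.

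First I would fix $m = O((1/\eps^2)\log(1/\delta))$. It is worth spelling out why the additive VC-term disappears: the relevant range space is $(g,\Eu{I}_+)$, where $g$ is the one-dimensional distribution of $\wid(\cdot,u)$ for a fixed direction $u$, and one-sided intervals have VC-dimension the constant $1$, so the generic bound $O((1/\eps^2)(\nu+\log(1/\delta)))$ collapses to the stated $m$ --- the same $m$ already used to size $\Eu{K}$ above. I would then draw $m$ independent \psets{} $Q_1,\ldots,Q_m \Subset \Eu{P}$ (each draw costs $O(n)$ under the standing assumption that sampling from each $\mu_{p_j}$ is $O(1)$), run the $\alpha$-kernel algorithm of~\cite{Cha06,YAPV04} on each $Q_j$ to obtain an $(\alpha/2)$-kernel $K_j$, and output $\Eu{K} = \{K_1,\ldots,K_m\}$; the ``reduce/simplify'' line of Algorithm~\ref{alg:rand-draw} is the identity here.

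For the two bounds I would just sum over $j$. Each $K_j$ has $O(1/(\alpha/2)^{(d-1)/2}) = O(1/\alpha^{(d-1)/2})$ points (the factor $2^{(d-1)/2}$ is absorbed for constant $d$) stored with $O(1)$ coordinates each, giving total size $O((1/\alpha^{(d-1)/2})(1/\eps^2)\log(1/\delta))$; each iteration costs $O(n)$ to sample plus $O(n + 1/(\alpha/2)^{d-3/2}) = O(n + 1/\alpha^{d-3/2})$ to build $K_j$, giving total time $O((n+1/\alpha^{d-3/2})(1/\eps^2)\log(1/\delta))$. For correctness, given a query $u \in \b{S}^{d-1}$ I would set $M = \{\wid(K_j,u)\}_{j=1}^m$ and invoke the preceding Lemma with $\alpha/2$ in place of $\alpha$ (so that the per-sample width error is at most $\alpha\cdot\wid(Q_j,u)$): with probability at least $1-\delta$, $M$ is an $(\eps,\alpha)$-quantization of the cumulative density function of $\wid(\cdot,u)$ over $\mu_P$, which is precisely the definition of $\Eu{K}$ being an $(\eps,\delta,\alpha)$-kernel.

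I do not expect a genuine obstacle here. The one point to be careful about --- and to state explicitly rather than brush past --- is that the $1-\delta$ guarantee is \emph{per} query direction, not simultaneous over all of $\b{S}^{d-1}$: a uniform-in-$u$ statement would require an $\eps$-sample for a range space on $\R^{dn}$ whose combinatorial complexity grows with $n$, destroying the $n$-independence of $m$. Since the definition of an $(\eps,\delta,\alpha)$-kernel asks only for the per-direction guarantee, this is not a gap, but it is the only place where one could mistakenly claim too much.
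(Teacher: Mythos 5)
Your proposal is correct and follows essentially the same route as the paper: draw $m = O((1/\eps^2)\log(1/\delta))$ supports from $\mu_P$, store an $(\alpha/2)$-kernel of each via the $O(n+1/\alpha^{d-3/2})$-time construction, and let the preceding lemma supply the per-direction $(\eps,\alpha)$-quantization guarantee, with the size and time bounds obtained by summing over the $m$ iterations. Your added remarks (why the VC term collapses for one-sided intervals, and that the $1-\delta$ guarantee is per query direction rather than uniform over $\b{S}^{d-1}$) are consistent with the paper's definitions and do not change the argument.
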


The notion of $(\eps, \alpha)$-quantizations and $(\eps, \delta, \alpha)$-kernels can be extended to $k$-dimensional queries or for a series of up to $k$ queries which all have approximation guarantees with probability $1-\delta$.

\paragraph{Other coresets.}
In a similar fashion, coresets of a point set distribution $\mu_P$ can be formed using coresets for other problems on discrete point sets.  For instance, sample $m = O((1/\eps^2) \log (1/\delta))$ points sets $\{P_1, \ldots, P_m\}$ each from $\mu_P$ and then store $\alpha$-samples $\{Q_1 \subseteq P_1, \ldots, Q_m \subseteq P_m\}$ of each.
When we use random sampling in the second set, then not all distributions $\mu_{p_i}$ need to be sampled for each $P_j$ in the first round.  
This results in an $(\eps, \delta, \alpha)$-sample of $\mu_P$, and can, for example, be used to construct (with probability $1-\delta$) an $(\eps,\alpha)$-quantization for the fraction of points expected to fall in a query disk.
Similar constructions can be done for other coresets, such as $\eps$-nets~\cite{HW87}, $k$-center~\cite{APV02}, or smallest enclosing ball~\cite{BC03}.

\subsection{Measuring the Error}

We have established asymptotic bounds of $m = O((1/\eps^2)(\nu + \log (1/\delta))$ random samples for constructing $\eps$-quantizations.  Now we empirically demonstrate that the constant hidden by the big-O notation is approximately $0.5$, indicating that these algorithms are indeed quite practical.

As a data set, we consider a set of $n = 50$ sample points in $\b{R}^3$ chosen randomly from the boundary of a cylinder piece of length $10$ and radius $1$.  We let each point represent the center of 3-variate Gaussian distribution with standard deviation $2$ to represent the probability distribution of an uncertain point.  This set of distributions describes an uncertain point set $\mu_P : \b{R}^{3n} \to \b{R}^+$.

We want to estimate three statistics on $\mu_P$:
$\dwid$, the width of the points set in a direction that makes an angle of $75^\circ$ with the cylinder axis;
$\diameter$, the diameter of the point set;
and $\seb$, the radius of the smallest enclosing ball (using code from Bernd G\"{a}rtner~\cite{Gar99}).
We can create $\eps$-quantizations with $m$ samples from $\mu_P$, where the value of $m$ is from the set $\{16, 64, 256, 1024, 4096\}$.

We would like to evaluate the $\eps$-quantizations versus the ground truth function $F_{\mu_P}$; however, it is not clear how to evaluate $F_{\mu_P}$.  Instead, we create another $\eps$-quantization $Q$ with $\eta = 100 000$ samples from $\mu_P$, and treat this as if it were the ground truth.
To evaluate each sample $\eps$-quantization $R$ versus $Q$ we find the maximum deviation (i.e. $d_\infty(R,Q) = \max_{q \in \b{R}} |h_R(q) - h_Q(q)|$) with $h$ defined with respect to $\diameter$ or $\dwid$.  This can be done by for each value $r \in R$ evaluating $|h_R(r) - h_Q(r)|$ and $|(h_R(r) - 1/|R|) - h_Q(r)|$ and returning the maximum of both values over all $r \in R$.  

Given a fixed ``ground truth'' quantization $Q$ we repeat this process for $\tau= 500$ trials of $R$, each returning a $d_\infty(R,Q)$ value.  The set of these $\tau$ maximum deviations values results in another quantization $S$ for each of $\diameter$ and $\dwid$, plotted in Figure \ref{fig:maxerr-dwr}.
Intuitively, the maximum deviation quantization $S$ describes the sample probability that $d_\infty(R,Q)$ will be less than some query value.

\begin{figure}[tbh]
\begin{center}
\includegraphics[width=\linewidth]{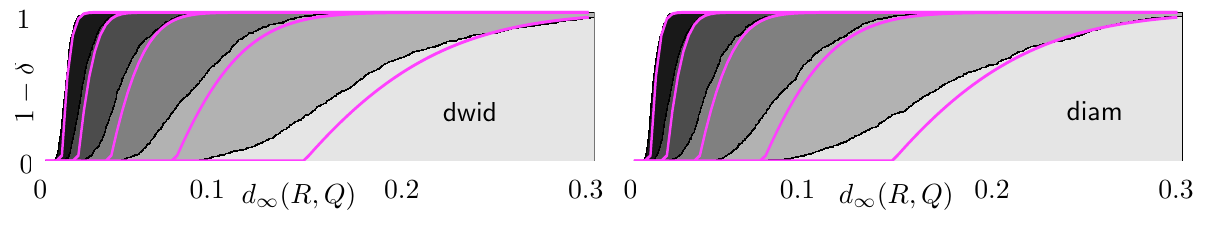}
\end{center}
\caption{\label{fig:maxerr-dwr}
Shows quantizations of $\tau=500$ trials for $d_\infty(R,Q)$ where $Q$ and $R$ measure $\dwid$ and $\diameter$.  The size of each $R$ is $m = \{16, 64, 256, 1024, 4096\}$ (from right to left) and the ``ground truth'' quantization $Q$ has size $\eta = 100 000$.  Smooth, thick curves are $1-\delta = 1-\exp(-2 m \eps^2 + 1)$ where $\eps = d_\infty(R,Q)$.  }
\end{figure}

Note that the maximum deviation quantizations $S$ are similar for both statistics (and others we tried), and thus we can use these plots to estimate $1-\delta$, the sample probability that $d_\infty(R,Q) \leq \eps$, given a value $m$.  We can fit this function as approximately $1-\delta = 1-\exp(-m \eps^2 / C + \nu)$ with $C=0.5$ and $\nu=1.0$.  Thus solving for $m$ in terms of $\eps$, $\nu$, and $\delta$ reveals: $m = C (1/\eps^2) (\nu + \log (1/\delta))$.
This indicates the big-O notation for the asymptotic bound of $O((1/\eps^2) (\nu + \log (1/\delta))$ \cite{LLS01} for $\eps$-samples only hides a constant of approximately $0.5$.

We also ran these experiments to $k$-variate quantizations by considering the width in $k$ different directions.  As expected, the quantizations for maximum deviation can be fit with an equation $1-\delta = 1-\exp(-m\eps^2/C + k)$ with $C=0.5$, so $m \leq C (1/\eps^2)(k + \log 1/\delta)$.  For $k > 2$, this bound for $m$ becomes too conservative; even fewer samples were needed.

\section {Deterministic Computations on Indecisive Point Sets}
\label {sec:pointsetsets}
  
In this section, we take as input a set of $n$ indecisive points, and describe deterministic exact algorithms for creating quantizations of classes of functions on this input.  We characterize problems when these deterministic algorithms can or can not be made efficient.  

\subsection {Polynomial Time Algorithms}
\label {sec:pta}

We are interested in the distribution of the value $f(Q)$ for each \pset $Q \Subset \Eu P$.  Since there are $k^n$ possible \psets, in general we cannot hope to do anything faster than that without making additional assumptions about $f$.
Define $\tilde f (\Eu P, r)$ as the fraction (measured by weight) of \psets of $\Eu P$ for which $f$ gives a value smaller than or equal to $r$.
In this version, for simplicity, we assume general position and that $k^n$ can be described by $O(1)$ words, (handled otherwise in Appendix \ref{sec:bignum}).  
First, we will let $f (Q)$ denote the radius of the smallest enclosing disk of $Q$ in the plane, and show how to solve the decision problem in polynomial time in that case. We then show how to generalize the ideas to other classes of measures.

\paragraph{Smallest enclosing disk.}
Consider the problem where $f$ measures the radius of the smallest enclosing disk of a \pset and let all weights be uniform so $w(q_{i,j}) = 1$ for all $i$ and $j$.  
Evaluating $\tilde f(\Eu P, r)$ in time polynomial in $n$ and $k$ is not completely trivial since there are $k^n$ possible \psets.  However, we can make use of the fact that each smallest enclosing disk is in fact defined by a set of at most $3$ points that lie on the boundary of the disk. 
For each \pset $Q \Subset \Eu P$ we define $B_Q \subseteq Q$ to be this set of at most $3$ points, which we call the \emph {basis} for $Q$.
Bases have the property that $f(Q) = f(B_Q)$.

Now, to avoid having to test an exponential number of \psets, we define a \emph{potential basis} to be a set of at most $3$ points in $\Eu P$ such that each point is from a different $P_i$. Clearly, there are at most $(nk)^3$ possible potential bases, and each \pset $Q \Subset \Eu P$ has one as its basis.
Now, we only need to count for each potential basis the number of \psets it represents.
Counting the number of samples that have a certain basis is easy for the smallest enclosing circle. Given a basis $B$, we count for each indecisive point $P$ that does not contribute a point to $B$ itself how many of its members lie inside the smallest enclosing circle of $B$, and then we multiply these numbers.
\tweeplaatjes {sec-basis} {sec-basis-count} {(a) The smallest enclosing circle of a set of points is defined by two or three points on the boundary. (b) This circle contains one purple (dark) point, four blue (medium) points, and two yellow (light) points. Hence there are $1 \times 4 \times 2 = 8$ samples that have this basis.}
Figure~\ref {fig:sec-example+sec-example-smallest+sec-example-largest+sec-example-result} illustrates the idea.

\vierplaatjes {sec-example} {sec-example-smallest} {sec-example-largest} {sec-example-result}
{
  (a) Example input with $n = 3$ and $k = 3$. 
  (b) One possible basis, consisting of $3$ points. This basis has one support: the basis itself. 
  (c) Another possible basis, consisting of $2$ points. This basis has three \psets. 
  (d) The graph showing for each diameter $d$ how many \psets do not exceed that diameter. This corresponds to the cumulative distribution of the radius of the smallest enclosing disk of these points.
}

Now, for each potential basis $B$ we have two values: the number of \psets that have $B$ as their basis, and the value $f(B)$. We can sort these $O((nk)^3)$ pairs on the value of $f$, and the result provides us with the required distribution.
We spend $O(nk)$ time per potential basis for counting the points inside and $O(n)$ time for multiplying these values,
so combined with $O((nk)^3)$ potential bases this gives $O((nk)^4)$ total time.

\begin {theorem}
Let $\Eu P$ be a set of $n$ sets of $k$ points.
In $O ((nk)^4)$ time, we can compute a data structure of $O ((nk)^3)$ size that can tell us in $O(\log (nk))$ time for any value $r$ how many \psets of $Q \Subset \Eu P$ satisfy $f(Q) \leq r$.
\end {theorem}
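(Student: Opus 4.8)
The plan is to carry out the basis-counting strategy sketched in the preceding paragraphs, the only subtlety being to make the count \emph{exact} (each \pset tallied exactly once) using the uniqueness of the smallest enclosing disk. First I would recall the standard combinatorial fact: for any point set $Q$ in the plane, its smallest enclosing disk $D(Q)$ is determined by a \emph{basis} $B_Q\subseteq Q$ of two points (a diametral pair) or three points (on the boundary, with the center inside their triangle); in either case $D(B_Q)=D(Q)$, so $f(B_Q)=f(Q)$, and under the general-position assumption both $D(Q)$ and $B_Q$ are unique. I then enumerate \emph{potential bases}: all sets $B$ of two or three points, each from a distinct $P_i$, such that $B$ is its own basis (no proper subset defines $D(B)$; for triples this means the triangle is acute). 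There are at most $\binom{nk}{3}=O((nk)^3)$ of these, and by the fact above every \pset $Q\Subset\Eu P$ has exactly one of them as its basis $B_Q$.

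The key step is to compute, for each potential basis $B$, the number $c(B)$ of \psets $Q$ with $B_Q=B$. I claim $B_Q=B$ holds precisely when $B\subseteq Q$ and $Q\subseteq\bar D(B)$ (the closed disk). The forward direction is immediate. For the converse: $Q\subseteq\bar D(B)$ gives $f(Q)\le f(B)$, while $B\subseteq Q$ together with $B$ being its own basis gives $f(Q)\ge f(B)$; hence $f(Q)=f(B)$ and $D(Q)$ is a radius-$f(B)$ disk enclosing $B$, so by uniqueness of the smallest enclosing disk of $B$ we get $D(Q)=D(B)$, and by general position $Q\cap\partial D(B)=B$, i.e. $B_Q=B$. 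Consequently
\[
c(B)=\prod_{\,i\,:\,P_i\cap B=\emptyset}\bigl|\{\,p_{i,j}\in P_i : p_{i,j}\in\bar D(B)\,\}\bigr|,
\]
with the obvious replacement of each factor by $\sum_{p_{i,j}\in\bar D(B)} w_{i,j}$ in the weighted setting (and an extra factor $w_{i,j}$ for the indecisive points that do contribute a vertex to $B$). Testing one point against $\bar D(B)$ is $O(1)$, so $c(B)$ costs $O(nk)$, and over $O((nk)^3)$ potential bases this is $O((nk)^4)$ total.

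Finally I form the $O((nk)^3)$ pairs $(f(B),c(B))$, sort them by the value $f(B)$ in $O((nk)^3\log(nk))$ time, and replace the counts by prefix sums; since $\sum_{f(B)\le r}c(B)$ is exactly the number of \psets $Q\Subset\Eu P$ with $f(Q)\le r$, the resulting sorted array of size $O((nk)^3)$ answers any query $r$ by binary search in $O(\log(nk))$ time, and its construction fits in $O((nk)^4)$ time. The main obstacle is entirely the correctness of the count --- guaranteeing that every \pset contributes to exactly one $c(B)$ --- which is where uniqueness of the smallest enclosing disk and the ``$B$ is its own basis'' restriction are essential; degenerate configurations (several candidate points on the circle, or non-unique optimal disks) are ruled out by the stated general-position assumption, and the case where $k^n$ exceeds $O(1)$ machine words is deferred to Appendix~\ref{sec:bignum}.
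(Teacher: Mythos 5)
Your proposal is correct and follows essentially the same route as the paper: enumerate the $O((nk)^3)$ potential bases, count for each basis $B$ the \psets it represents by multiplying, over the indecisive points not contributing to $B$, the number of their members inside the disk $D(B)$ (taking $O(nk)$ per basis, hence $O((nk)^4)$ total), then sort by $f(B)$ and answer queries by binary search over prefix sums. Your explicit characterization that $B_Q=B$ iff $B\subseteq Q$ and $Q\subseteq \bar D(B)$ (for $B$ its own basis, under general position) just spells out the counting correctness that the paper treats as immediate, so no gap.
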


\paragraph{LP-type problems.}

The approach described above also works for measures $f : \Eu P \to \R$ other than the smallest enclosing disk.  In particular, it works for LP-type problems~\cite{SW92} that have constant combinatorial dimension.
An \emph{LP-type problem} provides a set of constraints $H$ and a function $\omega : 2^H \to \R$ with the following two properties:

\begin{tabular} {rl} 
\textsc{Monotonicity:} & For any $F \subseteq G \subseteq H$, $\omega(F) \leq \omega(G)$.\\
\textsc{Locality:} & For any $F \subseteq G \subseteq H$ with $\omega(F) = \omega(G)$\\ &and an $h \in H$ such that $\omega(G \cup h) > \omega(G)$\\ &implies that $\omega(F \cup h) > \omega(F)$.\\
\end{tabular}

A \emph{basis} for an LP-type problem is a subset $B \subset H$ such that $\omega(B') < \omega(B)$ for all proper subsets $B'$ of $B$.  And we say that $B$ is a basis for a subset $G \subseteq H$ if $B \subseteq G$, $\omega(B) = \omega(G)$ and $B$ is a basis.  A constraint $h \in H$ \emph{violates} a basis $B$ if $w(B \cup h) > w(B)$.
The radius of the smallest enclosing ball is an LP-type problem (where the points are the constraints and $\omega(\cdot) = f(\cdot)$) as are linear programming and many other geometric problems.
Let the maximum cardinality of any basis be the \emph{combinatorial dimension} of a problem.  

For our algorithm to run efficiently, we assume that our LP-type problem has available the following algorithmic primitive, which is often assumed for LP-type problems with constant combinatorial dimension~\cite{SW92}.  For a subset $G \subset H$ where $B$ is known to be the basis of $G$ and a constraint $h \in H$, a \emph{violation test} determines in $O(1)$ time if $\omega(B \cup h) > \omega(B)$; i.e., if $h$ \emph{violates} $B$.  More specifically, given an efficient violation test, we can ensure a stronger algorithmic primitive.  A \emph{full violation test} is given a subset $G \subset H$ with known basis $B$ and a constraint $h \in H$ and determines in $O(1)$ time if $\omega(B) < \omega(G \cup h)$.  This follows because we can test in $O(1)$ time if $\omega(B) < \omega(B \cup h)$; \textsc{monotonicity} implies that $\omega(B) < \omega(B \cup h)$ only if $\omega(B) < \omega(B \cup h) \leq \omega(G \cup h)$, and \textsc{locality} implies that $\omega(B) = \omega(B \cup h)$ only if $\omega(B) = \omega(G) = \omega(G \cup h)$.
Thus we can test if $h$ violates $G$ by considering just $B$ and $h$, but if either \textsc{monotonicity} or \textsc{locality} fail for our problem we cannot.

We now adapt our algorithm to LP-type problems where elements of each $P_i$ are potential constraints and the ranking function is $f$.
When the combinatorial dimension is a constant $\beta$, we need to consider only $O((nk)^\beta)$ bases, which will describe all possible supports.

The full violation test implies that given a basis $B$, we can measure the sum of probabilities of all \psets of $\Eu P$ that have $B$ as their basis in $O(nk)$ time.  For each indecisive point $P$ such that $B \cap P = \emptyset$, we sum the probabilities of all elements of $P$ that do not violate $B$.  The product of these probabilities times the product of the probabilities of the elements in the basis, gives the probability of $B$ being the true basis.  See Algorithm \ref{alg:count-prob} where the indicator function applied $1(f(B \cup \{p_j\}) = f(B))$ returns $1$ if $p_j$ does not violate $B$ and $0$ otherwise.
It runs in $O((nk)^{\beta+1})$ time.

\begin{algorithm}[h!!t]
\caption{\label{alg:count-prob}Construct Probability Distribution for $f(\Eu P)$.}
\begin{algorithmic}[1]
\FOR {all potential bases $B \subset Q \Subset \Eu P$}
  \FOR {$i = 1$ \textbf{to} $n$}
    \IF {there is a $j$ such that $p_{ij} \in B$}
      \STATE Set $w_i = w(p_{ij})$.
    \ELSE
      \STATE Set $w_i = \sum_{j=1}^k w(p_{ij}) 1(f(B \cup \{p_j\}) = f(B))$.
    \ENDIF
  \ENDFOR
  \STATE Store a point with value $f(B)$ and weight $(1/k^n) \prod_i w_i$.
\ENDFOR
\end{algorithmic}
\end{algorithm}

As with the special case of smallest enclosing disk, we can create a distribution over the values of $f$ given an indecisive point set $\Eu P$.  For each basis $B$ we calculate $\mu(B)$, the summed probability of all \psets that have basis $B$, and $f(B)$. We can then sort these pairs according to the value as $f$ again.
For any query value $r$, we can retrieve $\tilde f(\Eu P, r)$ in $O(\log (nk))$ time and it takes $O(n)$ time to describe (because of its long length).

\begin{theorem}
Given a set $\Eu P$ of $n$ indecisive point sets of size $k$ each, and given an LP-type problem $f : \Eu P \to \R$ with combinatorial dimension $\beta$, we can create the distribution of $f$ over $\Eu P$ in $O((nk)^{\beta+1})$ time.  The size of the distribution is $O(n (nk)^\beta)$.
\label{thm:gen-comb-dist}
\end{theorem}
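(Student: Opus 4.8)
The plan is to follow exactly the template already set up for the smallest enclosing disk, generalizing from ``points on the boundary of the disk'' to ``elements of a basis'' in the LP-type sense, and to account carefully for the combinatorial dimension $\beta$ and for the fact that each stored value now requires $O(n)$ words to write down. First I would observe that every \pset $Q \Subset \Eu P$ has a basis $B_Q \subseteq Q$ of size at most $\beta$, and $f(Q) = f(B_Q) = \omega(B_Q)$; this is the standard existence-of-basis fact for LP-type problems with combinatorial dimension $\beta$, applied with the elements of the various $P_i$ as constraints and $f$ as $\omega$. Conversely, a \emph{potential basis} is any set of at most $\beta$ constraints, each drawn from a distinct $P_i$; there are $O((nk)^\beta)$ of them, and every \pset's basis is one of them. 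So it suffices to count, for each potential basis $B$, the total weight (probability mass) of those \psets for which $B$ is precisely the basis, and to pair that with the value $f(B)$.

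The key step is the counting, and this is where the \textsc{monotonicity}/\textsc{locality} machinery and the full-violation-test primitive earned earlier in the section do the work. Given a potential basis $B$, a \pset $Q$ has $B$ as its basis exactly when (i) $Q$ contains every element of $B$, and (ii) no element $q$ of $Q$ violates $B$, i.e.\ $\omega(B \cup \{q\}) = \omega(B)$. Crucially, by the full violation test, condition (ii) can be checked for each candidate constraint $q$ in $O(1)$ time looking only at $B$ and $q$, \emph{not} at the rest of $Q$ — this is precisely what the paragraph preceding Algorithm~\ref{alg:count-prob} established from \textsc{monotonicity} and \textsc{locality}. Hence the events ``$P_i$ contributes a non-violating element'' are independent across the $i$ for which $B \cap P_i = \emptyset$, and the probability that $B$ is the true basis factors as $\prod_{i : B\cap P_i \neq \emptyset} w(p_{i,\cdot}) \;\cdot\; \prod_{i : B \cap P_i = \emptyset}\big(\sum_{j=1}^k w(p_{i,j})\,\mathbf 1[f(B\cup\{p_{i,j}\}) = f(B)]\big)$, which is exactly what Algorithm~\ref{alg:count-prob} computes. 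For each of the $O((nk)^\beta)$ potential bases this costs $O(nk)$ time for the violation tests and $O(n)$ for the final product, giving $O((nk)^{\beta+1})$ total; I would also note the $O((nk)^\beta)$ potential bases each store a value $f(B)$ (one word under the general-position/$O(1)$-word assumption, or handled via Appendix~\ref{sec:bignum}) together with an $O(n)$-word weight, for total size $O(n(nk)^\beta)$. Finally, sorting these pairs by the value of $f$ and taking prefix sums of the weights yields the cumulative distribution $\tilde f(\Eu P, \cdot)$; sorting is $O((nk)^\beta \log(nk))$, absorbed into the running-time bound, and a query $\tilde f(\Eu P, r)$ is answered by binary search in $O(\log(nk))$ time (the $O(n)$ in the statement being the cost of \emph{writing out} the long-integer answer).

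I expect the main obstacle — and the only place where something could genuinely go wrong — to be the claim that a potential basis $B$ is counted \emph{exactly once}, i.e.\ that each \pset contributes to precisely one potential basis. If $f$ is not in general position there may be several distinct bases $B$ with $B \subseteq Q$ and $\omega(B) = \omega(Q)$, and then the $\prod_i w_i$ masses for different potential bases overlap and the weights double-count. The clean fix, consistent with the running assumptions of this subsection, is the stated general-position hypothesis, under which the basis of each \pset is unique; the degenerate case is exactly the kind of issue deferred to Appendix~\ref{sec:bignum} (and the footnote on ties in the \sip\ definition), and can be patched by a canonical tie-breaking rule on bases that is respected by the violation test. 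A secondary, more mundane point to get right is that the combinatorial dimension $\beta$ is a \emph{constant}, so that $O((nk)^\beta)$ and the implied constants in the violation-test primitive are legitimately treated as polynomial; this is already part of the hypothesis (``constant combinatorial dimension $\beta$'') and used when we assert $O(1)$-time violation tests. Everything else is the bookkeeping already rehearsed for the smallest-enclosing-disk special case.
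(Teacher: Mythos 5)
Your proposal matches the paper's own proof essentially step for step: enumerate the $O((nk)^\beta)$ potential bases, use the full violation test (derived from \textsc{monotonicity} and \textsc{locality}) to sum, in $O(nk)$ time per basis, the probabilities of the non-violating choices from each untouched $P_i$, multiply these in $O(n)$ time, and sort the resulting value--weight pairs, with the general-position assumption stated at the start of the section ensuring each \pset is counted exactly once and the $O(n)$-word weights accounting for the $O(n(nk)^\beta)$ size. The only quibble is a minor mis-citation: degeneracy handling is addressed in the paper's ``Avoiding Degeneracy'' discussion and the remark following the theorem, while Appendix~\ref{sec:bignum} concerns large-integer arithmetic, but this does not affect the correctness of your argument.
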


If we assume general position of $\Eu P$ relative to $f$, then we can often slightly improve the runtime needed to calculate $\mu(B)$ using range searching data structures.  However, to deal with degeneracies, we may need to spend $O(nk)$ time per basis, regardless.

If we are content with an approximation of the distribution rather than an exact representation, then it is often possible to drastically reduce the storage and runtime following techniques discussed in Section \ref{sec:algQ}.  

Measures that fit in this framework for points in $\R^d$ include smallest enclosing axis-aligned rectangle (measured either by area or perimeter) ($\beta = 2d$), smallest enclosing ball in the $L_1$, $L_2$, or $L_\infty$ metric ($\beta = d+1$), directional width of a set of points ($\beta = 2$), and, after dualizing, linear programming ($\beta = d$).
These approaches also carry over naturally to deterministically create polynomial-sized $k$-variate quantizations.

\subsection {Hardness Results}

In this section, we examine some extent measures that do not fit in the above framework.
First, diameter does not satisfy the \textsc{locality} property, and hence we cannot efficiently perform the full violation test.  
We show that a decision variant of diameter is \#P-Hard, even in the plane, and thus (under the assumption that \#P $\neq$ P), there is no polynomial time solution.  This result holds despite the fact that diameter has a combinatorial dimension of $2$, implying that the associated quantization has at most $O((nk)^2)$ steps.  
Second, the area of the convex hull does not have a constant combinatorial dimension, thus we can show the resulting distribution may have exponential size.

\paragraph{Diameter.}
The \emph {diameter} of a set of points in the plane is the largest distance between any two points. We will show that the counting problem of computing $\tilde f (\Eu P, r)$ is \#P-hard when $f$ denotes the diameter.

        \begin{problem}
        \PDIAM: 
        Given a parameter $d$ and a set $\Eu{P} = \{P_1, \ldots, P_n\}$ of $n$ sets, each consisting of $k$ points in the plane, how many \psets $Q \Subset \Eu P$ have $f(Q) \leq d$?
        \label{prb:count-diam}
        \end{problem}

We will now prove that Problem \ref{prb:count-diam} is \#P-hard.
Our proof has three steps. We first show a special version of \twoSAT has a polynomial reduction from Monotone \twoSAT, which is \#P-complete~\cite{Val79}.
Then, given an instance of this special version of \twoSAT, we construct a graph with weighted edges on which the diameter problem is equivalent to this \twoSAT instance.  Finally, we show the graph can be embedded as a straight-line graph in the plane as an instance of \PDIAM.

Let 3CLAUSE-\twoSAT be the problem of counting the number of solutions to a 2SAT formula, where each variable occurs in at most three clauses, and each variable is either in exactly one clause or is negated in exactly one clause.  Thus, each distinct literal appears in at most two clauses.  

\begin {lemma}
\label{lem:3CLAUSE-2SAT}
Monotone \twoSAT has a polynomial reduction to 3CLAUSE-\twoSAT.
\end {lemma}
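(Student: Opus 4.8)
The plan is to reduce Monotone \twoSAT (counting satisfying assignments of a 2CNF formula in which every literal is positive), known to be \#P-complete, to 3CLAUSE-\twoSAT by a standard ``variable-splitting'' gadget that makes each variable appear only a bounded number of times. Given a monotone formula $\varphi$ with variables $x_1,\dots,x_n$, suppose $x_i$ occurs in $d_i$ clauses. I would replace $x_i$ by $d_i$ fresh variables $x_i^{(1)},\dots,x_i^{(d_i)}$, substituting one fresh copy into each occurrence, and then add \emph{equality-enforcing} clauses that force all the copies to take the same truth value. The number of satisfying assignments is preserved exactly: each satisfying assignment of $\varphi$ lifts to exactly one assignment of the new formula in which all copies agree, and any assignment in which the copies do not all agree is killed by the equality gadget.

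The crux is to implement the ``all copies equal'' constraint using only 2-clauses, while simultaneously respecting the degree restrictions that define 3CLAUSE-\twoSAT, namely that each variable occurs in at most three clauses and each variable appears either un-negated in exactly one clause or negated in exactly one clause (so each literal appears at most twice). To force $y_1 = y_2 = \dots = y_m$ I would chain implications around a cycle: $y_1 \Rightarrow y_2$, $y_2 \Rightarrow y_3$, \dots, $y_{m-1}\Rightarrow y_m$, $y_m \Rightarrow y_1$, where $a \Rightarrow b$ is the clause $(\neg a \vee b)$. This cycle is satisfied precisely when all $y_j$ share a truth value. Each $y_j$ then appears negated once (in $y_j \Rightarrow y_{j+1}$) and un-negated once (in $y_{j-1}\Rightarrow y_j$), for a total of two clauses from the gadget; the one remaining occurrence is its single substitution into an original clause of $\varphi$. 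I must check the bookkeeping: a variable used to substitute into an original clause occurs once there, and since that original clause is monotone (positive), the literal is positive — so I want this variable's ``un-negated, exactly one clause'' or ``negated, exactly one clause'' accounting to work out. Here each copy variable occurs positively in its original clause and, from the cycle, negated once and positive once; that is three clauses total but the positive literal appears twice, violating ``each literal appears at most twice''? No — two positive occurrences is exactly the allowed bound, and the negated literal appears once. I may instead want to orient the cycle the other way or split the chain so the original-clause occurrence is the negated slot; I would pick whichever orientation makes the per-literal count come out to ``at most two'' and the ``exactly one un-negated or exactly one negated'' condition hold. A clean fix is to use a longer cycle with auxiliary pass-through variables so that each copy that feeds an original clause contributes only the two implication clauses of the cycle plus its one original occurrence, with the directions chosen to balance the literal counts.

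The main obstacle is precisely this degree-accounting: making the reduction polynomial is trivial (the blow-up is linear in the number of literal occurrences, which is $O(n + m)$), and preserving the count is immediate once the equality gadget is correct, but threading the needle so that \emph{every} variable — original copies and any auxiliary cycle variables — satisfies the strict 3CLAUSE-\twoSAT degree pattern requires care. I would handle it by a case analysis on the two possible orientations of each implication cycle and, if needed, by padding each cycle with dummy variables of degree exactly two (appearing once negated, once positive) so that the original-clause occurrence always lands in the ``free'' slot. Once the gadget is verified to be a legal 3CLAUSE-\twoSAT instance and the bijection between satisfying assignments is established, the lemma follows, and the construction feeds directly into the subsequent graph and planar embedding steps of the \#P-hardness proof for \PDIAM.
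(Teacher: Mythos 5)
Your approach matches the paper's proof exactly: split each variable $x$ occurring $u$ times into $u$ fresh copies, substitute one copy per original clause, and bind the copies with a cyclic chain of 2-clause implications (the paper writes the cycle as $(z_i, \neg z_{i+1})$, which is the same cycle with the opposite orientation of yours). Your own degree-accounting already shows the construction is legal --- each copy occurs positively twice (once in its substituted original clause, once in the cycle) and negatively once, meeting the 3CLAUSE-\twoSAT constraints as stated --- so the fallback fixes you contemplate (reorienting the cycle, padding with dummy pass-through variables) are unnecessary.
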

\begin{proof}
The Monotone \twoSAT problem counts the number satisfying assignments to a \twoSAT instance where each clause has at most two variables and no variables are negated.  
Let $X = \{(x, y_1), (x, y_2), \ldots, (x, y_u)\}$ be the set of $u$ clauses which contain variable $x$ in a Monotone \twoSAT instance.  We replace $x$ with $u$ variables $\{z_1, z_2, \ldots, z_u\}$ and we replace $X$ with the following $2u$ clauses $\{(z_1, y_1), (z_2, y_2), \ldots, (z_u, y_u)\}$ and $\{(z_1, \neg z_2), (z_2, \neg z_3), \ldots,$ $(z_{u-1}, \neg z_u), (z_u, \neg z_1)\}$.
The first set of clauses preserves the relation with other original variables and the second set of clauses ensures that all of the new variables have the same value (i.e. \textsf{TRUE} or \textsf{FALSE}).
This procedure is repeated for each original variable that is in more than $1$ clause.
\end{proof}

We convert this problem into a graph problem by, for each variable $x_i$, creating a set $P_i = \{p_i^+, p_i^-\}$ of two points.  Let $S = \bigcup_i P_i$.  Truth assignments of variables correspond to a \pset as follows.  If $x_i$ is set \textsf{TRUE}, then the \pset includes $p_i^+$, otherwise the \pset includes $p_i^-$.  We define a distance function $f$ between points, so that the distance is greater than $d$ (long) if the corresponding literals are in a clause, and less than $d$ (short) otherwise.  If we consider the graph formed by only long edges, we make two observations.
First, the maximum degree is $2$, since each literal is in at most two clauses.
Second, there are no cycles since a literal is only in two clauses if in one clause the other variable is negated, and negated variables are in only one clause.
These two properties imply we can use the following construction to show that the \PDIAM problem is as hard as counting Monotone \twoSAT solutions, which is \#P-complete.

\begin{lemma}
\label{lem:embed-DIAM}
An instance of \PDIAM reduced from 3CLAUSE-\twoSAT can be embedded so $\Eu{P} \subset \R^2$.
\end{lemma}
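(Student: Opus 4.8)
The plan is to build an explicit straight-line embedding of the graph constructed from a 3CLAUSE-\twoSAT instance, so that the long edges realize exactly the clause pairs and every non-long edge is genuinely short. First I would recall the two structural facts already established: in the graph of long edges, every vertex (a literal $p_i^+$ or $p_i^-$) has degree at most $2$, and the long-edge graph is acyclic. Hence the long edges form a disjoint union of simple paths. The core of the proof is therefore to place the vertices of $S = \bigcup_i P_i$ in the plane so that (a) the two endpoints of each long edge are at distance $> d$, (b) the two points $p_i^+, p_i^-$ of each variable gadget behave correctly, and (c) all other pairwise distances are $\le d$. A clean way to get (c) almost for free is to place every point inside a disk of diameter slightly more than $d$, and then push only the long-edge endpoints apart. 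Concretely, I would put all points near a common small region so that the default distance between any two is comfortably below $d$, then for each long-edge path route it as a zig-zag whose consecutive vertices lie near two antipodal clusters of a diameter-$d$ configuration.

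The key steps, in order, would be: (1) Observe the long-edge graph is a union of vertex-disjoint paths $\pi_1, \pi_2, \dots$; 2-color each path so consecutive vertices get opposite colors, which is possible since paths are bipartite. (2) Fix two points $a$ and $b$ in the plane with $|ab| = d + \eps$ for a small $\eps>0$, and two tiny disks $D_a \ni a$, $D_b \ni b$ of radius $r \ll \eps$; any point of $D_a$ and any point of $D_b$ are at distance $> d$, while two points within the same disk are at distance $< d$ — but two points in different disks that are \emph{not} the designated pair must be avoided, so instead I would use, for each long-edge path, its \emph{own} pair of antipodal disks placed so that cross-path distances stay small. (3) Place the color-0 vertices of path $\pi_j$ into $D_a^{(j)}$ and the color-1 vertices into $D_b^{(j)}$, perturbing within the disk to achieve general position; every long edge of $\pi_j$ now connects a point of $D_a^{(j)}$ to a point of $D_b^{(j)}$, so it has length $> d$. (4) Handle variable gadgets: the pair $p_i^+, p_i^-$ should be a short edge (they are never in a common clause, by Monotone \twoSAT plus the reduction of Lemma~\ref{lem:3CLAUSE-2SAT}), so they may land in the same disk or nearby disks; verify the reduction never forces $p_i^+ p_i^-$ to be long. (5) Choose the global layout (positions of all the disk-pairs) inside one bounding disk of diameter $d - \gamma$ for a small $\gamma$, so that any two points belonging to disks of \emph{different} paths, or to the same disk, are at distance $\le d$; only the intended long edges exceed $d$. (6) Finally observe all coordinates are rational of polynomial bit-length, the construction is computable in polynomial time, and the resulting \PDIAM instance with threshold $d$ has exactly the same count of valid \psets as the number of satisfying assignments of the 3CLAUSE-\twoSAT instance, completing the chain Monotone \twoSAT $\to$ 3CLAUSE-\twoSAT (Lemma~\ref{lem:3CLAUSE-2SAT}) $\to$ graph diameter $\to$ \PDIAM.

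The main obstacle I anticipate is step (5): ensuring that \emph{only} the designated long edges are long. Packing many antipodal disk-pairs (one per long-edge path) into a single region of diameter just under $d$ while keeping each pair's own separation just over $d$ is a delicate geometric balancing act — the two disks of one pair must be far apart, yet a disk of one pair must be close to both disks of every other pair. The resolution I would pursue is to make all the separated pairs nearly \emph{parallel and concentric}: align every pair along (almost) the same diameter of the bounding disk, so each pair's two disks sit near the two ends of that diameter; then two points from different pairs but the ``same end'' are within a tiny cluster (distance $\approx 0 \le d$), and two points from opposite ends across different pairs are at distance $\approx d + \eps$ — which is \emph{too large}. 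To fix this I would instead shrink: place the whole family of antipodal pairs so their common length is exactly the threshold, i.e. use a single pair of disks $D_a, D_b$ with $|D_a D_b|$-distances in the open interval $(d, d+\eps)$, put \emph{all} color-0 vertices (over all paths) in $D_a$ and all color-1 vertices in $D_b$; then every long edge is long, every intra-disk pair is short, but a \emph{cross-disk non-edge} pair would also be long — which is wrong. So the honest resolution is: after 2-coloring, it is \emph{not} true that all cross-color pairs should be long, so a single antipodal pair fails, and one really does need per-path gadgets with careful spacing; I would make each path's antipodal axis a slightly different short chord near the center of a large disk, with the chord length in $(d, d+\eps)$ and all chords clustered so tightly that any two points on different chords are within distance $d$. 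Verifying that such a clustering simultaneously satisfies ``each intended pair $>d$'' and ``all unintended pairs $\le d$'' is the crux, and is exactly what Lemma~\ref{lem:embed-DIAM} must nail down — I would do it by a perturbation argument: start from a configuration where all points coincide (all distances $0$), then move each long-edge endpoint by a tiny vector along its path's axis so that each long edge stretches past $d$ while no other distance exceeds $d$, using that the long-edge graph being a union of disjoint paths means the required displacements are ``independent'' and can be made arbitrarily small relative to $d$.
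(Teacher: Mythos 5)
Your proposal contains a genuine gap that would cause it to fail. The central idea — 2-color each long-edge path and place the two color classes of path $\pi_j$ into a dedicated antipodal disk pair $D_a^{(j)}, D_b^{(j)}$ — forces \emph{every} cross-color pair within a path to be far apart, not just the consecutive ones that are actual long edges. Consider a path $v_1 - v_2 - v_3 - v_4$: the 2-coloring puts $v_1, v_3$ in $D_a^{(j)}$ and $v_2, v_4$ in $D_b^{(j)}$, so $(v_1, v_4)$ and $(v_2, v_3)$ — which are not long edges and must satisfy distance $\leq d$ — are nevertheless placed at distance $> d$. This is the same failure you identified for the single global pair, and it persists unchanged at the per-path level. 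Your fallback perturbation argument does not repair this: starting from coincident points and moving endpoints by ``tiny vectors along the path's axis'' cannot simultaneously stretch a long edge past $d$ (requiring a displacement of at least $d/2$ for some endpoint) while keeping all unintended pairs within $d$; and for a path of three or more long edges, alternating endpoints on a short axis segment recreates exactly the same coloring problem. The independence of disjoint paths is a red herring — the difficulty is already present inside a single path.

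The paper resolves this with a different and essential idea: linearize the long-edge paths into a cyclic ordering $\Gamma$ (inserting dummy points to pad gaps between paths), place the $n'$ positions of $\Gamma$ at evenly-spaced angles on a circle of diameter $d/\cos(\pi/n')$ by first laying them on a semicircle and then flipping every other point to its antipode. This produces a point set where \emph{only} pairs that are consecutive in $\Gamma$ are near-antipodal (angular distance $> \pi - 2\pi/n'$), hence at Euclidean distance $> d$, while every other pair — including non-adjacent opposite-parity vertices on the same path — lands strictly within distance $d$. Removing the dummy points leaves the desired instance. The key geometric quantity your proposal lacks is a family of positions in which the ``far'' relation is precisely a Hamiltonian path (consecutivity in $\Gamma$) rather than a complete bipartite relation (cross-color); the alternating-antipode circle construction delivers exactly that, and no disk-pair or perturbation scheme will.
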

\begin{proof}

Consider an instance of 3CLAUSE-\twoSAT where there are $n$ variables, and thus the corresponding graph has $n$ sets $\{P_i\}_{i=1}^n$.  We construct a sequence $\Gamma$ of $n' \in [2n,4n]$ points.
It contains all points from $\Eu{P}$ and a set of at most as many dummy points.
First organize a sequence $\Gamma'$ so if two points $q$ and $p$ have a long edge, then they are consecutive.  Now for any pair of consecutive points in $\Gamma'$ which do not have a long edge, insert a dummy point between them to form the sequence $\Gamma$.  Also place a dummy point at the end of $\Gamma$.

We place all points on a circle $C$ of diameter $d/\cos(\pi / n')$, see Figure \ref{fig:embed-diam}.
We first place all points on a semicircle of $C$ according to the order of $\Gamma$, so each consecutive points are $\pi/n'$ radians apart.  Then for every other point (i.e. the points with an even index in the ordering $\Gamma$) we replace it with its antipodal point on $C$, so no two points are within $2\pi/n'$ radians of each other.  Finally we remove all dummy points.
This completes the embedding of $\Eu{P}$, we now need to show that only points with long edges are further than $d$ apart.

\eenplaatje {embed-diam}
{Embedded points are solid, at center of circles of radius $d$.  Dummy points hollow. Long edges are drawn between points at distance greater than $d$.}

We can now argue that only vertices which were consecutive in $\Gamma$ are further than $d$ apart, the remainder are closer than $d$.  Consider a vertex $p$ and a circle $C_p$ of radius $d$ centered at $p$.  Let $p'$ be the antipodal point of $p$ on $C$.  $C_p$ intersects $C$ at two points, at $2\pi/n'$ radians in either direction from $p'$.  Thus only points within $2\pi/n'$ radians of $p'$ are further than a distance $d$ from $p$.  This set includes only those points which are adjacent to $p$ in $\Gamma$, which can only include points which should have a long edge, by construction.  
\end{proof}

Combining Lemmas~\ref {lem:3CLAUSE-2SAT} and \ref {lem:embed-DIAM}: 
\begin{theorem}
\PDIAM is \#P-hard.
\end{theorem}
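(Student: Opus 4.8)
The plan is to exhibit a chain of parsimonious (count-preserving) polynomial-time reductions ending at Monotone \twoSAT, which is \#P-complete by Valiant~\cite{Val79}; composing them shows that an oracle for \PDIAM would count the satisfying assignments of a Monotone \twoSAT formula, so \PDIAM is \#P-hard. Concretely I would chain Monotone \twoSAT $\le_p$ 3CLAUSE-\twoSAT $\le_p$ (abstract) diameter-counting on indecisive points $\le_p$ \PDIAM, where the first link is Lemma~\ref{lem:3CLAUSE-2SAT} and the last is Lemma~\ref{lem:embed-DIAM}; the only genuinely new content is the middle encoding, together with checking that the instances produced by Lemma~\ref{lem:3CLAUSE-2SAT} satisfy the structural hypotheses that Lemma~\ref{lem:embed-DIAM} needs.

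For the first link, the gadget in Lemma~\ref{lem:3CLAUSE-2SAT} that replaces a variable $x$ by copies $z_1,\dots,z_u$ with cyclic consistency clauses forces all copies to share a truth value, so satisfying assignments of the new instance are in bijection with those of the old, and 3CLAUSE-\twoSAT is \#P-hard. For the middle link, given a 3CLAUSE-\twoSAT formula on $x_1,\dots,x_n$, set $\Eu P=\{P_1,\dots,P_n\}$ with $P_i=\{p_i^+,p_i^-\}$, so a \pset $Q \Subset \Eu P$ encodes a truth assignment by selecting $p_i^+$ when $x_i=\textsf{TRUE}$ and $p_i^-$ otherwise. For each clause, declare the two points encoding its unique falsifying partial assignment to be at distance greater than $d$ (a \emph{long} edge), and let every other pair of points be at distance less than $d$. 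Then $Q$ contains a long edge iff the assignment it encodes falsifies some clause, so $f(Q)\le d$ exactly when the assignment is satisfying, and $\tilde f(\Eu P, d)\cdot k^n$ is the number of satisfying assignments. The two structural facts the embedding lemma uses --- that the graph of long edges has maximum degree $2$ and is acyclic --- follow from the 3CLAUSE-\twoSAT restriction: each distinct literal occurs in at most two clauses, bounding the degree; and a literal occurs in two clauses only if in one of them the other variable appears negated, while negated variables occur in only one clause, which rules out cycles.

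The last link is Lemma~\ref{lem:embed-DIAM}: since the long-edge graph is a disjoint union of paths, its vertices, interleaved with at most $n$ dummy points, can be strung into a single cyclic sequence $\Gamma$ of $n'\in[2n,4n]$ points in which every long edge joins consecutive elements; placing the elements of $\Gamma$ alternately on a circle $C$ of diameter $d/\cos(\pi/n')$ and on the antipodes of that circle puts every consecutive pair of $\Gamma$ at distance greater than $d$ and every non-consecutive pair at distance less than $d$, and removing the dummies leaves a genuine planar instance with the same count. Composing the three reductions proves the theorem.

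The step I expect to be the real obstacle is this planar embedding. One must verify both a combinatorial claim --- that inserting dummy points turns the degree-$\le 2$, acyclic long-edge graph into a single linear order in which exactly the intended pairs become consecutive --- and a metric claim: on $C$ a point's distance-$d$ circle meets $C$ precisely at the two points $2\pi/n'$ away from its antipode, so after the antipodal spreading the pairs at distance greater than $d$ are exactly the consecutive pairs of $\Gamma$; the choice of diameter $d/\cos(\pi/n')$ is what makes the threshold fall exactly on adjacency. A minor point: depending on the polarity one picks for the long edges one might instead count the complement of the satisfying assignments, but since the total number of \psets is $k^n=2^n$ and is known in advance, this does not affect \#P-hardness.
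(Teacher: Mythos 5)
Your proof follows the paper's reduction exactly: reduce Monotone \twoSAT to 3CLAUSE-\twoSAT (Lemma~\ref{lem:3CLAUSE-2SAT}), encode the resulting formula as an indecisive-point diameter instance whose ``long-edge'' graph has maximum degree $2$ and no cycles, and then invoke the circle embedding of Lemma~\ref{lem:embed-DIAM}. The one cosmetic difference is the polarity of the long edges --- you join the two points encoding a clause's falsifying partial assignment, while the paper joins the points for the literals appearing in the clause --- but these yield the same count (via the assignment-flip bijection $\sigma \mapsto \neg\sigma$) and isomorphic long-edge graphs, so the structural hypotheses of the embedding lemma are unaffected and the argument goes through as stated.
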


\paragraph{Convex hull.}
Our LP-type framework also does not work for any properties of the convex hull (e.g. area or perimeter) because it does not have constant combinatorial dimension; a basis could have size $n$.   In fact, the complexity of the distribution describing the convex hull may be $\Omega(k^n)$, since if all points in $\Eu P$ lie on or near a circle, then every \pset $Q \Subset \Eu P$ may be its own basis of size $n$, and have a different value $f(Q)$.

\section {Deterministic Algorithms for Approximate Computations on Uncertain Points}
\label {sec:distributions}

In this section we show how to approximately answer questions about most representations of independent uncertain points; in particular, we handle representations that have almost all ($1-\eps$ fraction) of their mass with bounded support in $\b{R}^d$ and is described in a compact manner (see Appendix \ref{sec:eps-dist}).  
Specifically, in this section, we are given a set $\Eu P = \{P_1, P_2, P_3, \ldots, P_n\}$ of $n$ independent random variables over the universe $\R^d$, together with a set $\Eu \mu_{\Eu{P}} = \{\mu_1, \mu_2, \mu_3, \ldots, \mu_n\}$ of $n$ probability distributions that govern the variables, that is, $X_i \sim \mu_i$.
Again, we call a set of points $Q = \{q_1, q_2, q_3, \ldots, q_n\}$ a \pset of $\Eu P$, and because of the independence we have probability $Pr[\Eu P = Q] = \prod_{i} Pr[P_i = p_i]$.

    The main strategy will be to replace each distribution $\mu_i$ by a discrete point set $P_i$, such that the uniform distribution over $P_i$ is ``not too far'' from $\mu_i$ ($P_i$ is not the most obvious $\eps$-sample of $\mu_i$). Then we apply the algorithms from Section~\ref {sec:pointsetsets} to the resulting set of point sets. Finally, we argue that the result is in fact an $\eps$-quantization of the distribution we are interested in.   Using results from Section \ref{sec:rand-eQ} we can simplify the output in order to decrease the space complexity for the data structure, without increasing the approximation factor too much.

\paragraph{General approach.}
Given a distribution $\mu_i : \R^2 \to \R^+$ describing uncertain point $P_i$ and a function $f$ of bounded combinatorial dimension $\beta$ defined on a \pset of $\Eu P$, we can describe a straightforward range space $T_i = (\mu_i, \Eu A_f)$, where $\Eu A_f$ is the set of ranges corresponding to the bases of $f$ (e.g., when $f$ measures the radius of the smallest enclosing ball, $\Eu A_f$ would be the set of all balls).
More formally, $\Eu A_f$ is the set of subsets of $\R^d$ defined as follows: for every set of $\beta$ points which define a basis $B$ for $f$, $\Eu A_f$ contains a range $A$ that contains all points $p$ such that $f(B) = f(B \cup \{p\})$.
However, taking $\eps$-samples from each $T_i$ is \emph{not} sufficient to create sets $Q_i$ such that $\Eu Q = \{Q_1, Q_2, \ldots, Q_n\}$ so for all $r$ we have $|\tilde f(\Eu P, r) - \tilde f(\Eu Q, r) | \leq \eps$.

$\tilde f(\Eu P, r)$ is a complicated joint probability depending on the $n$ distributions and $f$, and the $n$ straightforward $\eps$-samples do not contain enough information to decompose this joint probability.  
The required $\eps$-sample of each $\mu_i$ should model $\mu_i$ in relation to $f$ and any instantiated point $p_i$ representing $\mu_j$ for $i\neq j$.  
The following crucial definition allows for the range space to depend on any $n-1$ points, including the possible locations of each uncertain point.  

Let $\Eu{A}_{f,n}$ describe a family of Lebesgue-measurable sets defined by $n-1$ points $Z \subset \R^d$ and a value $w$.  Specifically, $A(Z,w) \in \Eu{A}_{f,n}$ is the set of points $\{p \in \R^d \mid f(Z \cup p) \leq w\}$.
We describe examples of $\Eu A_{f,n}$ in detail shortly, but first we state the key theorem using this definition.  Its proof, delayed until after examples of $\Eu A_{f,n}$, will make clear how $(\mu_i, \Eu{A}_{f,n})$ exactly encapsulates the right guarantees to approximate $\hat f(\Eu P, r)$, and thus why $(\mu_i, \Eu A_f)$ does not.

\begin{theorem} \label{thm:n-apx}
Let $\Eu P = \{P_1, \ldots, P_n\}$ be a set of uncertain points where each $P_i \sim \mu_i$.  
For a function $f$, let $Q_i$ be an $\eps'$-sample of $(\mu_i, \Eu A_{f,n})$ and let $\Eu Q = \{Q_1, \ldots, Q_n\}$.
Then for any $r$, 
$\left| \hat f(\Eu P, r)  - \tilde f(\Eu Q, r) \right| \leq  \eps^\prime n.$
\end{theorem}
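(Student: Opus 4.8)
The plan is a hybrid (telescoping) argument that replaces the continuous distributions $\mu_i$ by the discrete samples $Q_i$ one index at a time, losing at most $\eps'$ per step. Recall that $\hat f(\Eu P,r)$ is the probability, under the joint law $\prod_i\mu_i$, that a \pset drawn from $\Eu P$ has $f$-value at most $r$, while $\tilde f(\Eu Q,r)$ is the weighted fraction of \psets of the discrete family $\Eu Q$ with $f$-value at most $r$. For $0\le j\le n$ set
\[
F_j(r) \;=\; \Pr\bigl[\, f(q_1,\dots,q_j,X_{j+1},\dots,X_n)\le r \,\bigr],
\]
where $q_1,\dots,q_j$ are independent and uniform (by weight) from $Q_1,\dots,Q_j$ and $X_{j+1},\dots,X_n$ are independent with $X_i\sim\mu_i$. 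By independence, $F_0(r)=\hat f(\Eu P,r)$ and $F_n(r)=\tilde f(\Eu Q,r)$, so it suffices to bound $|F_{j-1}(r)-F_j(r)|\le\eps'$ for each $j$ and sum with the triangle inequality.

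For the one-step bound, fix $j$ and condition on all coordinates other than the $j$-th; write $Z=\{q_1,\dots,q_{j-1},X_{j+1},\dots,X_n\}\subset\R^d$ for the resulting set of $n-1$ points. The crucial observation is that $A(Z,r)=\{p\in\R^d : f(Z\cup p)\le r\}$ is, by its very definition, a member of $\Eu A_{f,n}$ --- and this holds for \emph{any} $n-1$ points $Z$, regardless of whether they were instantiated from the $Q_i$'s or drawn from the $\mu_i$'s. This is precisely why $(\mu_i,\Eu A_{f,n})$, rather than $(\mu_i,\Eu A_f)$, is the correct range space: along the hybrid the ``frozen'' point set $Z$ is a mixture of sample points and draws from other distributions, and only $\Eu A_{f,n}$ is rich enough to capture it. Conditioned on $Z$, the $j$-th coordinate contributes $\mu_j(A(Z,r))$ to $F_{j-1}(r)$ and $\phi(Q_j\cap A(Z,r))/\phi(Q_j)$ to $F_j(r)$; since $Q_j$ is an $\eps'$-sample of $(\mu_j,\Eu A_{f,n})$ and $A(Z,r)\in\Eu A_{f,n}$, these two quantities differ by at most $\eps'$.

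Next I would take the expectation over $Z$. Since the joint law of $Z$ is identical under $F_{j-1}$ and $F_j$ --- it involves only indices $\ne j$ --- linearity of expectation gives
\[
|F_{j-1}(r)-F_j(r)| \;=\; \Bigl|\,\mathbb{E}_Z\!\Bigl[\mu_j(A(Z,r)) - \tfrac{\phi(Q_j\cap A(Z,r))}{\phi(Q_j)}\Bigr]\,\Bigr| \;\le\; \mathbb{E}_Z[\eps'] \;=\;\eps' .
\]
Summing over $j=1,\dots,n$ then yields $\bigl|\hat f(\Eu P,r)-\tilde f(\Eu Q,r)\bigr|=|F_0(r)-F_n(r)|\le n\eps'$, which is the claim.

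The genuine obstacle is the modeling step rather than the arithmetic: one must see that, in the hybrid, the event ``$f\le r$ as a function of the $j$-th point (with the other $n-1$ points fixed)'' is exactly membership in a range of $\Eu A_{f,n}$, so that a one-dimensional $\eps'$-sample guarantee on the marginal $\mu_j$ automatically upgrades to a bound on the full joint probability. Once this identification is in place, the remaining ingredients --- the hybrid decomposition, the conditioning, and the $n$-fold triangle inequality --- are routine; the only technical point worth noting is that each $A(Z,r)$ is Lebesgue-measurable, so that $\mu_j(A(Z,r))$ is well defined, which is guaranteed because $\Eu A_{f,n}$ is defined to consist of measurable sets.
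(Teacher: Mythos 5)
Your proof is correct and is essentially the paper's own argument: the paper peels off one coordinate at a time inside the iterated integral, using that the innermost event is membership in a range $A(Z,r)\in\Eu A_{f,n}$ (with $Z$ a mix of already-discretized sample points and still-continuous coordinates) and applying the $\eps'$-sample guarantee, accumulating error $\eps'$ per coordinate --- precisely your hybrid/telescoping decomposition, merely written as integral manipulations rather than conditional expectations. Your two-sided one-step bound even streamlines the paper's separate remark that a symmetric lower bound follows analogously.
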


\paragraph{Smallest axis-aligned bounding box by perimeter.}
Given a set of points $P \subset \R^2$, let $f(P)$ represent the perimeter of the smallest axis-aligned box that contains $P$.  Let each $\mu_i$ be a bivariate normal distribution with constant variance.  Solving $f(P)$ is an LP-type problem with combinatorial dimension $\beta = 4$, and as such, we can describe the basis $B$ of a set $P$ as the points with minimum and maximum $x$- and $y$-coordinates.  Given any additional point $p$, the perimeter of size $\rho$ can only be increased to a value $w$ by expanding the range of $x$-coordinates, $y$-coordinates, or both.  As such, the region of $\R^2$ described by a range $A(P, w) \in \Eu A_{f,n}$ is defined with respect to the bounding box of $P$ from an edge increasing the $x$-width or $y$-width by $(w-\rho)/2$, or from a corner extending so the sum of the $x$ and $y$ deviation is $(w-\rho)/2$.  
See Figure \ref{fig:shape-aabbp}.

Since any such shape defining a range $A(P,w) \in \Eu A_{f,n}$ can be described as the intersection of $k=4$ slabs along fixed axis (at $0^\circ$, $45^\circ$, $90^\circ$, and $135^\circ$), we can construct an $(\eps/n)$-sample $Q_i$ of $(\mu_i, \Eu A_{f,n})$ of size $k = O((n/\eps) \log^{8} (n/\eps))$ in $O((n^6/\eps^6) \log^{27} (n/\eps))$ time~\cite{Phi08}.  From Theorem \ref{thm:n-apx}, it follows that for $\Eu Q = \{Q_1, \ldots, Q_n\}$ and any $r$ we have $\left| \hat f(\Eu X, r) - \tilde f(\Eu Q, r) \right| \leq \eps$.

We can then apply Theorem \ref{thm:gen-comb-dist} to build an $\eps$-quantization of $f(\Eu X)$ in $O((nk)^5) 
= O(((n^2/\eps) \log^8 (n/\eps))^5) 
= O((n^{10}/\eps^5) \log^{40} (n/\eps))$ time.  
The size can be reduced to $O(1/\eps)$ within that time bound. 

\begin{corollary}
Let $\Eu P = \{P_1, \ldots, P_n\}$ be a set of indecisive points where each $P_i \sim \mu_i$ is bivariate normal with constant variance.  Let $f$ measure the perimeter of the smallest enclosing axis-aligned bounding box.
We can create an $\eps$-quantization of $f(\Eu P)$ in $O((n^{10}/\eps^5) \log^{40} (n/\eps))$ time  of size $O(1/\eps)$.
\label{cor:n-apx-aabbp}
\end{corollary}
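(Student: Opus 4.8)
The statement is essentially the bookkeeping step that assembles the pieces developed immediately above it, so the plan is to chain those together carefully. First I would recall that $f$, the perimeter of the smallest enclosing axis-aligned box, is an LP-type problem of combinatorial dimension $\beta = 4$, and that — as argued in the paragraph preceding the corollary — every range $A(Z,w) \in \Eu A_{f,n}$ is an intersection of four slabs with normals fixed at $0^\circ$, $45^\circ$, $90^\circ$, $135^\circ$. Hence each $(\mu_i, \Eu A_{f,n})$ falls under the axis-parallel $\eps$-sample construction of~\cite{Phi08}, which, run with error parameter $\eps/n$, yields an $(\eps/n)$-sample $Q_i$ of size $k = O((n/\eps)\log^{8}(n/\eps))$ in time $O((n^6/\eps^6)\log^{27}(n/\eps))$; I would do this for every $i$.

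Set $\Eu Q = \{Q_1, \ldots, Q_n\}$. By Theorem~\ref{thm:n-apx} applied with $\eps' = \eps/n$, for every threshold $r$ we get $|\hat f(\Eu P, r) - \tilde f(\Eu Q, r)| \le \eps' n = \eps$, so an \emph{exact} quantization of $f$ over the indecisive set $\Eu Q$ is already an $\eps$-quantization of $f$ over $\Eu P$. Now invoke Theorem~\ref{thm:gen-comb-dist} with $\beta = 4$: the exact distribution of $f(\Eu Q)$ is built in $O((nk)^{\beta+1}) = O((nk)^5)$ time, and substituting $nk = O((n^2/\eps)\log^{8}(n/\eps))$ gives $O((n^{10}/\eps^5)\log^{40}(n/\eps))$, which dominates the sampling cost and is therefore the overall running time. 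Finally, I would shrink the quantization to $O(1/\eps)$ points by keeping $2/\eps$ evenly spaced points in sorted order, exactly as in the last paragraph of the proof of Theorem~\ref{thm:eq-main}; this at most doubles the error, absorbed by a constant rescaling of $\eps$, and is carried out within the stated time bound.

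The one genuine subtlety, and the place I would be most careful, is the hypothesis ``bivariate normal with constant variance'': a Gaussian has unbounded support and so does not literally satisfy the compact-support requirement of Section~\ref{sec:distributions}. I would deal with this up front by truncating each $\mu_i$ to a disk holding all but an $O(\eps)$-fraction of its mass, as in Appendix~\ref{sec:eps-dist}; this perturbs $\hat f(\Eu P, r)$ by only $O(\eps)$ and leaves all sample sizes and construction times polynomial in $n$ and $1/\eps$, so after a constant rescaling of $\eps$ the claimed bounds are unchanged. The remaining checks — that the four-slab description of $\Eu A_{f,n}$ for the perimeter box is correct, and that~\cite{Phi08} applies with the effective ``dimension'' equal to the constant number of slab orientations — are routine given the development preceding the corollary.
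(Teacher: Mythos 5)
Your proposal matches the paper's own argument step for step: the four-slab characterization of $\Eu A_{f,n}$, the $(\eps/n)$-samples of size $O((n/\eps)\log^{8}(n/\eps))$ via~\cite{Phi08}, Theorem~\ref{thm:n-apx} to bound the error by $\eps$, Theorem~\ref{thm:gen-comb-dist} with $\beta=4$ giving $O((nk)^5)=O((n^{10}/\eps^5)\log^{40}(n/\eps))$, and the final reduction to $O(1/\eps)$ size. Your truncation remark is harmless but already subsumed by the cited construction, which is stated directly for normal distributions.
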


\drieplaatjes {shape-aabbp} {shape-sebl2} {shape-sebl2-wedges}
{
  (a) A shape from $\Eu{A}_{f,n}$ for axis-aligned bounding box, measured by perimeter.  
  (b) A shape from $\Eu{A}_{f,n}$ for smallest enclosing ball using the $L_2$ metric in $\R^2$. The curves are circular arcs of two different radii. 
  (c) The same shape divided into wedges from $\Eu{W}_{f,n}$.
}

\paragraph{Smallest enclosing disk.}
Given a set of points $P \subset \R^2$, let $f(P)$ represent the radius of the smallest enclosing disk of $P$.  Let each $\mu_i$ be a bivariate normal distribution with constant variance.  Solving $f(P)$ is an LP-type problem with combinatorial dimension $\beta = 3$, and the basis $B$ of $P$ generically consists of either $3$ points which lie on the boundary of the smallest enclosing disk, or $2$ points which are antipodal on the smallest enclosing disk.  However, given an additional point $p \in \R^2$, the new basis $B_p$ is either $B$ or it is $p$ along with $1$ or $2$ points which lie on the convex hull of $P$.

We can start by examining all pairs of points $p_i, p_j \in P$ and the two disks of radius $w$ whose boundary circles pass through them.  If one such disk $D_{i,j}$ contains $P$, then $D_{i,j} \subset A(P, w) \in \Eu A_{f, |P|+1}$.  For this to hold, $p_i$ and $p_j$ must lie on the convex hull of $P$ and no point that lies between them on the convex hull can contribute to such a disk.  Thus there are $O(n)$ such disks.
We also need to examine the disks created where $p$ and one other point $p_i \in P$ are antipodal.  The boundary of the union of all such disks which contain $P$ is described by part of a circle of radius $2w$ centered at some $p_i \in P$.  Again, for such a disk $B_i$ to describe a part of the boundary of $A(P,w)$, the point $p_i$ must lie on the convex hull of $P$.  The circular arc defining this boundary will only connect two disks $D_{i,j}$ and $D_{k,i}$ because it will intersect with the boundary of $B_j$ and $B_k$ within these disks, respectively.  An example of $A(P,w)$ is shown in Figure \ref{fig:shape-sebl2}.

Unfortunately, the range space $(\R^2, \Eu A_{f,n})$ has VC-dimension $O(n)$; it has $O(n)$ circular boundary arcs.  So, creating an $\eps$-sample of $T_i = (\mu_i, \Eu A_{f,n})$ would take time exponential in $n$.  However, we can decompose any range $A(P, w) \in \Eu A_{f,n}$ into at most $2n$ ``wedges.''  We choose one point $y$ inside the convex hull of $P$.  For each circular arc on the boundary of $A(P, w)$ we create a wedge by coning that boundary arc to $y$.  Let $\Eu W_f$ describe all wedge shaped ranges.  Then $S = (\R^2, \Eu W_f)$ has VC-dimension $\nu_S$ at most $9$ since it is the intersection of $3$ ranges (two halfspaces and one disk) that can each have VC-dimension $3$.
We can then create $Q_i$, an $(\eps/2n^2)$-sample of $S_i = (\mu_i, \Eu W_f)$, of size $k = O((n^4/\eps^2) \log (n/\eps))$ in $O((n^2/\eps)^{5+2\cdot9} \log^{2+9} (n/\eps)) = O((n^{46}/\eps^{23}) \log^{11}(n/\eps))$ time, via Corollary \ref{cor:norm-es} (Appendix \ref{sec:eps-dist}).
It follows that $Q_i$ is an $(\eps/n)$-sample of $T_i = (\mu_i, \Eu A_{f,n})$, since any range $A(Z,w) \in \Eu A_{f,n}$ can be decomposed into at most $2n$ wedges, each of which has counting error at most $\eps/2n$, thus the total counting error is at most $\eps$.

Invoking Theorem \ref{thm:n-apx}, it follows that $\Eu Q = \{Q_1, \ldots, Q_n\}$, for any $r$ we have $\left| \hat f(\Eu P, r) - \tilde f(\Eu Q, r) \right| \leq \eps$.   We can then apply Theorem \ref{thm:gen-comb-dist} to build an $\eps$-quantization of $f(\Eu P)$ in $O((nk)^4) = O((n^{20}/\eps^{8}) \log^4 (n/\eps))$ time.   This is dominated by the time for creating the $n$ $(\eps/n^2)$-samples, even though we only need to build one and then translate and scale to the rest.  
Again, the size can be reduced to $O(1/\eps)$ within that time bound.  

\begin{corollary}
Let $\Eu P = \{P_1, \ldots, P_n\}$ be a set of indecisive points where each $P_i \sim \mu_i$ is bivariate normal with constant variance.  Let $f$ measure the radius of the smallest enclosing disk.
We can create an $\eps$-quantization of $f(\Eu P)$ in $O((n^{46}/\eps^{23}) \log^{11} (n/\eps))$ time  of size $O(1/\eps)$.
\label{cor:n-apx-seb2}
\end{corollary}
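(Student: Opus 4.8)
The plan is to run the two–phase strategy of this section on the particular measure ``radius of the smallest enclosing disk'': first replace every continuous $\mu_i$ by a finite point set $Q_i$ that is a fine enough $\eps'$-sample of the right range space, then apply the exact indecisive-point algorithm of Theorem~\ref{thm:gen-comb-dist} to $\Eu Q=\{Q_1,\dots,Q_n\}$, and finally compress the resulting quantization with the subsampling trick from the proof of Theorem~\ref{thm:eq-main}. The glue is Theorem~\ref{thm:n-apx}: since $f$ is LP-type of combinatorial dimension $\beta=3$, if each $Q_i$ is an $\eps'$-sample of $(\mu_i,\Eu A_{f,n})$ then $|\hat f(\Eu P,r)-\tilde f(\Eu Q,r)|\le \eps' n$ for all $r$, so it suffices to take $\eps'=\eps/n$.

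The crux is to describe $\Eu A_{f,n}$ concretely enough to sample from it efficiently. A range $A(Z,w)\in\Eu A_{f,n}$ is the set of points $p$ with $f(Z\cup p)\le w$, i.e.\ the union of all radius-$w$ disks that contain $Z$; I would argue (as sketched before the corollary, Figure~\ref{fig:shape-sebl2}) that its boundary is made of only $O(n)$ circular arcs — arcs of radius $w$ coming from the disks $D_{i,j}$ pinned by a pair of convex-hull points of $Z$, together with arcs of radius $2w$ centered at individual hull points, arising from the near-antipodal configurations. The trouble is that this still makes $(\R^2,\Eu A_{f,n})$ have VC-dimension $\Theta(n)$, so a direct $\eps$-sample construction would cost time exponential in $n$; getting around this is the main obstacle.

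To circumvent it I would decompose every range $A(Z,w)$ into at most $2n$ wedges: fix a point $y$ in the convex hull of $Z$ and cone each boundary arc of $A(Z,w)$ to $y$. Since $A(Z,w)$ is star-shaped with respect to $y$, these wedges tile it, and each wedge lies in the family $\Eu W_f$ of intersections of two halfplanes with one disk, so $(\R^2,\Eu W_f)$ has VC-dimension $\nu_S\le 9$. I would then invoke Corollary~\ref{cor:norm-es} (Appendix~\ref{sec:eps-dist}) to build, for each $i$, an $(\eps/2n^2)$-sample $Q_i$ of $(\mu_i,\Eu W_f)$ of size $k=O((n^4/\eps^2)\log(n/\eps))$ in $O((n^{46}/\eps^{23})\log^{11}(n/\eps))$ time — in fact one such sample suffices and the rest follow by an affine change of coordinates. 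Because each range of $\Eu A_{f,n}$ splits into $\le 2n$ wedges, each of which $Q_i$ approximates with counting error $\le\eps/2n^2$, the set $Q_i$ is an $(\eps/n)$-sample of $(\mu_i,\Eu A_{f,n})$, which is precisely what Theorem~\ref{thm:n-apx} requires, yielding $|\hat f(\Eu P,r)-\tilde f(\Eu Q,r)|\le\eps$ for all $r$.

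Finally I would feed $\Eu Q$ — a set of $n$ indecisive point sets of size $k$ — into Theorem~\ref{thm:gen-comb-dist} with $\beta=3$, producing in $O((nk)^4)=O((n^{20}/\eps^8)\log^4(n/\eps))$ time an exact quantization of $f(\Eu Q)$, hence an $\eps$-quantization of $f(\Eu P)$, and then reduce its size to $O(1/\eps)$ by keeping $2/\eps$ evenly spaced points as in Theorem~\ref{thm:eq-main}, all within the same time bound. Summing the costs, the $n$ $\eps$-sample constructions dominate, giving the claimed $O((n^{46}/\eps^{23})\log^{11}(n/\eps))$ running time and $O(1/\eps)$ output size. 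The only genuinely delicate points are (i) confirming the $O(n)$ arc-complexity and star-shapedness of $A(Z,w)$, and (ii) composing the three nested approximation parameters — the $\eps'=\eps/n$ from Theorem~\ref{thm:n-apx}, the extra $1/(2n)$ loss from the wedge decomposition, and the quantization compression — so that they add up to a clean additive $\eps$.
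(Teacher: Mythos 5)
Your proposal is correct and follows essentially the same route as the paper: the same wedge decomposition of $A(Z,w)$ into at most $2n$ ranges of constant VC-dimension, the same $(\eps/2n^2)$-samples via Corollary~\ref{cor:norm-es}, and the same combination of Theorem~\ref{thm:n-apx} with Theorem~\ref{thm:gen-comb-dist} followed by the $O(1/\eps)$ size reduction, with matching bounds throughout. The two points you flag as delicate (the $O(n)$ arc complexity with star-shapedness about a hull point, and the bookkeeping of the nested error parameters) are handled in the paper exactly as you outline.
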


Now that we have seen two concrete examples, we prove Theorem \ref{thm:n-apx}. More examples can be found in Appendix \ref {app:moreexamples}.

\paragraph{Proof of Theorem \ref{thm:n-apx}.}

When each $P_i$ is drawn from a distribution $\mu_i$, then we can write $\hat f(\Eu P, r)$ as the probability that $f(\Eu P) \leq r$ as follows.  Let $1(\cdot)$ be the indicator function, i.e., it is $1$ when the condition is true and $0$ otherwise.
\[
\hat f(\Eu P, r) =
\int_{p_1} \mu_1(p_1) \ldots \int_{p_n} \mu_n(p_n)
1(f(\{p_1, p_2, \ldots, p_n\}) \leq r) \;
d p_n d p_{n-1} \ldots d p_1
\]
Consider the inner most integral
\[
\int_{p_n} \mu_n(p_n)  1(f(\{p_1, p_2, \ldots, p_n\}) \leq r) \; d p_n,
\]
where $\{p_1, p_2 \ldots, p_{n-1}\}$ are fixed.  The indicator function $1(\cdot) = 1$ is true 
when
$
f(\{p_1, p_2, \ldots, p_{n-1}, p_n\}) \leq r,
$
and hence $p_n$ is contained in a shape $A(\{p_1, \ldots, p_{n-1}\}, r) \in \Eu{A}_{f,n}$.
Thus if we have an $\eps^\prime$-sample $Q_n$ for $(\mu_n,\Eu{A}_{f,n})$, then we can guarantee that
\[
\int_{p_n}  \mu_{n}(p_n)  1(f(\{p_1, p_2, \ldots, p_n\}) \leq r) \; d p_n
\leq
\frac{1}{|Q_n|} \sum_{p_n \in Q_{n}} 1(f(\{p_1, p_2, \ldots, p_{n-1}, p_n\}) \leq r) + \eps^\prime.
\]
We can then move the $\eps^\prime$ outside and change the order of the integrals to write:
\[
\hat f(\Eu X, r) \leq  \frac{1}{|Q_n|} \sum_{p_n \in Q_{n}} 
\left( \int_{p_1}  \mu_1(p_1) ..  \int_{p_{n-1}}  \mu_{n-1}(p_{n-1})
1(f(\{p_1,.., p_n\}) \leq r) 
d p_{n-1} .. d {p_1}\right)  +  \eps^\prime.
\]
Repeating this procedure $n$ times we get:
\[
\hat f(\Eu X, r)
\leq
\left(\prod_{i=1}^n \frac{1}{|Q_i|}\right)
 \sum_{p_1 \in Q_1}  \dots \sum_{p_n \in Q_n}
1(f(\{p_1, \ldots, p_n\}) \leq r)
+ \eps^\prime n
 =
\tilde{f}(\Eu Q,r) + \eps^\prime n,
\]
where $\Eu Q = \bigcup_i Q_i$.

Similarly we can achieve a symmetric lower bound for $\hat f(\Eu X,r)$.  \qed

\section {Shape Inclusion Probabilities}

So far, we have been concerned only with computing probability distributions of single-valued functions on a set of points. However, many geometric algorithms produce more than just a single value. In this section, we consider dealing with uncertainty when computing a two-dimensional shape (such as the convex hull, or MEB) of a set of points directly.

\subsection{Randomized Algorithms}
We can also use a variation of Algorithm \ref{alg:rand-draw} to construct $\eps$-shape inclusion probability functions.
For a point set $Q \subset \R^d$, let the summarizing shape $S_Q = \Eu{S}(Q)$ be from some geometric family $\Eu{S}$ so $(\R^d, \Eu{S})$ has bounded VC-dimension $\nu$.
We randomly sample $m$ point sets $\Eu{Q} = \{Q_1, \ldots, Q_m\}$ each from $\mu_P$ and then find the summarizing shape $S_{Q_j} = \Eu{S}(Q_j)$ (e.g. minimum enclosing ball) of each $Q_j$.  Let this set of shapes be $S^{\Eu{Q}}$.  If there are multiple shapes from $\Eu{S}$ which are equally optimal, choose one of these shapes at random.
For a set of shapes $S^\prime \subseteq \Eu{S}$, let $S^\prime_p \subseteq S^\prime$ be the subset of shapes that contain $p \in \R^d$.
We store $S^{\Eu{Q}}$ and evaluate a query point $p \in \R^d$ by counting what fraction of the shapes the point is contained in, specifically returning $|S^{\Eu{Q}}_p| / |S^{\Eu{Q}}|$ in $O(\nu |S^{\Eu{Q}}|) = O(\nu m)$ time.  In some cases, this evaluation can be sped up with point location data structures.

To state the main theorem most cleanly, for a range space $(\R^d, \Eu{S})$, denote its dual range space as $(\Eu{S}, P^*)$ where $P^*$ is all subsets $\Eu{S}_p \subseteq \Eu{S}$, for any point $p \in \R^d$, such that $\Eu{S}_p = \{S \in \Eu{S} \mid p \in S\}$.  Recall that the VC-dimension $\nu$ of $(\Eu{S}, P^*)$ is at most $2^{\nu^\prime+1}$ where $\nu^\prime$ is the VC-dimension of $(\R^d, \Eu{S})$, but is typically much smaller.  

\begin{theorem}
Consider a family of summarizing shapes $\Eu{S}$ with dual range space $(\Eu{S}, P^*)$ with VC-dimension $\nu$, and where it takes $T_{\Eu{S}}(n)$ time to determine the summarizing shape $\Eu{S}(Q)$ for any point set $Q \subset \R^d$ of size $n$.
For a distribution $\mu_P$ of a point set of size $n$, with probability at least $1-\delta$,
we can construct an $\eps$-\sip function of size $O((\nu/\eps^2) (\nu +  \log (1/\delta)))$
and in time $O(T_{\Eu{S}}(n)(1/\eps^2) \log (1/\delta))$.
\label{thm:rand-sip}
\end{theorem}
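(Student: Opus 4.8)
The plan is to reduce the SIP construction to an $\eps$-sample statement in the \emph{dual} range space, mirroring the argument for Theorem~\ref{thm:eq-main} but with the roles of points and ranges swapped. First I would recall the setup: a random \pset $Q_j$ drawn from $\mu_P$ has a (well-defined, up to a random tie-break) summarizing shape $S_{Q_j} = \Eu{S}(Q_j) \in \Eu{S}$. Since $Q_j$ is a random sample from $\mu_P$, the shape $S_{Q_j}$ is a random draw from the induced distribution $\sigma$ over $\Eu{S}$, and the SIP function is exactly $s_{\mu_P}(q) = \Pr_{S \sim \sigma}[q \in S] = \sigma(\Eu{S}_q)$, where $\Eu{S}_q = \{S \in \Eu{S} \mid q \in S\}$ is the dual range at $q$. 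The evaluation procedure returns $\hat s(q) = |S^{\Eu{Q}}_q| / |S^{\Eu{Q}}| = |S^{\Eu{Q}} \cap \Eu{S}_q| / m$, i.e. the empirical frequency of the dual range $\Eu{S}_q$ under the sample $S^{\Eu{Q}}$.

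The key step is then: $S^{\Eu{Q}} = \{S_{Q_1}, \ldots, S_{Q_m}\}$ is $m$ i.i.d.\ draws from $\sigma$, so by the standard randomized $\eps$-sample construction applied to the range space $(\Eu{S}, P^*)$ weighted by $\sigma$ — which by hypothesis has VC-dimension $\nu$ — a sample of size $m = O((1/\eps^2)(\nu + \log(1/\delta)))$ is, with probability at least $1-\delta$, an $\eps$-sample of $(\sigma, P^*)$. By definition of $\eps$-sample, this gives $\big|\,|S^{\Eu{Q}} \cap \Eu{S}_q|/m - \sigma(\Eu{S}_q)\,\big| \le \eps$ simultaneously for \emph{every} $q \in \R^d$ (every dual range), which is precisely $|\hat s(q) - s_{\mu_P}(q)| \le \eps$ for all $q$. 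For the size and time bounds: we store the $m$ shapes, each of which is described by $O(\nu)$ parameters (a summarizing shape in a family of VC-dimension $\nu$ is pinned down by $O(\nu)$ of the input points, hence $O(\nu)$ words), giving size $O((\nu/\eps^2)(\nu + \log(1/\delta)))$; the construction cost is $m$ calls to the summarizing-shape routine, $O(T_{\Eu{S}}(n) \cdot (1/\eps^2)(\nu + \log(1/\delta)))$, which we can loosely write as $O(T_{\Eu{S}}(n)(1/\eps^2)\log(1/\delta))$ when $\nu$ is treated as constant.

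The one subtlety — and the step I expect to require the most care — is the tie-breaking convention when $\Eu{S}(Q_j)$ is not unique. The footnote's convention ("each optimal shape is equally likely") must be applied consistently so that $S_{Q_j}$ is genuinely a draw from a fixed distribution $\sigma$ on $\Eu{S}$; then $s_{\mu_P}$ is well-defined and the i.i.d.\ argument goes through verbatim. A secondary point worth a sentence is that we want the $\eps$-sample guarantee in the dual space $(\Eu{S}, P^*)$, not the primal $(\R^d, \Eu{S})$: we invoke it directly with $\nu$ being the VC-dimension of the dual (the theorem statement already supplies this), rather than passing through the $2^{\nu'+1}$ bound, which would only weaken the constant. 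Everything else is a routine application of the union-free, distribution-free $\eps$-sample theorem already cited in Section~\ref{sec:prelim}.
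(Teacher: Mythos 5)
Your proposal is correct and follows essentially the same argument as the paper: the sampled shapes form i.i.d.\ draws from the induced distribution over $\Eu{S}$, the standard randomized $\eps$-sample bound in the dual range space $(\Eu{S}, P^*)$ of VC-dimension $\nu$ gives uniform error $\eps$ with probability $1-\delta$, and the size and time bounds follow from storing the $m$ shapes and making $m$ calls to the summarizing-shape routine. Your added remarks on tie-breaking and on invoking the dual VC-dimension directly match the paper's footnote convention and statement, so there is nothing substantive to change.
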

\begin{proof}
Using the above algorithm, sample $m = O((1/\eps^2) (\nu + \log (1/\delta)))$ point sets $Q$ from $\mu_P$ and generate the $m$ summarizing shapes $S_Q$.
Each shape is a random sample from $\Eu{S}$ according to $\mu_P$, and thus $S^{\Eu{Q}}$ is an $\eps$-sample of $(\Eu{S}, P^*)$.

Let $w_{\mu_P}(S)$, for $S \in \Eu{S}$, be the probability that $S$ is the summarizing shape of a point set $Q$ drawn randomly from $\mu_P$.
For any $\Eu{S}^\prime \subseteq P^*$, let $W_{\mu_P}(\Eu{S}^\prime) = \int_{S \in \Eu{S}^\prime} w_{\mu_P}(S) \, dS$ be the probability that some shape from the subset $\Eu{S}^\prime$ is the summarizing shape of $Q$ drawn from $\mu_P$.

We approximate the \sip function at $p \in \R^d$ by returning the fraction $|S^{\Eu{Q}}_p| / m$.
The true answer to the \sip function at $p \in \R^d$ is $W_{\mu_P}(\Eu{S}_p)$.
Since $S^{\Eu{Q}}$ is an $\eps$-sample of $(\Eu{S}, P^*)$, then with probability at least $1-\delta$
\[
\left| \frac{|S^{\Eu{Q}}_p|}{m} - \frac{W_{\mu_P}(\Eu{S}_p)}{1} \right|
=
\left| \frac{|S^{\Eu{Q}}_p|}{|S^{\Eu{Q}}|} - \frac{W_{\mu_P}(\Eu{S}_p)}{W_{\mu_P}(P^*)} \right|
\leq \eps.  \qedhere
\]
\end{proof}

\paragraph{Representing $\eps$-\sip functions by isolines.}

\vierplaatjes [height=120pt]
{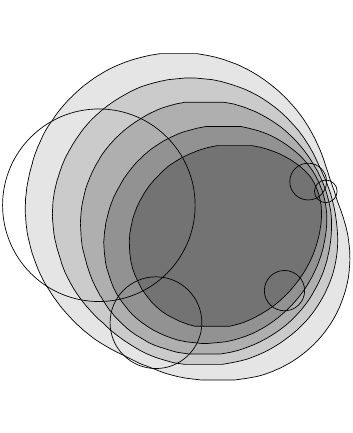} {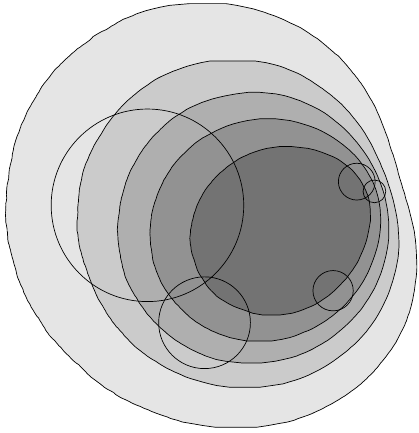} 
{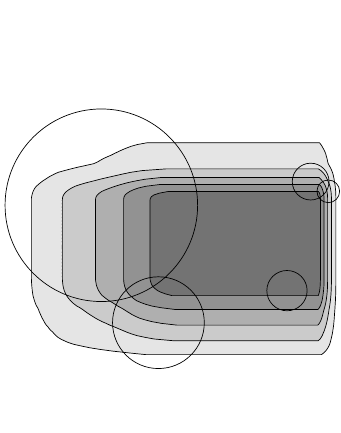} {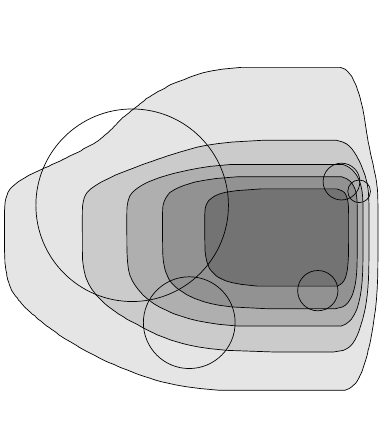}
{ \label {fig:sip-seb} 
  The \sip for the smallest enclosing ball (a,b) or smallest enclosing 
  axis-aligned rectangle (c,d), for uniformly (a,c) or normally (b,d) 
  distributed points. Isolines are drawn for $p\in \{0.1,0.3,0.5,0.7,0.9\}$.
}

Shape inclusion probability functions are density functions.  A convenient way of visually representing a density function in $\R^2$ is by drawing the isolines.  A \emph{$\gamma$-isoline} is a collection of closed curves bounding the regions of the plane where the density function is greater than $\gamma$.

In each part of Figure \ref{fig:sip-seb} a set of 5 circles correspond to points with a probability distribution.   In part (a,c) the probability distribution is uniform over the inside of the circles.  In part (b,d) it is drawn from a normal distribution with standard deviation given by the radius.  We generate $\eps$-\sip functions for the smallest enclosing ball in Figure \ref{fig:sip-seb}(a,b) and for the smallest axis-aligned rectangle in Figure \ref{fig:sip-seb}(c,d).

In all figures we draw approximations of $\{.9,.7,.5,.3,.1\}$-isolines.
These drawing are generated by randomly selecting $m = 5000$ (Figure \ref{fig:sip-seb}(a,b)) or $m=25000$ (Figure \ref{fig:sip-seb}(c,d)) shapes, counting the number of inclusions at different points in the plane and interpolating to get the isolines.
  The innermost and darkest region has probability $> 90\%$, the next one probability $> 70\%$, etc., the outermost region has probability $< 10 \%$.

\subsection {Deterministic Algorithms}

We can also adapt the deterministic algorithms presented in Section \ref{sec:pta} to deterministically create SIP functions for indecisive (or $\eps$-SIPs for many uncertain) points.  
We again restrict our attention to a class of LP-type problems, specifically, problems where the output is a (often minimal) summarizing shape $S(Q)$ of a data set $Q$ where the boundaries of the two shapes $S(Q)$ and $S(Q')$ intersect in at most a constant number of locations.  An example is the smallest enclosing disk in the plane, where the circles on the boundaries of any two disks intersect at most twice.  

As in Section \ref{sec:pta}, this problem has a constant combinatorial dimension $\beta$ and possible locations of indecisive points can be labeled inside or outside $S(Q)$ using a full violation test.  This implies, following Algorithm \ref{alg:count-prob}, that we can enumerate all $O((nk)^\beta)$ potential bases, and for each determine its weight towards the SIP function using the full violation test.  This procedure generates a set of $O((nk)^\beta)$ weighted shapes in $O((nk)^{\beta+1})$ time.  Finally, a query to the SIP function can be evaluated by counting the weighted fraction of shapes that it is contained in.  

We can also build a data structure to speed up the query time.  Since the boundary of each pair of shapes intersects a constant number of times, then all $O((nk)^{2\beta})$ pairs intersect at most $O((nk)^{2\beta})$ times in total.  In $\b{R}^2$, the arrangement of these shapes forms a planar subdivision with as many regions as intersection points.  We can precompute the weighted fraction of shapes overlapping on each region.  Textbook techniques can be used to build a query structure of size $O((nk)^{2\beta})$ that allows for stabbing queries in time $O(\log(nk))$ to determine which region the query lies, and hence what the associated weighted fraction of points is, and what the SIP value is.  
We summarize these results in the following theorem.  

\begin{theorem}
Consider a set $\Eu P$ of $n$ indecisive point sets of size $k$ each, and an LP-type problem $f_S : \Eu P \to \R$ with combinatorial dimension $\beta$ that finds a summarizing shape for which every pair intersects a constant number of times.  
We can create a data structure of size $O((nk)^\beta)$ in time $O((nk)^{\beta+1})$ that answers SIP queries exactly in $O((nk)^\beta)$ time, 
and another structure of size $O((nk)^{2\beta})$ in time $O((nk)^{2\beta})$ that answers SIP queries exactly in $O(\log(nk))$ time.  
\end{theorem}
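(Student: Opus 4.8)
The plan is to flesh out the three ingredients sketched before the statement: enumeration of potential bases, computation of the basis weights via the full violation test, and an arrangement-based point-location structure for fast queries. First I would pin down the combinatorial structure: since $f_S$ is LP-type of combinatorial dimension $\beta$, every support $Q \Subset \Eu P$ has a basis $B_Q \subseteq Q$ with $|B_Q| \leq \beta$, and the summarizing shape satisfies $S(Q) = S(B_Q)$; hence all shapes that can ever occur lie among the $O((nk)^\beta)$ \emph{potential bases} (sets of at most $\beta$ points, each from a distinct $P_i$). The identity underlying everything is that the SIP value at a query point $q$ equals $\sum_B \mu(B) \cdot 1(q \in S(B))$, where $\mu(B)$ is the total probability over supports that $B = B_Q$: the events $\{B_Q = B\}$ partition the probability space of supports (adopting the footnote convention that degenerate ties are broken uniformly), and $q$ lies in the summarizing shape of $Q$ exactly when $q \in S(B_Q)$.

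Second, I would show $\mu(B)$ is computable in $O(nk)$ time per potential basis: for each $P_i$ disjoint from $B$, sum the weights $w(p_{i,j})$ of the members $p_{i,j}$ that do not violate $B$ --- detected in $O(1)$ by the full violation test of Section \ref{sec:pta} --- and multiply these $n$ quantities together with the weights of the points of $B$ and the $1/k^n$ normalization, exactly as in Algorithm \ref{alg:count-prob} with $f$ replaced by the shape-valued $f_S$. Running this over all $O((nk)^\beta)$ potential bases produces a set of $O((nk)^\beta)$ weighted shapes in $O((nk)^{\beta+1})$ total time, which is the first data structure: a query $q$ is answered by testing containment of $q$ in each stored shape ($O(1)$ each, using the violation test) and summing the weights of the containing shapes, giving $O((nk)^\beta)$ query time and $O((nk)^\beta)$ space.

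Third, for the logarithmic-query structure in $\R^2$, I would build the arrangement $\Eu A$ of the boundaries of these $O((nk)^\beta)$ shapes. By hypothesis any two summarizing-shape boundaries meet in $O(1)$ points, so $\Eu A$ has $O((nk)^{2\beta})$ vertices, edges, and faces, and can be constructed in that bound by standard arrangement construction. I would then label every face with the total weight of the shapes covering it: compute this weight for one arbitrary face directly in $O((nk)^\beta)$ time, then traverse the dual graph of $\Eu A$, updating the running weight by $+w(S)$ or $-w(S)$ whenever an edge lying on $\partial S$ is crossed, so all $O((nk)^{2\beta})$ faces are labeled in $O(1)$ amortized time each. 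Storing a textbook planar point-location structure of size $O((nk)^{2\beta})$ over $\Eu A$ then answers a SIP query by locating $q$'s face in $O(\log(nk))$ time and reading off its precomputed weighted fraction.

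The only nonroutine step, and hence the main obstacle, is the correctness of the weight computation in the presence of degeneracies: one must verify that summing the ``do-not-violate'' counts over \emph{all} potential bases --- including those that are the basis of no support --- reconstructs the SIP value with each support counted exactly once and at the right weight. This is precisely where \textsc{monotonicity}, \textsc{locality}, the $O(1)$ full violation test, and the uniform tie-breaking convention must be combined, and it mirrors the argument already used for Theorem \ref{thm:gen-comb-dist}; the arrangement complexity bound, the weight-propagation traversal, and the point-location black box are all standard.
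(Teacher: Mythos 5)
Your proposal reconstructs the paper's argument essentially verbatim: enumerate the $O((nk)^\beta)$ potential bases and weight them via the full violation test as in Algorithm \ref{alg:count-prob}, giving the first structure, and then build the arrangement of the $O((nk)^\beta)$ shape boundaries (which has $O((nk)^{2\beta})$ complexity by the constant-pairwise-intersection hypothesis), label faces with cumulative weights, and use planar point location for logarithmic queries. The extra details you add — the explicit identity $s_{\mu_P}(q) = \sum_B \mu(B)\,1(q \in S(B))$, the dual-graph sweep for face labeling, and the pointer to the degeneracy/tie-breaking convention — are correct and faithful elaborations of what the paper only sketches.
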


Furthermore, through specific invocations of Theorem \ref{thm:n-apx}, we can extend these polynomial deterministic approaches to create $\eps$-SIP data structures for many natural classes of uncertain input point sets.

\section {Conclusions}

In this paper, we studied the computation and representation of complete probability distributions on the output of single- and multi-valued geometric functions, when the input points are uncertain. 
We considered randomized and deterministic, exact and approximate approaches for indecisive and probabilistic uncertain points, and presented polynomial-time algorithms as well as hardness results.
These results extend to when the output distribution is over the family of low-description-complexity summarizing shapes. 

We draw two main conclusions.
Firstly, we observe that the tractability of exact computations on indecisive points really depends on the problem at hand. On the one hand, the output distribution of LP-type problems can be represented concisely and computed efficiently. On the other hand, even computing a single value of the output distribution of the diameter problem is already \#P-hard.
Secondly, we showed that computing approximate quantizations deterministically is often possible in polynomial time. However, the polynomials in question are of rather high degree, and while it is conceivable that these degrees can be reduced further, this will require some new ideas. In the mean time, the randomized alternatives remain more practical.

We believe that the problem of representing and approximating distributions of more complicated objects (especially when they are somehow representing uncertain points), is an important direction for further study.

\section*{Acknowledgments}
The authors would like to thank Joachim Gudmundsson and Pankaj Agarwal for helpful discussions in early phases of this work, Sariel Har-Peled for discussions about wedges, and Suresh Venkatasubramanian for organizational tips.  

This research was partially supported by the Netherlands Organisation for
Scientific Research (NWO) through the project GOGO and under grant 639.021.123,
by the Office of Naval Research under MURI grant N00014-08-1-1015,
and by a subaward to the University of Utah under NSF award 0937060 to CRA.

\newpage
\bibliographystyle{abuser}
\bibliography{uncert,refs}

\newpage
\appendix

\section{$\eps$-Samples of Distributions}
\label{sec:eps-dist}
In this section we explore conditions for continuous distributions such that they can be approximated with bounded error by discrete distributions (point sets).  We state specific results for multi-variate normal distributions.  

We say a subset $W \subset \R^d$ is \emph{polygonal approximable} if there exists a polygonal shape $S \subset \R^d$ with $m$ facets such that $\phi(W \setminus S) + \phi(S \setminus W) \leq \eps \phi(W)$ for any $\eps > 0$.  Usually, $m$ is dependent on $\eps$, for instance for a d-variate normal distribution $m = O((1/\eps^{d+1}) \log (1/\eps))$~\cite{Phi08,Phi09}.  In turn, such a polygonal shape $S$ describes a continuous point set where $(S, \Eu{A})$ can be given an $\eps$-sample $Q$ using $O((1/\eps^2) \log (1/\eps))$ points if $(S, \Eu{A})$ has bounded VC-dimension \cite{Mat99} or using $O((1/\eps) \log^{2k} (1/\eps))$ points if $\Eu{A}$ is defined by a constant $k$ number of directions \cite{Phi08}.  For instance, where $\Eu{A} = \Eu{B}$ is the set of all balls then the first case applies, and when $\Eu{A} = \Eu{R}_2$ is the set of all axis-aligned rectangles then either case applies.

A shape $W \subset \R^{d+1}$ may describe a distribution $\mu : \R^d \to [0,1]$.
For instance for a range space $(\mu, \Eu{B})$, then the range space of the associated shape $W_\mu$ is $(W_\mu, \Eu{B} \times \R)$ where $\Eu{B} \times \R$ describes balls in $\R^d$ for the first $d$ coordinates and any points in the $(d+1)$th coordinate.

The general scheme to create an $\eps$-sample for $(S, \Eu{A})$, where $S \in \R^d$ is a polygonal shape, is to use a lattice $\Lambda$ of points.
A \emph{lattice} $\Lambda$ in $\R^d$ is an infinite set of points defined such that for $d$ vectors $\{v_1, \ldots, v_d\}$ that form a basis, for any point $p \in \Lambda$, $p + v_i$ and $p - v_i$ are also in $\Lambda$ for any $i \in [1,d]$.
We first create a discrete $(\eps/2)$-sample $M = \Lambda \cap S$ of  $(S, \Eu{A})$ and then create an $(\eps/2)$-sample $Q$ of $(M, \Eu{A})$ using standard techniques~\cite{CM96,Phi08}.  Then $Q$ is an $\eps$-sample of $(S, \Eu{A})$.
For a shape $S$ with $m$ $(d-1)$-faces on its boundary, any subset $A^\prime \subset \R^d$ that is described by a subset from $(S, \Eu{A})$ is an intersection $A^\prime = A \cap S$ for some $A \in \Eu{A}$.  Since $S$ has $m$ $(d-1)$-dimensional faces, we can bound the VC-dimension of $(S, \Eu{A})$ as $\nu = O((m + \nu_{\Eu{A}}) \log (m+\nu_{\Eu{A}}))$ where $\nu_{\Eu{A}}$ is the VC-dimension of $(\R^d, \Eu{A})$.
Finally the set $M = S \cap \Lambda$ is determined by choosing an arbitrary initial origin point in $\Lambda$ and then uniformly scaling all vectors $\{v_1, \ldots, v_d\}$ until $|M| = \Theta((\nu/\eps^2) \log (\nu/\eps))$~\cite{Mat99}.  This construction follows a less general but smaller construction in Phillips~\cite{Phi08}.

It follows that we can create such an $\eps$-sample $K$ of $(S, \Eu A)$ of size $|M|$ in time $O(|M| m \log |M|)$ by starting with a scaling of the lattice so a constant number of points are in $S$ and then doubling the scale until we get to within a factor of $d$ of $|M|$.  If there are $n$ points inside $S$, it takes $O(nm)$ time to count them.  We can then take another $\eps$-sample of $(K, \Eu A)$ of size
$O((\nu_{\Eu A}/\eps^2) \log (\nu_{\Eu A} /\eps))$
 in time
 $O(\nu_{\Eu A}^{3 \nu_{\Eu A}} |M| ((1/\eps^2) \log (\nu_{\Eu A} / \eps))^{\nu_{\Eu A}})$.

\begin{theorem}
For a polygonal shape $S \subset \R^d$ with $m$ (constant size) facets, we can construct an $\eps$-sample for $(S, \Eu{A})$ of size
$O((\nu/\eps^2) \log (\nu/\eps))$
in time
$O(m (\nu/\eps^2) \log^2 (\nu/\eps))$,
where $(S, \Eu{A})$ has VC-dimension $\nu_{\Eu A}$ and
$\nu = O((\nu_{\Eu{A}}+m) \log (\nu_{\Eu{A}}+m))$.

This can be reduced to size
$O((\nu_{\Eu A}/\eps^2) \log (\nu_{\Eu A}/\eps))$
in time
$O((1/\eps^{2\nu_{\Eu A} + 2}) (m+\nu_{\Eu A}) \log^{\nu_{\Eu A}}((m + \nu_{\Eu A})/\eps) \log (m/\eps)$.
\label{lem:mu-eps}
\end{theorem}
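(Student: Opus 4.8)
The plan is to construct the $\eps$-sample for $(S,\Eu A)$ in two stages, first replacing the continuous polygonal region $S$ by a finite lattice point set, then sparsifying that point set by a standard $\eps$-sample construction. First I would observe that since $S$ has $m$ $(d-1)$-dimensional facets and $(\R^d,\Eu A)$ has VC-dimension $\nu_{\Eu A}$, every range of $(S,\Eu A)$ is of the form $A\cap S$ with $A\in\Eu A$; such sets are cut out by the $m$ facet-hyperplanes of $S$ together with the boundary of $A$, so by the standard composition bound for VC-dimension (intersections/unions of simply-describable sets) the range space $(S,\Eu A)$ has VC-dimension $\nu = O((\nu_{\Eu A}+m)\log(\nu_{\Eu A}+m))$. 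This is the $\nu$ appearing in the statement.

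Next I would put down a lattice $\Lambda$ with a chosen origin and scale the defining basis vectors $\{v_1,\dots,v_d\}$ uniformly. The key quantitative fact is that for a convex (or bounded-complexity) body, a sufficiently fine lattice restricted to the body is an $(\eps/2)$-sample of $(S,\Eu A)$ once it contains $\Theta((\nu/\eps^2)\log(\nu/\eps))$ points — this is the lattice $\eps$-sample result cited from Matou\v sek and used in Phillips. So I would double the lattice scale starting from a constant number of points inside $S$ until $|M| = |S\cap\Lambda| = \Theta((\nu/\eps^2)\log(\nu/\eps))$; each doubling step requires counting lattice points inside $S$, which takes $O(|M|\,m)$ time by testing against the $m$ facets, and only $O(\log|M|)$ doublings are needed, giving $O(|M|\,m\log|M|) = O(m\,(\nu/\eps^2)\log^2(\nu/\eps))$ time for the first bound. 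Then $M$ itself is the claimed $\eps$-sample (taking the $(\eps/2)$ in the lattice step and noting we can simply stop here for the first, larger bound).

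For the second, smaller bound I would take a further $(\eps/2)$-sample $Q$ of the \emph{discrete} range space $(M,\Eu A)$ using the deterministic construction of Chazelle--Matou\v sek (or Matou\v sek), which on $|M|$ points produces an $\eps$-sample of size $O((\nu_{\Eu A}/\eps^2)\log(\nu_{\Eu A}/\eps))$ — note the VC-dimension here is $\nu_{\Eu A}$, not $\nu$, since once $S$ is fixed the boundary of $S$ no longer contributes — in time $O(\nu_{\Eu A}^{3\nu_{\Eu A}}\,|M|\,((1/\eps^2)\log(\nu_{\Eu A}/\eps))^{\nu_{\Eu A}})$. Substituting $|M| = O((\nu/\eps^2)\log(\nu/\eps))$ and absorbing $\nu = O((m+\nu_{\Eu A})\log(m+\nu_{\Eu A}))$ (treating $m$, $\nu_{\Eu A}$ as the relevant parameters) gives the stated time $O((1/\eps^{2\nu_{\Eu A}+2})(m+\nu_{\Eu A})\log^{\nu_{\Eu A}}((m+\nu_{\Eu A})/\eps)\log(m/\eps))$. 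By the triangle inequality for $\eps$-samples ($\eps$-sample of an $\eps$-sample is a $2\eps$-sample), $Q$ is an $\eps$-sample of $(S,\Eu A)$, and has the claimed size.

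The main obstacle I expect is justifying cleanly that the lattice intersection $M = \Lambda\cap S$ is genuinely an $(\eps/2)$-sample of the \emph{continuous} range space $(S,\Eu A)$ at the stated cardinality — this is where the geometry of $S$ (bounded facet complexity, and effectively convex or star-shaped pieces so that any range $A\cap S$ has boundary of controlled complexity) really matters, and it is the step that relies most heavily on the cited lattice-sampling machinery rather than on elementary arguments. The bookkeeping to verify that the $\log$ exponents and powers of $1/\eps$ come out exactly as written is routine but tedious; the conceptual content is entirely in the VC-dimension composition bound and the lattice $\eps$-sample guarantee.
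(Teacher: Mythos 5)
Your proposal is correct and follows essentially the same route as the paper: the VC-dimension composition bound $\nu = O((m+\nu_{\Eu A})\log(m+\nu_{\Eu A}))$ for $(S,\Eu A)$, a uniformly scaled lattice $M=\Lambda\cap S$ grown by doubling (with $O(|M|m)$ counting per step) until $|M|=\Theta((\nu/\eps^2)\log(\nu/\eps))$ gives the first bound via the cited lattice-sampling result, and a further deterministic Chazelle--Matou\v{s}ek $(\eps/2)$-sample of the discrete range space $(M,\Eu A)$ with VC-dimension $\nu_{\Eu A}$ plus the triangle inequality gives the second. The step you flag as the main obstacle (that $\Lambda\cap S$ is an $(\eps/2)$-sample of the continuous range space at that cardinality) is handled in the paper exactly as you propose, by appeal to the cited machinery rather than an elementary argument.
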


We can consider the specific case of when $W \subset \R^3$ is a $d$-variate normal distribution $\mu : \R^d \to \R^+$.  Then $m = O((1/\eps^d) \log (1/\eps))$ and $|M| = O((m/\eps^2) \log m \log (m/\eps)) = O((1/\eps^{d+2}) \log^3 (1/\eps))$.

\begin{corollary}
Let $\mu : \R^d \to \R^+$ be a $d$-variate normal distribution with constant standard deviation.
We can construct an $\eps$-sample of $(\mu, \Eu A)$ with VC-dimension $\nu_{\Eu A} \geq 2$ (and where $d \leq \nu_{\Eu A} \leq 1/\eps$) of size $O((\nu_{\Eu A} / \eps^2) \log (\nu_{\Eu A}/\eps))$ in time
$O((1/\eps^{2\nu_{\Eu A} + d + 2}) \log^{\nu_{\Eu A}+1} (1/\eps))$.
\label{cor:norm-es}
\end{corollary}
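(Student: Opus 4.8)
The plan is to read Corollary~\ref{cor:norm-es} off Theorem~\ref{lem:mu-eps} by supplying the Gaussian-specific parameters. First I would pass from the distribution $\mu:\R^d\to\R^+$ to the region $W_\mu\subset\R^{d+1}$ lying under its graph, so that ranges $A\in\Eu A$ correspond to the cylinders $A\times\R$; note this lifting does not increase the VC-dimension, so the relevant parameter in Theorem~\ref{lem:mu-eps} remains $\nu_{\Eu A}$. Because a normal distribution with constant standard deviation has exponentially decaying tails, all but an $(\eps/2)$-fraction of the mass of $W_\mu$ can be captured by a polygonal shape $S$ with $m=O((1/\eps^d)\log(1/\eps))$ facets of constant descriptive complexity; this is exactly the approximation quality established in \cite{Phi08,Phi09}, and it is what makes $\mu$ fall under the ``$\eps$-sample of distributions'' machinery of Section~\ref{sec:eps-dist}.

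Next I would invoke the size-reduced construction of Theorem~\ref{lem:mu-eps} on $S$ with error parameter $\eps/2$. This yields an $(\eps/2)$-sample of $(S,\Eu A\times\R)$ of size $O((\nu_{\Eu A}/\eps^2)\log(\nu_{\Eu A}/\eps))$; since $S$ and $W_\mu$ differ by at most an $(\eps/2)$-fraction of mass, the two sources of error add and the sample is an $\eps$-sample of $(\mu,\Eu A)$ of the claimed size. For the running time I would substitute $m=O((1/\eps^d)\log(1/\eps))$ into the time bound $O\!\big((1/\eps^{2\nu_{\Eu A}+2})(m+\nu_{\Eu A})\log^{\nu_{\Eu A}}((m+\nu_{\Eu A})/\eps)\log(m/\eps)\big)$ of Theorem~\ref{lem:mu-eps}. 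The hypothesis $\nu_{\Eu A}\le1/\eps$ forces $\nu_{\Eu A}=O(m)$, so $m+\nu_{\Eu A}=O((1/\eps^d)\log(1/\eps))$ contributes a single factor $1/\eps^d$; treating $d$ as a constant, every logarithm appearing is $O(\log(1/\eps))$, and collecting the $\log$ factors together with the $1/\eps^{2\nu_{\Eu A}+2}$ prefactor gives $O((1/\eps^{2\nu_{\Eu A}+d+2})\log^{\nu_{\Eu A}+1}(1/\eps))$, as required. The remaining hypotheses are essentially cosmetic: $\nu_{\Eu A}\ge2$ keeps the range space nontrivial, and $d\le\nu_{\Eu A}$ ensures the ambient dimension does not dominate the exponent.

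There is no conceptual obstacle here: the statement is a specialization of Theorem~\ref{lem:mu-eps}, not a new idea. The only care needed is the bookkeeping in the last step -- making sure the polygonal-defect error and the $\eps$-sampling error compose additively rather than multiplicatively (handled by the $\eps/2$ split), and checking that the $d$-dependent constants hidden inside $\log^{\nu_{\Eu A}}(\cdot)$ and inside the facet count $m$ for Gaussians are exactly those quoted in \cite{Phi08,Phi09}. I expect that pinning down the precise exponent of $m$ for the $d$-variate normal -- confirming it is $O((1/\eps^d)\log(1/\eps))$ rather than, say, $O((1/\eps^{d+1})\log(1/\eps))$ -- is the one place where the stated bound is genuinely sensitive to the details of the cited construction.
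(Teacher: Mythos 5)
Your proposal is correct and matches the paper's own (very brief) derivation: both simply specialize Theorem~\ref{lem:mu-eps} by substituting the Gaussian facet count $m=O((1/\eps^d)\log(1/\eps))$ into the size-reduced bound and simplifying the logarithms using $d\le\nu_{\Eu A}\le1/\eps$. Your closing caveat about the exponent of $m$ is well placed: the paper itself is internally inconsistent on this (it first quotes $m=O((1/\eps^{d+1})\log(1/\eps))$ and later uses $m=O((1/\eps^{d})\log(1/\eps))$), and tracking the $\log(1/\eps)$ factor inside $m$ through the theorem's time bound actually yields $\log^{\nu_{\Eu A}+2}(1/\eps)$ rather than the stated $\log^{\nu_{\Eu A}+1}(1/\eps)$, so the corollary as written is slightly optimistic in its log exponent.
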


For convenience, we restate a tighter, but less general theorem from Phillips, here slightly generalized.
\begin{theorem}[\cite{Phi08,Phi09}]
Let $\mu : \R^d \to \R^+$ be a $d$-variate normal distribution with constant standard deviation.
Let $(\mu, \Eu Q_k)$ be a range space where the ranges are defined as the intersection of $k$ slabs with fixed normal directions.
We can construct an $\eps$-sample of $(\mu, \Eu Q_k)$ of size $O((1/\eps)\log^{2k} (1/\eps))$ in time $O((1/\eps^{d+4}) \log^{6k + 3} (1/\eps))$.
\label{thm:Phi08}
\end{theorem}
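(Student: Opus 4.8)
The plan is to replay, inside the polygon-and-lattice framework of this appendix, the $\eps$-sample construction of Phillips~\cite{Phi08,Phi09}, and then to observe that its size-reduction step never uses orthogonality of the slab directions, so the axis-aligned statement generalizes verbatim to $k$ arbitrary fixed normals. Throughout it is convenient to work with the region $W_\mu \subset \R^{d+1}$ representing $\mu$ and with the lifted range space $(W_\mu, \Eu{Q}_k \times \R)$: a slab in $\R^d$ with a fixed normal lifts to a slab in $\R^{d+1}$ with the same normal, so an $\eps$-sample of the lifted space projects to one of $(\mu, \Eu{Q}_k)$.

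First I would truncate and polygonalize. Since $\mu$ has constant variance, all but an $\eps/c$ fraction of its mass lies in a ball of radius $O(\sqrt{\log(1/\eps)})$; on that bounded region $W_\mu$ is polygonally approximable, so there is a polytope $S$ with $m = O((1/\eps^{d})\log(1/\eps))$ facets and $\phi(W_\mu \setminus S) + \phi(S \setminus W_\mu) \le (\eps/c)\,\phi(W_\mu)$. This reduces the task to $\eps$-sampling the polygonal shape $(S,\Eu{Q}_k)$, at an additive cost of $O(\eps)$ that gets absorbed by rescaling $\eps$ at the end.

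Next I would discretize by a lattice, exactly as in the construction behind Theorem~\ref{lem:mu-eps}: intersect $S$ with a lattice $\Lambda$ whose spacing is tuned so that $M = \Lambda \cap S$ has $\mathrm{poly}(1/\eps)$ points and is a discrete $(\eps/c)$-sample of $(S,\Eu{Q}_k)$. Building $M$ by starting from a coarse scaling of $\Lambda$ and repeatedly halving the spacing until $|M|$ reaches its target, together with the $O(m)$-per-point cost of membership testing in $S$, is what pushes the running time up to $O((1/\eps^{d+4})\log^{6k+3}(1/\eps))$; only one such $M$ needs to be built, since the rest are translates and uniform scalings of it. Finally I would shrink $M$ to an $\eps$-sample of the claimed size: the generic VC-dimension bound is too weak (it gives $\Omega(1/\eps^2)$), so instead one runs the divide-and-conquer halving scheme of~\cite{Phi08}, processing the $k$ fixed directions one at a time --- along each direction the merge is a one-dimensional low-discrepancy combine over dyadic intervals, costing an extra $\log^{2}(1/\eps)$ factor in the size, and composing over all $k$ directions multiplies this to $\log^{2k}(1/\eps)$, giving the final $O((1/\eps)\log^{2k}(1/\eps))$.

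The hard part will be this last step: squeezing the dependence down to $1/\eps$ (rather than $1/\eps^2$) forces the structured merge-reduce in place of uniform sampling, and one must verify that the error introduced by each of the $k$ halving rounds accumulates only additively, so that $k$ rounds still leave an $\eps$-sample while the sample size grows by merely a $\log^{2k}$ factor --- this is precisely the combinatorial core of~\cite{Phi08}, and the observation that it treats each fixed direction independently (never invoking orthogonality) is what yields the stated generalization. Once that is settled, balancing the three error contributions (polygonal approximation, lattice discretization, size reduction) to be $O(\eps/c)$ apiece and reading off the running-time exponents is routine bookkeeping.
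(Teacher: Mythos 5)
This theorem is stated in the paper with a bracketed citation to Phillips~\cite{Phi08,Phi09} and the text immediately above it says explicitly that it ``restate[s] a tighter, but less general theorem from Phillips, here slightly generalized'' --- so the paper gives no proof of its own, and there is no in-paper argument to compare yours against. Your reconstruction is nevertheless a faithful sketch of what the citation is pointing at: truncate-and-polygonalize matches the $m = O((1/\eps^{d})\log(1/\eps))$-facet approximation discussed just before the statement, the lattice discretization is exactly the construction of Theorem~\ref{lem:mu-eps}, and the per-direction merge-reduce halving of~\cite{Phi08} is the mechanism that gets the $1/\eps$ (rather than $1/\eps^2$) dependence. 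Your observation that this halving treats each fixed direction independently and never invokes orthogonality is precisely what justifies the paper's ``slight generalization'' from axis-aligned boxes to $k$ slabs with arbitrary fixed normals; the remaining exponent bookkeeping is, as you note, delegated to the cited work.
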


\subsection{Avoiding Degeneracy}
An important part of the above construction is the arbitrary choice of the origin points of the lattice $\Lambda$.  This allows us to arbitrarily shift the lattice defining $M$ and thus the set $Q$.  In Section \ref{sec:pta} we need to construct $n$ $\eps$-samples $\{Q_1, \ldots, Q_n\}$ for $n$ range spaces $\{(S_1, \Eu{A}), \ldots, (S_n, \Eu{A})\}$.
In Algorithm \ref{alg:count-prob} we examine sets of $\nu_{\Eu{A}}$ points, each from separate $\eps$-samples that define a minimal shape $A \in \Eu{A}$.  It is important that we do not have two such (possibly not disjoint) sets of $\nu_{\Eu{A}}$ points that define the same minimal shape $A \in \Eu{A}$.  (Note, this does not include cases where say two points are antipodal on a disk and any other point in the disk added to a set of $\nu_{\Eu{A}}=3$ points forms such a set; it refers to cases where say four points lie (degenerately) on the boundary of a disc.)
We can guarantee this by enforcing a property on all pairs of origin points $p$ and $q$ for $(S_i, \Eu{A})$ and $(S_j, \Eu{A})$.  For the purpose of construction, it is easiest to consider only the $l$th coordinates $p_l$ and $q_l$ for any pair of origin points or lattice vectors (where the same lattice vectors are used for each lattice).  We enforce a specific property on every such pair $p_l$ and $q_l$, for all $l$ and all distributions and lattice vectors.

First, consider the case where $\Eu{A} = \Eu{R}_d$ describes axis-aligned bounding boxes.  It is easy to see that if for all pairs $p_l$ and $q_l$ that $(p_l - q_l)$ is irrational, then we cannot have $>2d$ points on the boundary of an axis-aligned bounding box, hence the desired property is satisfied.

Now consider the more complicated case where $\Eu{A} = \Eu{B}$ describes smallest enclosing balls.  There is a polynomial of degree $2$ that describes the boundary of the ball, so we can enforce that for all pairs $p_l$ and $q_l$ that $(p_l - q_l)$ is of the form $c_1 (r_{p_l})^{1/3} + c_2 (r_{q_l})^{1/3}$ where $c_1$ and $c_2$ are rational coefficients and $r_{p_l}$ and $r_{q_l}$ are distinct integers that are not multiple of cubes.    Now if $\nu = d+1$ such points satisfy (and in fact define) the equation of the boundary of a ball, then no $(d+2)$th point which has this property with respect to the first $d+1$ can also satisfy this equation.

More generally, if $\Eu{A}$ can be described with a polynomial of degree $p$ with $\nu$ variables, then enforce that every pair of coordinates are the sum of $(p+1)$-roots.  This ensures that no $\nu+1$ points can satisfy the equation, and the undesired situation cannot occur.

\section {Computing Other Measures on Uncertain Points}
\label {app:moreexamples}
We generalize this machinery to other LP-type problems $f$ defined on a set of points in $\R^d$ and with constant combinatorial dimension.  Although, in some cases (like smallest axis-aligned bounding box by perimeter) we are able to show that $(\R^d, \Eu A_{f,n})$ has constant VC-dimension, for other cases (like radius of the smallest enclosing disk) we cannot and need to first decompose each range $A(Z, w) \in \Eu A_{f,n}$ into a set of disjoint ``wedges'' from a family of ranges $\Eu W_f$.

To simplify the already large polynomial runtimes below we replace the runtime bound in Theorem \ref{thm:gen-comb-dist-bignum} (below, which extends Theorem \ref{thm:gen-comb-dist} to deal with large input sizes) with $O((nk)^{\beta+1} \log^4(nk))$.

\begin{lemma}
If the disjoint union of $m$ shapes from $\Eu{W}_{f}$ can form any shape from $\Eu{A}_{f,n}$, then an $(\eps/m)$-sample of $(M, \Eu{W}_{f})$ is an $\eps$-sample of $(M, \Eu{A}_{f,n})$.
\end{lemma}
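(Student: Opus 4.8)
The plan is to exploit the additivity of the discrepancy over a disjoint decomposition. Fix a measure $\phi$ on the ground set $M$ (cardinality in the discrete case), and let $Q$ be an $(\eps/m)$-sample of $(M, \Eu{W}_f)$. I want to bound, for an arbitrary range $A = A(Z,w) \in \Eu{A}_{f,n}$, the quantity
\[
\left| \frac{\phi(A \cap Q)}{\phi(Q)} - \frac{\phi(A \cap M)}{\phi(M)} \right|.
\]
By hypothesis, $A$ is a disjoint union $A = W_1 \cup W_2 \cup \cdots \cup W_{m}$ of (at most) $m$ wedges $W_i \in \Eu{W}_f$. Since the $W_i$ are pairwise disjoint, both the sampled frequency and the true frequency split as sums: $\phi(A \cap Q) = \sum_{i=1}^{m} \phi(W_i \cap Q)$ and $\phi(A \cap M) = \sum_{i=1}^{m} \phi(W_i \cap M)$. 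Hence the left-hand side above equals
\[
\left| \sum_{i=1}^{m} \left( \frac{\phi(W_i \cap Q)}{\phi(Q)} - \frac{\phi(W_i \cap M)}{\phi(M)} \right) \right|
\;\le\; \sum_{i=1}^{m} \left| \frac{\phi(W_i \cap Q)}{\phi(Q)} - \frac{\phi(W_i \cap M)}{\phi(M)} \right|
\;\le\; \sum_{i=1}^{m} \frac{\eps}{m} \;=\; \eps,
\]
where the last inequality uses the defining property of $Q$ as an $(\eps/m)$-sample of $(M, \Eu{W}_f)$ applied to each $W_i \in \Eu{W}_f$. Since $A$ was arbitrary, $Q$ is an $\eps$-sample of $(M, \Eu{A}_{f,n})$.

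The key steps, in order, are: (1) unwind the definition of $\eps$-sample for the target range space; (2) invoke the disjoint-decomposition hypothesis to write $A$ as a union of $m$ wedges; (3) use disjointness to turn the measure of the union into a sum of measures, separately for $Q$ and for $M$; (4) apply the triangle inequality and then the $(\eps/m)$-sample guarantee termwise. A small point worth stating carefully is that "the disjoint union of $m$ shapes from $\Eu{W}_f$ can form any shape from $\Eu{A}_{f,n}$" should be read as: every $A \in \Eu{A}_{f,n}$ admits such a decomposition into \emph{at most} $m$ wedges (padding with empty sets if fewer are needed), so that the bound is uniform in $A$; this matches the usage in the body, where a range is decomposed into at most $2n$ wedges and one takes $m = 2n$.

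I do not expect a genuine obstacle here — the statement is essentially the observation that discrepancy is subadditive under disjoint unions. The only thing to be slightly careful about is that disjointness is used twice and in an essential way (it is what makes $\phi(A\cap\cdot)$ split as a sum rather than merely be bounded by a sum, which matters because we need the identity to hold simultaneously for the $Q$-term and the $M$-term with the \emph{same} decomposition). If one only had a cover rather than a partition, the two sums would overcount by different amounts and the cancellation in step (3)–(4) would break; so the proof should explicitly note that the $W_i$ are pairwise disjoint, consistent with the word "disjoint" in the hypothesis.
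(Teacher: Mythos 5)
Your proof is correct and follows essentially the same route as the paper's: decompose each range of $\Eu{A}_{f,n}$ into at most $m$ disjoint wedges and sum the per-wedge error of $\eps/m$ to get total error $\eps$. Your write-up just makes explicit the additivity-of-discrepancy and triangle-inequality steps that the paper's two-line proof leaves implicit.
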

\begin{proof}
For any shape $A \in \Eu{A}_{f,n}$ we can create a set of $m$ shapes $\{W_1, \ldots, W_n\} \subset \Eu{W}_{f}$ whose disjoint union is $A$.  Since each range of $\Eu{W}_{f}$ may have error $\eps/m$, their union has error at most $\eps$.
\end{proof}

We study several example cases for which we can deterministically compute $\eps$-quantizations.  For each case we show an example element of $\Eu{A}_{f,n}$ on an example of $7$ points.

To facilitate the analysis, we define the notion of shatter dimension, which is similar to VC-dimension.  The shatter function $\pi_T(m)$ of a range space $T = (Y,\Eu A)$ is the maximum number of sets in $T$ where $|Y| = m$.  The \emph{shatter dimension} $\sigma_T$ of a range space $T = (Y,\Eu A)$ is the minimum value such that $\pi_T(m) = O(m^{\sigma_T})$.  If a range $A \in \Eu A$ is defined by $k$ points, then $k \geq \sigma_T$.  It can be shown \cite{HP} that $\sigma_T \leq \nu_T$ and $\nu_T = O(\sigma_T \log \sigma_T)$.
And, in general, the basis size of the related LP-type problem is bounded $\beta \leq \sigma_T$.

\paragraph{Directional Width.}
We first consider the problem of finding the width along a particular direction $u$ (\textsf{dwid}).  Given a point set $P$, $f(P)$ is the width of the minimum slab containing $P$, as in Figure \ref{fig:shape-dwid}.    This can be thought of as a one-dimensional problem by projecting all points $P$ using the operation $\IP{\cdot}{u}$.  The directional width is then just the difference between the largest point and the smallest point.  As such, the VC-dimension of $(\R^d, \Eu A_f)$ is $2$.  Furthermore, $\Eu{A}_{f,n} = \Eu A_f$ in this case, so $(\R^d, \Eu{A}_{f,n})$ also has VC-dimension $2$.  For $1 \leq i \leq n$, we can then create an $(\eps/n)$-sample $Q_i$ of $(\mu_i, \Eu{A}_{f,n})$ of size $k = O(n/\eps)$ in $O((n/\eps) \log (n/\eps))$ time given basic knowledge of the distribution $\mu_i$.
We can then apply Theorem \ref{thm:gen-comb-dist} to build an $\eps$-quantization in time $O((kn)^{2+1}\log^4(n^2/\eps)) = O(n^6/\eps^3 \log^4 (n/\eps))$ for the \textsf{dwid} case.

We can actually evaluate $\tilde f (\Eu Q, r)$ for all values of $r$ faster using a series of sweep lines.  Each of the $O(n^4/\eps^2)$ potential bases are defined by a left and right end point.  Each of the $O(n^2/\eps)$ points in $\bigcup_i Q_i$ could be a left or right end point.
We only need to find the $O(1/\eps)$ widths (defined by pairs of end points) that wind up in the final $\eps$-quantization.
Using a Frederickson and Johnson approach~\cite{FJ84}, we can search for each width in $O(\log (n/\eps))$ steps.  At each step we are given a width $\omega$ and need to decide what fraction of \psets have width at most $\omega$.  We can scan from each of the possible left end points and count the number of \psets that have width at most $\omega$.  For each $\omega$, this can be performed in $O(n^2/\eps)$ time with a pair of simultaneous sweep lines.
The total runtime is $O(1/\eps) \cdot O(\log (n/\eps)) \cdot O(n^2/\eps) = O((n^2/\eps^2) \log (n/\eps))$.

\begin{theorem}
We can create an $\eps$-quantization of size $O(1/\eps)$ for the \textsf{dwid} problem in $O((n^2/\eps^2) \log (n/\eps))$ time.
\end{theorem}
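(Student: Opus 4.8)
The plan is to reuse the same $\eps$-samples that feed Theorem~\ref{thm:gen-comb-dist}, but to replace the generic basis enumeration of Algorithm~\ref{alg:count-prob} (which costs $O((nk)^{\beta+1})$ and is wasteful when $\beta=2$) by a one-dimensional selection routine. First I would project every distribution $\mu_i$ along the query direction $u$ via $\IP{\cdot}{u}$, so that \textsf{dwid} becomes univariate: $f$ of a \pset is the difference between the largest and smallest projected coordinate, and, as observed above, $\Eu A_{f,n}=\Eu A_f$ is (a subfamily of) the family of intervals, which has VC-dimension $2$. Hence a standard one-dimensional $\eps$-sample construction — essentially taking the $O(1/c)$ quantiles of $\mu_i$ — yields, for $c=\Theta(\eps/n)$, a set $Q_i$ of size $O(n/\eps)$ that is a $c$-sample of $(\mu_i,\Eu A_{f,n})$, in $O((n/\eps)\log(n/\eps))$ total time over all $i$. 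By Theorem~\ref{thm:n-apx}, for $\Eu Q=\{Q_1,\dots,Q_n\}$ we then have $|\hat f(\Eu P,r)-\tilde f(\Eu Q,r)|\le cn=O(\eps)$ for every $r$, so it suffices to output a set of $O(1/\eps)$ values forming an $O(\eps)$-quantization of $\tilde f(\Eu Q,\cdot)$; after absorbing constants this is the claimed $\eps$-quantization of size $O(1/\eps)$.

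Next I would make precise what must be computed. Write $N=|\bigcup_i Q_i|=O(n^2/\eps)$, and perturb so that all $N$ projected values are distinct (general position, as assumed in Section~\ref{sec:pta}). Every basis of \textsf{dwid} is an ordered pair of a left endpoint $\ell$ and a right endpoint $r\ge\ell$ from $\bigcup_i Q_i$, so the attainable widths lie in the sumset $(-\bigcup_i Q_i)+(\bigcup_i Q_i)$; arranged as the $N\times N$ matrix $M[a][b]=\text{val}(b)-\text{val}(a)$, and after reversing the order of one axis, this is a matrix with sorted rows and sorted columns in the sense of~\cite{FJ84}. The $O(1/\eps)$ numbers we must report are the weighted order statistics of $\tilde f(\Eu Q,\cdot)$ at ranks $\approx t\eps$ for $t=1,\dots,O(1/\eps)$, which is a selection query on this matrix.

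The core is a counting oracle: given a candidate width $\omega$, report $\tilde f(\Eu Q,\omega)$. Sort $\bigcup_i Q_i$ once in $O(N\log N)$ time. Then sweep a left endpoint $\ell$ over the sorted points: a \pset has minimum exactly $\ell$ (with $\ell\in Q_{i(\ell)}$) and width $\le\omega$ iff it takes $\ell$ from $Q_{i(\ell)}$ and a point of $Q_i\cap[\ell,\ell+\omega]$ from every other $Q_i$, so the number of such \psets is $\prod_{i\ne i(\ell)}|Q_i\cap[\ell,\ell+\omega]|$ (with weights replacing cardinalities in the non-uniform case). Since the window $[\ell,\ell+\omega]$ only slides rightward as $\ell$ grows, a second pair of pointers maintains the per-set counts $c_i=|Q_i\cap[\ell,\ell+\omega]|$ using $O(N)$ total updates; keeping the product of the nonzero $c_i$ together with a counter of how many $c_i$ vanish lets $\prod_{i\ne i(\ell)}c_i$ be read off in $O(1)$ (using $c_{i(\ell)}\ge1$, and the $O(1)$-word assumption on $k^n$ from Section~\ref{sec:pta}), so each of the $N$ left-endpoint positions is processed in $O(1)$ amortized time and one query costs $O(N)=O(n^2/\eps)$. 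Feeding this oracle into the Frederickson--Johnson selection procedure~\cite{FJ84} on the sorted matrix above finds each required order statistic with $O(\log N)=O(\log(n/\eps))$ oracle calls, for a total of $O((1/\eps)\cdot\log(n/\eps)\cdot(n^2/\eps))=O((n^2/\eps^2)\log(n/\eps))$, dominating the $O((n^2/\eps)\log(n/\eps))$ spent building the $Q_i$ and sorting. The reported widths then form, via the reduction in Theorem~\ref{thm:eq-main} together with the $O(\eps)$ bound from Theorem~\ref{thm:n-apx}, an $\eps$-quantization of $\hat f(\Eu P,\cdot)$ of size $O(1/\eps)$.

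I expect the main obstacle to be the bookkeeping inside the counting oracle — proving that each \pset of width at most $\omega$ is charged exactly once (this is where general position forbids two sets simultaneously realizing the minimum) and maintaining $\prod_{i\ne i(\ell)}c_i$ incrementally through the zero-count cases — after which the matrix selection is a direct appeal to~\cite{FJ84}.
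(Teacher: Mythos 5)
Your proposal is correct and follows essentially the same route as the paper's own argument: project onto the direction $u$, take $(\eps/n)$-samples of size $O(n/\eps)$ (so that Theorem~\ref{thm:n-apx} bounds the error), and then recover only the $O(1/\eps)$ needed widths via a Frederickson--Johnson~\cite{FJ84} search, where each candidate width $\omega$ is evaluated by a sweep over left endpoints with a sliding window in $O(n^2/\eps)$ time. Your write-up simply makes explicit the bookkeeping (unique charging by the exact minimum, maintaining $\prod_{i\ne i(\ell)}c_i$) that the paper leaves implicit.
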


\tweeplaatjes {shape-dwid} {shape-aabba}
{(a) Directional (vertical) width. (b) Axis-aligned bounding box, measured by area. The curves are hyperbola parts.}

\paragraph{Axis-aligned bounding box.}
We now consider the set of problems related to axis-aligned bounding boxes in $\R^d$.  For a point set $P$, we minimize $f(P)$, which either represents the $d$-dimensional volume of $S(P)$ (the \textsf{aabbv} case --- minimizes the area in $\R^2$) or the $(d-1)$-dimensional volume of the boundary of $S(P)$ (the \textsf{aabbp} case --- minimizes the perimeter in $\R^2$).
      Figures~\ref{fig:shape-aabbp} and~\ref {fig:shape-aabba} show two examples of elements of $\Eu{A}_{f,n}$ for the \textsf{aabbp} case and the \textsf{aabbv} case in $\R^2$.  For both $(\R^2, \Eu{A}_{f,n})$ has a shatter dimension of $4$ because the shape is determined by the $x$-coordinates of $2$ points and the $y$-coordinates of $2$ points.  This generalizes to a shatter dimension of $2d$ for $(\R^d, \Eu{A}_{f,n})$, and hence a VC-dimension of $O(d \log d)$.  The smaller VC-dimension in the \textsf{aabbp} case discussed in detail above can be extended to higher dimensions.

Hence, for $1 \leq i \leq n$,  for both cases we can create an $(\eps/n)$-sample $Q_i$ of $(\mu_i, \Eu{A}_{f,n})$, each of size $k = O((n^2/\eps^2) \log (n/\eps))$ in total time $O(((n/\eps) \log (n/\eps))^{O(d \log d)})$ via Corollary \ref{cor:norm-es}.
In $\R^d$, we can construct the $\eps$-quantization in
$O((kn)^{2d+1} \log^4 (nk)) = O((n^{6d+3}/\eps^{4d+2}) \log^{2d+1} (n/\eps))$ time via Theorem \ref{thm:gen-comb-dist}.

\begin{theorem}
We can create an $\eps$-quantization of size $O(1/\eps)$ for the \textsf{aabbp} or \textsf{aabbv} problem on $n$ $d$-variate normal distributions in $O(((n/\eps) \log (n/\eps))^{O(d \log d)})$ time.
\end{theorem}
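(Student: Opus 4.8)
The plan is to reuse the two–stage template of Section~\ref{sec:distributions}: discretize, then apply the exact indecisive algorithm, then compress. In the first stage I would, for each $1 \le i \le n$, build a discrete point set $Q_i$ that is an $(\eps/2n)$–sample of the range space $(\mu_i, \Eu A_{f,n})$. Theorem~\ref{thm:n-apx} then guarantees that the indecisive instance $\Eu Q = \{Q_1,\ldots,Q_n\}$ satisfies $|\hat f(\Eu P,r) - \tilde f(\Eu Q,r)| \le \eps/2$ for all $r$, where $f$ is the \textsf{aabbp} (resp.\ \textsf{aabbv}) measure. In the second stage I would run the exact LP–type algorithm of Theorem~\ref{thm:gen-comb-dist} on $\Eu Q$ to get an exact weighted quantization of $f(\Eu Q)$; both problems have combinatorial dimension $\beta = 2d$ (a box is pinned by one min and one max coordinate along each of the $d$ axes), so this stage runs in $O((nk)^{2d+1})$ time with $k = |Q_i|$. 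Finally I would compress the output, which a priori has size $O(n(nk)^{2d})$, down to $O(1/\eps)$ points by keeping $2/\eps$ representatives evenly spaced in the sorted weighted order, exactly as in the last step of the proof of Theorem~\ref{thm:eq-main}; this introduces only an additional $\eps/2$ in error, so the total error is $\eps$.

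The hard part will be controlling the complexity of $(\R^d, \Eu A_{f,n})$, since that is what licenses the use of Corollary~\ref{cor:norm-es} in the first stage. A range $A(Z,w)\in\Eu A_{f,n}$ is the set of points $p$ whose addition to $Z$ keeps the enclosing box's measure (surface $(d-1)$–volume for \textsf{aabbp}, interior $d$–volume for \textsf{aabbv}) at most $w$. Geometrically this region is unpleasant — its boundary consists of slab pieces in the \textsf{aabbp} case and of hyperbola pieces in the \textsf{aabbv} case in $\R^2$ — but the key observation is that membership of $p$ depends on $Z$ only through the $2d$ scalars recording the coordinate–wise minima and maxima of $Z$. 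Consequently, on any ground set of $m$ points only $O(m^{2d})$ distinct ranges arise, so the shatter dimension of $(\R^d, \Eu A_{f,n})$ is at most $2d$; by the relation $\nu_T = O(\sigma_T \log \sigma_T)$ recalled in Appendix~\ref{app:moreexamples}, its VC–dimension is $\nu_{\Eu A} = O(d\log d)$. Unlike the smallest–enclosing–disk case, no wedge decomposition is needed here. The delicate point to get right is that the ``determined by $2d$ coordinates'' property really bounds the \emph{shatter function}, not merely the basis size — a conflation that would be easy to make.

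It then remains to do the bookkeeping. With $\nu_{\Eu A} = O(d\log d)$ constant for fixed $d$ and $\eps' = \eps/2n$, Corollary~\ref{cor:norm-es} constructs each $Q_i$ of size $k = O((\nu_{\Eu A}/(\eps')^2)\log(\nu_{\Eu A}/\eps')) = O((n^2/\eps^2)\log(n/\eps))$ in time $O((1/(\eps')^{2\nu_{\Eu A}+d+2})\log^{\nu_{\Eu A}+1}(1/\eps')) = O(((n/\eps)\log(n/\eps))^{O(d\log d)})$; doing this for all $n$ distributions only multiplies by $n$, absorbed into the exponent (and one may instead build a single sample and translate/scale it to the other $\mu_i$). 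Feeding $k$ into Theorem~\ref{thm:gen-comb-dist} produces the exact quantization of $f(\Eu Q)$ in $O((nk)^{2d+1}\log^4(nk))$ time, which by the preceding paragraph is polynomially dominated by the sampling cost; the final compression to $O(1/\eps)$ points costs no more than the time already spent. Putting the pieces together with Theorem~\ref{thm:n-apx} and the compression error bound yields an $\eps$–quantization of $f(\Eu P)$ of size $O(1/\eps)$ computed in $O(((n/\eps)\log(n/\eps))^{O(d\log d)})$ time, which is the claim.
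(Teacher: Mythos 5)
Your proposal follows essentially the same route as the paper's proof: bound the shatter dimension of $(\R^d, \Eu A_{f,n})$ by $2d$ (hence VC-dimension $O(d\log d)$), build $(\eps/n)$-scale samples of each $(\mu_i,\Eu A_{f,n})$ via Corollary~\ref{cor:norm-es}, run the LP-type algorithm of Theorem~\ref{thm:gen-comb-dist} with $\beta=2d$, invoke Theorem~\ref{thm:n-apx}, and compress to $O(1/\eps)$ points. Your explicit $\eps/2$--$\eps/2$ error split and your remark that the shatter-function (not basis-size) bound is what matters are only minor refinements of the paper's bookkeeping, not a different argument.
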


\paragraph{Smallest enclosing ball.}
Figure \ref{fig:shape-seblinfty+shape-sebl1} shows example elements of $\Eu{A}_{f,n}$ for smallest enclosing ball, for metrics $L_\infty$ (the \textsf{seb}$_{\infty}$ case) and $L_1$ (the \textsf{seb}$_1$ case) in $\R^2$.  An example element of $\Eu{A}_{f,n}$ for smallest enclosing ball for the $L_2$ metric (the \textsf{seb}$_2$ case) was shown in Figure \ref{fig:shape-sebl2}.  For $\textsf{seb}_\infty$ and $\textsf{seb}_1$, $(\R^d, \Eu{A}_{f,n})$ has VC-dimension $2d$ because the shapes are defined by the intersection of halfspaces from $d$ predefined normal directions.
For $\textsf{seb}_1$ and $\textsf{seb}_\infty$, we can create $n$ $(\eps/n)$-samples $Q_i$ of each $(\mu_i, \Eu{A}_{f,n})$ of size $k = O((n/\eps) \log^{2d} (n/\eps))$ in total time $O(n (n/\eps)^{d+4} \log^{6k + 3} (n/\eps))$ via Theorem \ref{thm:Phi08}.
We can then create an $\eps$-quantization in $O((nk)^{2d+1} \log^4 (nk)) = O((n^{4d+2}/\eps^{2d+1} )\log^{4d^2 + 2d + 4} (n/\eps))$ time via Theorem \ref{thm:gen-comb-dist}.

\tweeplaatjes {shape-seblinfty} {shape-sebl1}
{(a) Smallest enclosing ball, $L_\infty$ metric. (b) Smallest enclosing ball, $L_1$ metric.}

\begin{theorem}
We can create an $\eps$-quantization of size $O(1/\eps)$ for the \textsf{seb$_1$} or \textsf{seb$_\infty$} problem on $n$ $d$-variate normal distributions in $O((n^{4d+2}/\eps^{2d+1}) \log^{4d^2 + 2d + 4} (n/\eps))$ time.
\end{theorem}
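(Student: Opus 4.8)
The plan is to reuse the three-stage template of this appendix, now with $f$ the radius of the smallest enclosing ball in the $L_1$ (resp.\ $L_\infty$) metric, which is an LP-type problem.  First I would pin down the range space $(\R^d,\Eu{A}_{f,n})$: a range $A(Z,w)$ is the set of points $p$ with $f(Z\cup\{p\})\le w$, and since an $L_\infty$ ball of radius $w$ is an axis-aligned cube and an $L_1$ ball of radius $w$ is a cross-polytope, in either case its bounding halfspaces use only the fixed $d$ normal directions already described, so $A(Z,w)$ is an intersection of at most $2d$ such halfspaces --- i.e.\ of $d$ slabs with fixed directions.  In particular no wedge decomposition is needed here (in contrast to the $L_2$ case), $(\R^d,\Eu{A}_{f,n})$ has VC-dimension $2d$, and every one of its ranges is determined by at most $2d$ points.

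Next I would construct, for each $1\le i\le n$, an $(\eps/n)$-sample $Q_i$ of $(\mu_i,\Eu{A}_{f,n})$ by applying Theorem~\ref{thm:Phi08} with $d$ slab directions and error parameter $\eps/n$; this gives a sample $Q_i$ of size $k = O((n/\eps)\log^{2d}(n/\eps))$ in time $O((n/\eps)^{d+4}\log^{6d+3}(n/\eps))$, hence $O\!\left(n\,(n/\eps)^{d+4}\log^{6d+3}(n/\eps)\right)$ total over all $i$ (in fact only one such sample need be built, the rest obtained by translation and scaling).  Writing $\Eu{Q}=\{Q_1,\ldots,Q_n\}$, Theorem~\ref{thm:n-apx} then certifies $\bigl|\hat f(\Eu{P},r)-\tilde f(\Eu{Q},r)\bigr|\le(\eps/n)\cdot n=\eps$ for every $r$, so the distribution of $f$ over the indecisive set $\Eu{Q}$ is an $\eps$-quantization of $f$ over $\Eu{P}$.

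Finally I would compute that distribution and shrink it.  Since each range of $\Eu{A}_{f,n}$ is fixed by at most $2d$ of the $nk$ candidate points of $\Eu{Q}$, there are $O((nk)^{2d})$ potential bases and the full violation test of Section~\ref{sec:pta} applies, so Theorem~\ref{thm:gen-comb-dist} --- in its large-input form (Theorem~\ref{thm:gen-comb-dist-bignum}, which adds a $\log^4$ factor) --- produces $\tilde f(\Eu{Q},\cdot)$ in $O((nk)^{2d+1}\log^4(nk)) = O\!\left((n^{4d+2}/\eps^{2d+1})\log^{4d^2+2d+4}(n/\eps)\right)$ time, which dominates the sample-construction cost; one closing pass as in Theorem~\ref{thm:eq-main}, keeping $2/\eps$ points spaced evenly in sorted order, reduces the output to size $O(1/\eps)$ within the same bound.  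I expect the only step that is not a mechanical substitution to be the first one: confirming that the $L_1$ ball, like the $L_\infty$ cube, really does give a bounded, fixed-direction slab range space (the $L_\infty$ case is immediate; the $L_1$ case needs the diagonal normals and a check that the VC-dimension stays $2d$), after which Theorems~\ref{thm:Phi08}, \ref{thm:n-apx}, \ref{thm:gen-comb-dist}, and \ref{thm:eq-main} carry the rest.
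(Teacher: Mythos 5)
Your proposal matches the paper's own proof step for step: the same observation that ranges of $\Eu{A}_{f,n}$ are intersections of fixed-direction slabs giving VC-dimension $2d$, the same $(\eps/n)$-samples of size $O((n/\eps)\log^{2d}(n/\eps))$ via Theorem~\ref{thm:Phi08}, the same error certification via Theorem~\ref{thm:n-apx}, the same application of Theorem~\ref{thm:gen-comb-dist-bignum} with exponent $2d+1$, and the same final $O(1/\eps)$ reduction. The reservation you flag at the end is legitimate, and the paper itself glosses over it: for $L_\infty$ the range $A(Z,w)$ is an axis-aligned box, so ``$d$ slab directions'' is immediate, but the $L_1$ cross-polytope in $\R^d$ has $2^{d-1}$ facet-normal directions, so for $d\geq 3$ the natural slab count is $2^{d-1}$ rather than $d$ and the VC-dimension of $(\R^d,\Eu{A}_{f,n})$ would be $2^d$ rather than $2d$; the paper asserts $d$ directions for both metrics without justification, so its stated bound for \textsf{seb$_1$} is only transparently correct in the plane (or requires an argument the paper omits).
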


For the $\textsf{seb}_2$ case in $\R^2$,
$(\R^2, \Eu{A}_{f,n})$ has infinite VC-dimension, but $(\R^2, \Eu{W}_{f})$ has VC-dimension at most $9$ because it is the intersection of $2$ halfspaces and one disc.
Any shape $A(T,w) \in \Eu{A}_{f,n}$ can be formed from the disjoint union of $2n$ wedges.  Choosing a point in the convex hull of $T$ as the vertex of the wedges will ensure that each wedge is completely inside the ball that defines part of its boundary.  Thus, in $\R^2$ the $n$ $(\eps/n)$-samples of each $(\mu_{p_i}, \Eu{A}_{f,n})$ are of size $\lambda_f(n,\eps) = O((n^4/\eps^2) \log (n/\eps))$ and can all be calculated in total time $O((n^{5}/\eps^{2}) \log^2 (n/\eps))$.  And then
the $\eps$-quantization can be calculated in $O(n^{56/3}/\eps^{22/3} \log^{11/3} (n/\eps))$ time by assuming general position of all $Q_i$ and then using range counting data structures.
We conjecture this technique can be extended to $\R^d$, but we cannot figure how to decompose a shape $A \in \Eu{A}_{f,n}$ into a polynomial number of wedges with constant VC-dimension.

       \begin{table}[h!!t]
      \caption{$\eps$-Samples for Summarizing Shape Family $\Eu{A}_{f,n}$.}
      \small
      \centering
      \begin{tabular}{|l|c|c|c|c|c|}
      \hline
      case & $k$ & $\nu_T$ & $(n k)^{\nu_T}$ & $\textsf{RC}(k,\Eu A_f)$ & runtime
      \\ \hline \hline
      \textsf{dwid} & $O(n/\eps)$ & $2$ & $O(n^4/\eps^2)$ & $\tilde{O}(1)$ & $\tilde{O}(n^5/\eps^2)$
      \\ \hline
      \textsf{aabbp} & $\tilde{O}(n^2/\eps^2)$ & $2d$ & $\tilde{O}(n^{6d}/\eps^{4d})$ & $\tilde{O}(1)$ & $\tilde{O}(n^{6d+1}/\eps^{4d})$
      \\ \hline
      \textsf{aabbv} & $\tilde{O}(n^2/\eps^2)$ & $2d$ & $\tilde{O}(n^{6d}/\eps^{4d})$ & $\tilde{O}(1)$ & $\tilde{O}(n^{6d+1}/\eps^{4d})$
      \\ \hline
      \textsf{seb$_\infty$} & $\tilde{O}(n/\eps)$ & $d+1$ & $\tilde{O}(n^{2d+2}/\eps^{d+1})$ & $\tilde{O}(1)$ & $\tilde{O}(n^{2d+3}/\eps^{d+1})$
      \\ \hline
      \textsf{seb$_1$} & $\tilde{O}(n/\eps)$ & $d+1$ & $\tilde{O}(n^{2d+2}/\eps^{d+1})$ & $\tilde{O}(1)$ & $\tilde{O}(n^{2d+3}/\eps^{d+1})$
      \\ \hline
      \textsf{seb$_2$} $\in \R^2$ & $\tilde{O}(n^4/\eps^2)$ & $3$ & \hspace{-.06in}$\tilde{O}(n^{15} / \eps^6)$ & $\tilde{O}(n^{8/3}/\eps^{4/3})$ & $\tilde{O}(n^{56/3}/\eps^{22/3}))$
      \\ \hline
      \textsf{diam} $\in \R^2$ & $\tilde{O}(n^4/\eps^2)$  & $n$ & $\tilde{O}(n^5/\eps^2)^n$ & $\tilde{O}(n^4 / \eps^2)$ & $\tilde{O}((n^5/\eps^2)^{n+1})$
      \\ \hline
     \end{tabular}
      \label{tbl:Afn-size}
      \\  $\tilde{O}(f(n,\eps))$ ignores poly-logarithmic factors $(\log (n/\eps))^{O(\textrm{poly}(d))}$.
      \end{table}

\section{Algorithm \ref{alg:count-prob} in RAM model}
\label{sec:bignum}
In this section we will analyze Algorithm \ref{alg:count-prob} without the assumption that the integer $k^n$ can be stored in $O(1)$ words.  
To simplify the results, we assume a RAM model where a word size contains $O(\log (nk))$ bits, each weight $w(q_{i,j})$ can be stored in one word, and the weight of any basis $f(B)$, the product of $n$ $O(1)$ word weights, can thus be stored in $O(n)$ words.
The following lemma describes the main result we will need relating to large numbers.
\begin{lemma}
In the RAM model where a word size has $b$ bits we can calculate the product of $n$ numbers where each is described by $O(b)$ bits in $(n b \log^2 n) 2^{O(\log^* n)}$ time.
\end{lemma}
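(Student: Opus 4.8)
The plan is to organize the $n$ pairwise multiplications into a balanced binary tree and charge each internal node to one fast (FFT-based) large-integer multiplication. First I would record the relevant size bounds: if each of the $n$ inputs has $O(b)$ bits, then the product of any $t$ of them has $O(tb)$ bits, so the full product $P$ has $N := O(nb)$ bits; additions and carry propagation encountered while assembling intermediate products cost only $O(N)$ each and are therefore dominated by the multiplications. I would also pad every input to a common length $\Theta(b)$ so that unequal input lengths cannot unbalance the tree by more than a constant factor.

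Next, build a balanced binary tree whose $n$ leaves are the inputs; the node at height $h$ (leaves at height $0$) stores the product of the $2^h$ numbers in its subtree, a number of $O(2^h b)$ bits, obtained by multiplying its two children, each of $O(2^{h-1}b)$ bits. For the multiplication primitive I would invoke F\"urer's algorithm, which multiplies two $m$-bit integers in $M(m) = O(m \log m \cdot 2^{O(\log^* m)})$ time (one may instead cite the Sch\"onhage--Strassen bound $O(m \log m \log\log m)$); these are bit-complexity bounds, so they transfer to the word RAM, where operations on $b$-bit words take $O(1)$ time.

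Then I would sum the cost over the tree. There are $O(\log n)$ heights; at height $h$ there are $n/2^h$ internal nodes, each costing $M(O(2^{h-1}b)) = O(2^h b \cdot \log(2^h b) \cdot 2^{O(\log^*(nb))})$, so the work at height $h$ totals $O(nb \log(nb) \cdot 2^{O(\log^* n)})$, and summing over all heights gives $O(nb \log(nb) \log n \cdot 2^{O(\log^* n)})$. Under the standing assumption that the word size $b = O(\log(nk))$ is (polynomially) bounded in terms of $n$, we have $\log(nb) = O(\log n)$ and $\log^*(nb) = O(\log^* n)$, which yields the claimed bound $O(nb \log^2 n)\cdot 2^{O(\log^* n)}$.

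The main obstacle is bookkeeping rather than a new idea: one must verify that the recurrence is genuinely dominated height-by-height — this is what rules out the naive $\Theta(n^2)$ cost of multiplying the numbers one at a time into a growing accumulator — and that the $2^{O(\log^*)}$ factor does not compound across the $\Theta(\log n)$ levels. It does not, since $\log^*$ is non-increasing as one descends the recursion and $\log^*(2^h b) \le \log^*(nb) = O(\log^* n)$, so a single $2^{O(\log^* n)}$ factor suffices for the whole tree.
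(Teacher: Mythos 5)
Your proof is correct and uses essentially the same approach as the paper's: a balanced binary tree of pairwise multiplications evaluated bottom-up, with F\"urer's bound applied at each node and a level-by-level summation that absorbs the $2^{O(\log^*)}$ factor once using monotonicity of $\log^*$, then simplified via $b = O(\log(nk))$.
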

\begin{proof}
Using F\"urer's recent result~\cite{Fur09} we can multiply two $m$-bit numbers in $m \log m 2^{O(\log^* m)}$ bit operations.  The product of $n$ $m$-bit numbers has $O(\log ((2^m)^n) = O(nm)$ bits, and can be accomplished with $n-1$ pairwise multiplications.

We can calculate this product efficiently from the bottom up, starting with $n/2$ multiplications of two $m=O(b)$ bit numbers.  Then we perform $n/4$ multiplications of two $2m$ bit numbers, and so on.  Since each operation on $O(b)$ bits takes $O(1)$ time in our model, F\"urer's result clearly upper bounds the RAM result.
The total cost of this can be written as
\[
\sum_{i=0}^{\log n} \frac{n}{2^{i-1}} (2^i m) \log (2^i m) 2^{O(\log^* (2^i m))}
=
2nb \sum_{i=0}^{\log n} (i+\log b) 2^{O(\log^* (2^i\log b))}
=
nb \log n \log (nb) 2^{O(\log^* (nb))}.
\]
Since we assume $b = O(\log n)$ we can simplify this bound to $(nb \log^2 n) 2^{O(\log^*n)}$.
\end{proof}

This implies that Algorithm \ref{alg:count-prob} takes $O((nk)^{\beta+1}) + ((nk)^{\beta}n \log (nk) \log^2 n)2^{O(\log^* n)}$ time where each $w_i$ can be described in $O(b) = O(\log (nk))$ bits.  This dominates the single division and all other operations.  
Now we can rewrite Theorem \ref{thm:gen-comb-dist} without the restriction that $k^n$ can be stored in $O(1)$ words.  

\begin{theorem}
Given a set $\Eu Q$ of $n$ indecisive point sets of size $k$ each, and given an LP-type problem $f : \Eu Q \to \R$ with combinatorial dimension $\beta$, we can create the distribution of $f$ over $\Eu Q$ in $O((nk)^{\beta+1}) + ((nk)^{\beta} n \log(nk) \log^2 n)2^{O(\log^* n)}$ time.  The size of the distribution is $O(n (nk)^\beta)$.
\label{thm:gen-comb-dist-bignum}
\end{theorem}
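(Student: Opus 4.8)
The plan is to follow the proof of Theorem~\ref{thm:gen-comb-dist} essentially verbatim and to isolate the single place where the assumption ``$k^n$ fits in $O(1)$ words'' was used, then replace the cost accounting there with the big-number multiplication lemma proved just above. In Algorithm~\ref{alg:count-prob} the only operations on potentially large numbers are: (i) the product $\prod_i w_i$ of the $n$ per-point weights computed for each potential basis $B$; (ii) the final division by $k^n$; and (iii) the additions performed when we aggregate, sort, and take prefix sums of the per-basis weights in order to answer queries. Everything else --- enumerating the $O((nk)^\beta)$ potential bases, and for each basis running $O(nk)$ full violation tests (each $O(1)$ by the full-violation-test primitive) and evaluating $f(B)$ --- involves only $O(1)$-word quantities and costs $O((nk)^{\beta+1})$ exactly as before.

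For step (i), instead of charging $O(n)$ per basis I invoke the preceding lemma: with word size $b$ bits, the product of $n$ numbers each of $O(b)$ bits can be formed in $(n b \log^2 n)\, 2^{O(\log^* n)}$ time. Here $b = O(\log(nk))$, so one product costs $(n \log(nk) \log^2 n)\, 2^{O(\log^* n)}$. Summing over the $O((nk)^\beta)$ potential bases, all the per-basis weights $(1/k^n)\prod_i w_i$ can be computed in $((nk)^\beta\, n \log(nk) \log^2 n)\, 2^{O(\log^* n)}$ total time; step (ii), a single division of two $O(n)$-word integers, is dominated by this (or can be folded into each per-basis product). Together with the $O((nk)^{\beta+1})$ term from the violation tests, this gives the claimed running time, provided step (iii) does not dominate.

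For step (iii): as in Theorem~\ref{thm:gen-comb-dist} we sort the $O((nk)^\beta)$ (value, weight) pairs by the value $f(B)$, which are $O(1)$-word reals, so the sort uses $O((nk)^\beta \log(nk))$ comparisons and is lower order. A query $r$ is answered by returning the prefix sum of the weights of all bases with $f(B) \le r$; these prefix sums are sums of $O((nk)^\beta)$ numbers of $O(n)$ words each, and computing all of them (or a partial-sum search structure over them) costs $O(n (nk)^\beta)$ word operations, again dominated by the stated bound. Storing the sorted list of $O((nk)^\beta)$ pairs, each weight occupying $O(n)$ words, gives size $O(n (nk)^\beta)$, and a query takes $O(\log(nk))$ time to locate $r$ plus $O(n)$ time to read off the (long) answer --- exactly the guarantees claimed.

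The main point requiring care is step (iii): one must verify that carrying $\Theta(n)$-word weights all the way through the aggregation does not blow up the total --- in particular that the number of distinct bases, hence of additions, is still only $O((nk)^\beta)$, and that each addition of $O(n)$-word integers costs $O(n)$ and not more. Both hold because the combinatorial dimension $\beta$ is constant, so there are at most $O((nk)^\beta)$ potential bases, and in the RAM model an $O(n)$-word addition takes $O(n)$ time; hence step (iii) contributes only $O(n(nk)^\beta)$, which is absorbed into the stated bound. Everything else transfers unchanged from the proof of Theorem~\ref{thm:gen-comb-dist}.
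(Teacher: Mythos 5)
Your proof is correct and follows essentially the same approach as the paper: invoke the big-number multiplication lemma for the per-basis products $\prod_i w_i$, and observe that the violation tests contribute $O((nk)^{\beta+1})$ while the division and aggregation are dominated. The only difference is that you spell out the accounting for sorting and prefix sums (step (iii)), which the paper compresses into the single remark that the multiplication ``dominates the single division and all other operations.''
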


\end{document}